\newcolumntype{P}[1]{>{\centering\arraybackslash}p{#1}}
\newcommand{\abs}[1]{ \left\lvert#1\right\rvert}
\newcommand{\norm}[1]{\left\lVert#1\right\rVert}
\newtheorem{theorem}{Theorem}[section]
\newtheorem{proposition}[theorem]{Proposition}
\newtheorem{lemma}[theorem]{Lemma}
\newtheorem{remark}[theorem]{Remark}
\begin{document}

%\maketitle

% Authors are arranged in the alphabetic order of surname
\author{Anindya Goswami}
\address{IISER Pune, India}
\email{anindya@iiserpune.ac.in}
\thanks{* Corresponding author}

\author{Kuldip Singh Patel*}
\address{IIT Patna, India}
\email{kspatel@iitp.ac.in}

\title[Estimation of domain truncation error for PDEs]{Estimation of domain truncation error for a system of PDEs arising in option pricing}
\thanks{Authors acknowledge the support for the financial support from Department of Science $\&$ Technology under the grant no. DST/INT/DAAD/P-12/2020 and from National Board for Higher Mathematics under the grant number 02011-32-2023-R$\&$D-II-13347.}

\begin{abstract}
In this paper, a multidimensional system of parabolic partial differential equations arising in European option pricing under a regime-switching market model is studied in details. For solving that numerically, one must truncate the domain and impose an artificial boundary data. By deriving an estimate of the domain truncation error at all the points in the truncated domain, we extend some results in the literature those deal with option pricing equation under constant regime case only. We differ from the existing approach to obtain the error estimate that is sharper in certain region of the domain. Hence, the minimum of proposed and existing gives a strictly sharper estimate. A comprehensive comparison with the existing literature is carried out by considering some numerical examples. Those examples confirm that the improvement in the error estimates is significant.
\end{abstract}
\maketitle

{\bf Keywords:} regime switching market model, existence and uniqueness of solution, theory of system of PDEs, far field boundary error estimates, near field error estimates
%%%%%%%%%%%%%%%%%%%%%%%%%%% Introduction  %%%%%%%%%%%

\section{Introduction}\label{sec:intro}
\par A multidimensional system of parabolic partial differential equations (PDEs), arising in European option pricing in a regime-switching market model, is considered in this paper. Its one-dimensional version appears in \cite{ GoGG11, DESG, DKR} and in many other related works. The system of PDEs under consideration is given by
\begin{align}
\left(\frac{\partial \phi}{\partial t}+\mathbb{L}\phi\right)(t,s,i)&=0,\label{eq:main_ivp}\\
\phi(T,s,i)=\mathcal{K}(s), \quad \label{eq:main_ivp_1}
\end{align}
for all $t\in (0,T)$, $s \in (0,\infty)^d$, and $i\in \mathcal{X} :=\{1,2,...,k\}$ where
\begin{align}\label{eq:operator}
\mathbb{L}\phi(t,s,i) :=& \left(r(i) \sum_{l=1}^{d}s_l \frac{\partial \phi}{\partial s_l}+\frac{1}{2}\sum_{l=1}^{d}\sum_{l'=1}^{d}a_{ll'}(i)s_ls_{l'}\frac{\partial^2 \phi}{\partial s_l \partial s_{l'}}-r(i)\phi \right)(t,s,i)+\sum_{j=1}^{k}\lambda_{ij}\phi(t,s,j).
\end{align}
Here $a(i)=(a_{ll'}(i))_{d\times d}$ is the diffusion matrix defined as $a_{ll'}(i) := \sum_{j=1}^{d} \sigma_{l,j}(i)\sigma_{l',j}(i)$, where the $d\times d$ matrix $\sigma$, having $(l,l')$ entry $\sigma_{l,l'}: \mathcal{X}\rightarrow \mathbb{R}$, is called the volatility matrix. The matrix $\Lambda:=(\lambda_{ij})_{k\times k}$ is the instantaneous transition rate of a Markov chain on $ \mathcal{X}$, and the terminal data $\mathcal{K}(\cdot)$ is non-negative and Lipschitz continuous.
\par We recall that the existence and uniqueness results of the classical solution of a general class of system of parabolic PDEs  appear in \cite[Theorem 9.3, page 256]{Friedman83},  and \cite[Theorem 9.6, page 260]{Friedman83} respectively. It is evident that the second order operator in  \eqref{eq:main_ivp} is not strictly elliptic, which is necessary for applying the results of \cite{Friedman83}. However, after a suitable transformation (as in \cite{DurH15}), the system of PDEs (\ref{eq:main_ivp})-(\ref{eq:operator}) can be written as a constant coefficient problem. Finally, the existence and uniqueness of the resulting constant coefficient system follow from a general result \cite[Theorem 2.10]{Friedman83}. The analytical solution for the special cases of (\ref{eq:main_ivp})-(\ref{eq:operator}) have been discussed in \cite{Buffington02, Naik93, ZHU20122744}. However, all these literature are limited to two state regime-switching economy, and the analytical solutions are expressed in term of cumbersome integrals. Therefore, the development of numerical techniques is unavoidable for solving (\ref{eq:main_ivp})-(\ref{eq:operator}). For numerical computation of the solution, one must truncate the domain and impose an artificial boundary condition. Thus for estimating as well as reducing the truncation error, the growth analysis of the solution on the unbounded domain becomes essential. We prove the positivity and at most linear growth properties of the solution of (\ref{eq:main_ivp})-(\ref{eq:operator}) by using the growth of the terminal data. These results do not follow from \cite{Friedman83}. Moreover, as an offshoot, we provide a self-contained proof for the existence and uniqueness results for (\ref{eq:main_ivp})-(\ref{eq:operator}) via the semigroup approach. This does not require a transformation to a constant coefficient problem. To the best of our knowledge, the proposed approach is novel.
\par Various numerical methods have been developed in 
\cite{GoGG11, BoyD07, RR2022, Huang11, Khaliq12, Khaliq09} to solve the system (\ref{eq:main_ivp})-(\ref{eq:operator}). However, an analysis of truncation error is absent in these works. A detailed analysis of truncation error for a multidimensional PDE, i.e., when $\mathcal{X}$ is a singleton, appears in \cite{KaN00}. We extend that analysis in the settings of the system of PDEs in this paper. More precisely, using the derived growth estimate, we have estimated the error at the the boundary caused due to the artificial boundary data. Furthermore, an abstract error estimate is obtained at an interior point of the domain using a parameterized function. Subsequently, the point-wise estimate is expressed in terms of the model parameters and the maximum error on the far boundary. 
\par Our analysis helps to improve the error estimate given in \cite[Theorem 4]{KaN00} in three different ways. First of all, the error bound has been obtained for a system of multi-dimensional PDEs instead of a single multidimensional PDE. Secondly, the proposed error bound is sharper in certain region of the domain. Hence, the minimum of both gives a strictly sharper estimate. Finally, the proposed expression of the error estimate is valid for all the points in the domain unlike the expression given in \cite[Theorem 4]{KaN00} which works only on a strictly smaller subdomain. We have also included a comparison with the existing literature by considering some numerical examples with realistic parameter values. Those illustrations confirm that the improvement in the error estimates is significant. 
\par The present paper is structured as follows: The regime switching market dynamics is briefly presented in Sec. \ref{ssec:motivation}. The system of PDEs is studied in Sec. \ref{sec:exis_uniq}, and the existence and uniqueness of the solution is proved. The problem on truncated domain is considered in Sec. \ref{sec:far_estimate}. In this section, far boundary error estimates and near field estimates are derived. Some numerical examples are presented to verify the theoretical claims. Sec. \ref{sec:conclu} includes the concluding remarks with some future research directions. Appendices \ref{appendixA} and \ref{sec:appendixB} are supplemented to provide the proofs of Lemmas, which are used to prove the main results.
\section{Regime switching market dynamics}\label{ssec:motivation}
\par Assume that $r:\mathcal{X}\to [0,\infty)$, $\mu_l: \mathcal{X}\to \mathbb{R}$, and $\sigma^l:\mathcal{X}\to \mathbb{R}^d$ are given positive functions for each $l=1,\ldots,d$. We consider a friction-less market consisting of one locally risk-free asset with price $S_0$ and $d$ risky assets which may be referred to as stocks with prices $S_1$, \ldots, $S_d$. Consider the filtered probability space $(\Omega,\mathcal{F},\{\mathcal{F}_t\}_{t\ge 0}, P)$ and assume that $X:=\{X_t\}_{t\ge 0}$ and $W:=\{W_t\}_{t\ge 0}$ are $\mathcal{X}$ valued Markov chain and a $d$-dimensional standard Brownian motion respectively which are independent to each other and adapted to $\{\mathcal{F}_t\}_{t\ge 0}$. We further assume that the Markov chain $X$ on the statespace $\mathcal{X}$ has instantaneous transition rate $\Lambda:=(\lambda_{ij})_{k\times k}$. Let $r(X_t)$ be the floating interest rate of the ideal bank at time $t$. Therefore, $S_0$ at time $t$, given by $S_0(t)$ solves
\begin{equation*}
dS_0(t) = r(X_t) S_0(t) dt, \quad S_0(0) = 1.
\end{equation*}
Consider the following stochastic differential equation (SDE) 
\begin{align}\label{eq:sde}
	dS_l(t) & = S_l(t) \left[ \mu_l(X_{t} )dt +\sum_{j=1}^{d} \sigma_{l,j}(X_{t}) ~dW_t^j \right],   \\
    S_l(0) &= s_l, \quad s_l > 0,  \nonumber
    \end{align}
where $W^j:=\{W_t^j\}_{t\geq 0}$ is the $j^{\text{th}}$ component of $W$ for each $j=1,\ldots, d$.
Here $\mu_l$ and $\sigma^l=(\sigma_{l,1},\ldots,\sigma_{l,d})$ represent the growth rate and volatility coefficient of $l^{\text{th}}$ asset respectively. We further assume that for each $i\in \mathcal{X}$, $\sigma(i)$ is invertible. Since each coefficient is adapted, linear in space variable, and bounded in time variable, \eqref{eq:sde} has an almost sure unique strong solution. The price of the $l^{\text{th}}$ stock at time $t$ is modelled by $S_l(t)$. We denote $(S_1(t),\dots,S_d(t))$ by $S(t)$ and $\{S(t)\}_{t\ge 0}$ by $S$. Since $\sigma(i)$ is invertible for each $i\in \mathcal{X}$, $W$ is also an adapted $d$-dimensional standard Brownian motion w.r.t. the completion of filtration generated by $(S,X)$. Without loss of generality, we assume this filtration $\{\mathcal{F}_t \}_{t\ge 0}$ to be right continuous. 
We consider $\mathcal{K}(S(T))$ as the payoff at terminal time $T$. Let the locally risk-minimizing price of the above payoff in European style at time $t$ be denoted as  $\phi(t,s,i)$ where at $t$ the stock prices are $s=(s_1,s_2,...,s_d)$, and Markov chain is at state $i$. Then it has been shown in \cite{ GoGG11, DESG, DKR} that $\phi$ solves \eqref{eq:main_ivp}-\eqref{eq:operator} classically.
\section{Existence and Uniqueness}\label{sec:exis_uniq}
\subsection{Existence}\label{ssec:existence_ivp}
\par In order to show existence of classical solution to (\ref{eq:main_ivp})-(\ref{eq:operator}), we consider the following integral equation (IE)
\begin{align}\label{eq:integral_equation}
\phi(t,s,i)=e^{-\lambda_{i}(T-t)}\eta_{i}(t,s)+\int_{0}^{T-t}e^{-\{\lambda_{i}+r(i)\}v}\sum_{j\neq i}\lambda_{ij} \int_{x \in (0,\infty)^d}\phi(t+v,x,j)\alpha(x,s,i,v) dx dv,
\end{align}
where $\lambda_i=|\lambda_{ii}|$, and 
\begin{equation}\label{eq:alpha}
\left\{\begin{aligned}
\alpha(x,s,i,v) & \coloneqq \frac{\exp\left[-\frac{1}{2}\sum_{l=1}^d\sum_{l'=1}^d (\mathfrak{S}^{-1}_{ll'})(z_l-\tilde{z}_l)(z_{l'}-\tilde{z}_{l'})\right]}{\sqrt{(2\pi)^d |\mathfrak{S}|}\:x_1.x_2...x_d},\\
\mathfrak{S}(i)  = va(i), &\quad  z_l=\ln\left(\frac{x_l}{s_l}\right), \quad \tilde{z}_l =\left(r(i)-\frac{1}{2}a_{ll'}(i)\right)v.
\end{aligned}\right.
\end{equation}
Here, $\abs{\mathfrak{S}}$ denotes the determinant of $\mathfrak{S}$. We also recall that $a(i)=\sigma(i) \sigma(i)^{*}$, where transpose of the matrix $\sigma$ is denoted as $\sigma^*$. By setting $z=(z_1,z_2,...,z_d)$, $\tilde{z}=(\tilde{z}_1,\tilde{z}_2,...,\tilde{z}_d)$, and $\ln(s)=(\ln s_1, \ldots, \ln s_d)$, we introduce $Z \sim N(\ln (s) + \tilde{z}, \mathfrak{S}(i))$, a $d$-dimensional normal random variable for every $i\in \mathcal{X}, s \in (0,\infty)^d$, and $v >0$. Clearly, $\alpha$ is the density of the multi-dimensional lognormal random vector $(e^{Z_1}, e^{Z_2},...,e^{Z_d})$. Let $\mathfrak{D}$ and $C(\mathfrak{D})$ denote $(0,T)\times (0, \infty)^d \times \mathcal{X}$ and the space of all real-valued component-wise continuous functions on $\mathfrak{D}$, respectively. By setting $\|s\|_1= \sum_1^d |s_i|$, we define
\begin{equation}\label{eq:V_Space}
V:=\left\{\phi \in C(\mathfrak{D}) \: \Big | \norm{\phi}_V\ :=\sup_{s,t,i} \abs{\frac{\phi(t,s,i)}{1+\norm{s}_1}} < \infty \right\}. 
\end{equation}
\par The existence and uniqueness results for the integral equation (\ref{eq:integral_equation}) are obtained in Theorem \ref{theorem:IE_solution}, whereas the smoothness of the solution is established in Theorems \ref{theorem:delphidelt} and \ref{theorem:delphidels}. Finally, Theorem \ref{theorem:IEalsoPDE} states that the solution to the IE indeed solves the system of PDEs (\ref{eq:main_ivp})-(\ref{eq:operator}). Below some important properties of $\alpha(x,s,i,v)$ given in (\ref{eq:alpha}) are stated which are crucial for subsequent analysis, and the proofs of those are given in Appendix \ref{appendixA}.
\begin{lemma}\label{lemma:used_in_2} For every $s\in (0,\infty)^d, v >0,$ and $i\in \mathcal{X}$, we have 
\[
\int_{(0,\infty)^d}\left(1+\sum_{l=1}^d x_l\right)\alpha(x,s,i,v)dx=1+\sum_{l=1}^ds_l e^{r(i)v}.
\]
\end{lemma}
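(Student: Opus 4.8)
The plan is to use the probabilistic description of $\alpha$ recorded just above the statement: $\alpha(\cdot,s,i,v)$ is the joint density of the random vector $(e^{Z_1},\dots,e^{Z_d})$, where $Z\sim N(\ln(s)+\tilde z,\mathfrak S(i))$. Hence the integral on the left equals $\mathbb{E}\big[1+\sum_{l=1}^d e^{Z_l}\big]$, and by linearity of expectation the identity reduces to the two elementary facts that $\int_{(0,\infty)^d}\alpha(x,s,i,v)\,dx=1$ and $\mathbb{E}[e^{Z_l}]=s_l e^{r(i)v}$ for each $l$.

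For the first fact I would simply observe that $\alpha$ is a bona fide probability density: since $\sigma(i)$ is invertible, $a(i)=\sigma(i)\sigma(i)^*$ is positive definite, so $\mathfrak S(i)=v\,a(i)$ is positive definite with $\abs{\mathfrak S}>0$, and $\alpha$ is the push-forward of the corresponding nondegenerate Gaussian density under the diffeomorphism $z\mapsto(e^{z_1},\dots,e^{z_d})$; its total mass is therefore $1$. For the moments I would carry out the change of variables $z_l=\ln(x_l/s_l)$, $dx_l=s_l e^{z_l}\,dz_l$, which converts $\int_{(0,\infty)^d}\big(1+\sum_{l}x_l\big)\alpha\,dx$ into the Gaussian integral
\[
\int_{\mathbb{R}^d}\Big(1+\sum_{l=1}^d s_l e^{z_l}\Big) f_Z(z)\,dz,
\]
with $f_Z$ the density of the $N(\ln(s)+\tilde z,\mathfrak S(i))$ distribution, and then evaluate each $\int_{\mathbb{R}^d}e^{z_l}f_Z(z)\,dz=\mathbb{E}[e^{Z_l}]$ by completing the square, i.e. using $\mathbb{E}[e^{Y}]=e^{m+\varsigma^2/2}$ for $Y\sim N(m,\varsigma^2)$. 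With $m=\ln s_l+\tilde z_l$, $\varsigma^2=\mathfrak S_{ll}(i)=v\,a_{ll}(i)$, and $\tilde z_l=(r(i)-\tfrac12 a_{ll}(i))v$, the drift correction $-\tfrac12 a_{ll}(i)v$ cancels the variance contribution $+\tfrac12 a_{ll}(i)v$, leaving $\mathbb{E}[e^{Z_l}]=e^{\ln s_l+r(i)v}=s_l e^{r(i)v}$. Summing over $l$ and adding the constant term gives the claimed value $1+\sum_{l=1}^d s_l e^{r(i)v}$.

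There is no genuine obstacle here; the proof is essentially bookkeeping. The one point that needs care is the exponent arithmetic — in particular reading $\tilde z_l$ with the diagonal entry $a_{ll}(i)$ — so that the risk-neutral drift adjustment precisely annihilates the $\tfrac12 v\,a_{ll}(i)$ produced by the lognormal second moment. One should also note that interchanging the finite sum with the integral is legitimate because each $\mathbb{E}[e^{Z_l}]$ is finite, which the computation above confirms.
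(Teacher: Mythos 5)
Your proposal is correct and follows essentially the same route as the paper: the paper's proof likewise splits off the constant term, identifies each $\int x_l\,\alpha\,dx$ as the mean of a lognormal variable, and observes that the exponent $\ln s_l+(r(i)-\tfrac12 a_{ll}(i))v+\tfrac12 a_{ll}(i)v$ collapses to $\ln s_l+r(i)v$. Your additional remarks on positive definiteness of $\mathfrak S(i)$ and the change of variables only make explicit what the paper leaves implicit.
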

\begin{lemma}\label{lemma:delalphadelv}For a fixed $s\in (0,\infty)^d, v >0,$ and $i\in \mathcal{X}$, we have 
\[
\abs{\frac{1}{\alpha}\frac{\partial \alpha}{\partial v}}=\mathcal{O} \left(\ln\norm{x}_1\right)^2,\quad \mbox{as} \quad \norm{x}_1 \rightarrow \infty. 
\]
\end{lemma}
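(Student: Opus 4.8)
The plan is to compute the logarithmic $v$-derivative of $\alpha$ in closed form and then read off its growth in $x$. First I would separate the two ways $v$ enters \eqref{eq:alpha}: writing $\tilde z = v\,\theta(i)$ for a vector $\theta(i)\in\mathbb{R}^d$ depending only on $i$, and using $\mathfrak{S} = \mathfrak{S}(i) = v\,a(i)$, hence $\mathfrak{S}^{-1} = v^{-1}a(i)^{-1}$ and $\abs{\mathfrak{S}} = v^d\abs{a(i)}$, one obtains for fixed $x,s,i$ and $v>0$
\[
\ln\alpha(x,s,i,v) = -\frac{1}{2v}\,(z-v\theta(i))^{\top}a(i)^{-1}(z-v\theta(i)) - \frac{d}{2}\ln v - \frac12\ln\!\big((2\pi)^d\abs{a(i)}\big) - \sum_{l=1}^d \ln x_l,
\]
where $z = (\ln(x_1/s_1),\dots,\ln(x_d/s_d))$ is independent of $v$. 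Since $a(i)=\sigma(i)\sigma(i)^{*}$ with $\sigma(i)$ invertible, $a(i)$ and $a(i)^{-1}$ are symmetric and positive definite, and $\alpha$ is a smooth, strictly positive function of $v>0$, so the differentiation below is justified.

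Next I would expand the quadratic form, $(z-v\theta)^{\top}a^{-1}(z-v\theta) = z^{\top}a^{-1}z - 2v\,z^{\top}a^{-1}\theta + v^2\,\theta^{\top}a^{-1}\theta$ (with $a=a(i)$, $\theta=\theta(i)$), so that $-\frac{1}{2v}(z-v\theta)^{\top}a^{-1}(z-v\theta) = -\frac{1}{2v}z^{\top}a^{-1}z + z^{\top}a^{-1}\theta - \frac{v}{2}\theta^{\top}a^{-1}\theta$. Differentiating term by term in $v$ — the cross term $z^{\top}a^{-1}\theta$ is $v$-free, the term $-\tfrac{v}{2}\theta^{\top}a^{-1}\theta$ is affine, and $-\sum_l\ln x_l$ is $v$-free — gives
\[
\frac{1}{\alpha}\frac{\partial\alpha}{\partial v} = \frac{\partial}{\partial v}\ln\alpha = \frac{1}{2v^2}\,z^{\top}a(i)^{-1}z - \frac12\,\theta(i)^{\top}a(i)^{-1}\theta(i) - \frac{d}{2v}.
\]

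Finally, for fixed $s,v,i$ the last two summands are constants, so the asymptotics as $\norm{x}_1\to\infty$ are controlled by the quadratic form $z^{\top}a(i)^{-1}z$, which satisfies $0\le z^{\top}a(i)^{-1}z \le \norm{a(i)^{-1}}\,\norm{z}^2$ with $\norm{z}^2 = \sum_{l=1}^d(\ln x_l - \ln s_l)^2$; since each $z_l$ is logarithmic in the coordinate $x_l$ (and, on the part of $(0,\infty)^d$ relevant for the domain truncation, each coordinate stays bounded away from $0$, so $\abs{\ln x_l}=\mathcal{O}(\ln\norm{x}_1)$), this is $\mathcal{O}\big((\ln\norm{x}_1)^2\big)$, which is the claim. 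I expect no substantial obstacle; the only point requiring care is the bookkeeping in the differentiation, because $v$ appears in $\alpha$ simultaneously through the prefactor $v^{-d/2}$, through $\mathfrak{S}^{-1}=v^{-1}a(i)^{-1}$, and linearly through $\tilde z$, so the Gaussian exponent must be expanded in powers of $v$ before differentiating — it is exactly the $v$-independence of the cross term that leaves $\tfrac{1}{2v^2}z^{\top}a(i)^{-1}z$ as the leading part of $\tfrac1\alpha\partial_v\alpha$.
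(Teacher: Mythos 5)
Your proposal is correct and follows essentially the same route as the paper: compute $\partial_v\ln\alpha$ in closed form and observe that the only unbounded-in-$x$ contribution is the quadratic form $\frac{1}{2v^2}z^{\top}a(i)^{-1}z$ in $z=\ln(x/s)$, which is $\mathcal{O}\big((\ln\norm{x}_1)^2\big)$. Your device of expanding the Gaussian exponent in powers of $v$ before differentiating merely streamlines the bookkeeping (the paper instead differentiates term by term via Jacobi's formula and the product rule), and after simplification the two expressions for $\frac{1}{\alpha}\frac{\partial\alpha}{\partial v}$ coincide.
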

\begin{lemma}\label{lemma:delalphadels}
For a fixed $s\in (0,\infty)^d, v >0,$ $i\in \mathcal{X}$, and $1 \leq l_0 \leq n$, we have
\[
\frac{\partial \alpha}{\partial s_{l_0}}=\frac{\alpha}{s_{l_0}}\mathcal{O}(\ln{\norm{x}}),\quad \mbox{as} \quad \norm{x}_1 \rightarrow \infty.
\]
\end{lemma}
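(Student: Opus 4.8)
The plan is to differentiate $\alpha$ directly from its closed form \eqref{eq:alpha}, after observing that $\alpha$ depends on $s_{l_0}$ \emph{only} through the scalar $z_{l_0}=\ln(x_{l_0}/s_{l_0})$: the normalization $\sqrt{(2\pi)^d|\mathfrak{S}|}\,x_1x_2\cdots x_d$, the covariance $\mathfrak{S}(i)=va(i)$, and the drift terms $\tilde{z}_l$ are all independent of $s$. Write $Q(z):=\sum_{l=1}^d\sum_{l'=1}^d\mathfrak{S}^{-1}_{ll'}(z_l-\tilde{z}_l)(z_{l'}-\tilde{z}_{l'})$, so that $\ln\alpha$ equals $-\tfrac12 Q$ plus terms free of $s$. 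Using $\partial z_{l_0}/\partial s_{l_0}=-1/s_{l_0}$ together with the symmetry of $\mathfrak{S}^{-1}$, which gives $\partial Q/\partial z_{l_0}=2\sum_{l'}\mathfrak{S}^{-1}_{l_0 l'}(z_{l'}-\tilde{z}_{l'})$, the chain rule yields
\[
\frac{\partial \alpha}{\partial s_{l_0}}=\alpha\,\frac{\partial (\ln\alpha)}{\partial s_{l_0}}=-\frac{\alpha}{2}\,\frac{\partial Q}{\partial z_{l_0}}\,\frac{\partial z_{l_0}}{\partial s_{l_0}}=\frac{\alpha}{s_{l_0}}\sum_{l'=1}^d\mathfrak{S}^{-1}_{l_0 l'}\bigl(z_{l'}-\tilde{z}_{l'}\bigr).
\]

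It then remains to identify the order of the prefactor $\sum_{l'}\mathfrak{S}^{-1}_{l_0 l'}(z_{l'}-\tilde{z}_{l'})$ as $\norm{x}_1\to\infty$. Since $s$, $v$, $i$ are fixed, the entries $\mathfrak{S}^{-1}_{l_0 l'}$ are constants and each $z_{l'}-\tilde{z}_{l'}=\ln x_{l'}-\ln s_{l'}-\tilde{z}_{l'}$ is an affine function of $(\ln x_1,\dots,\ln x_d)$ with $x$-independent coefficients; hence the sum is bounded by a constant multiple of $1+\max_{l'}\abs{\ln x_{l'}}$, i.e.\ it is $\mathcal{O}(\ln\norm{x})$, which is the claim.

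The argument is essentially a one-line computation, so I do not expect a genuine obstacle; the only places to be careful are (i) propagating the chain rule through the quadratic form $Q$ and using the symmetry of $\mathfrak{S}^{-1}$ so that the factor $2$ exactly cancels the $-\tfrac12$, and (ii) recording explicitly that, with $s$, $v$, $i$ held fixed, every $x$-dependent quantity sitting inside the exponent of $\alpha$ is logarithmic in the coordinates, so differentiating in $s_{l_0}$ can produce at most one power of $\ln\norm{x}$ — in contrast to Lemma~\ref{lemma:delalphadelv}, where differentiating in $v$ also hits the $v^{-1}$ scaling of $\mathfrak{S}^{-1}$ and therefore brings in the square of the logarithm.
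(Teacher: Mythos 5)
Your proposal is correct and follows essentially the same route as the paper: differentiate $\ln\alpha$ in $s_{l_0}$, use the symmetry of $\mathfrak{S}^{-1}$ to collapse the two sums, arrive at $\frac{\partial\alpha}{\partial s_{l_0}}=\frac{\alpha}{s_{l_0}}\sum_{l}(\mathfrak{S}^{-1})_{l l_0}(z_l-\tilde{z}_l)$, and observe that the prefactor is affine in $(\ln x_1,\dots,\ln x_d)$, hence $\mathcal{O}(\ln\norm{x})$. The only cosmetic difference is that you note upfront that the $s_{l_0}$-dependence enters only through $z_{l_0}$, whereas the paper differentiates the full quadratic form term by term; the resulting identity and conclusion are identical.
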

\par Following lemma includes a tail behaviour of the density function $\alpha$. Indeed, the present family of density functions with family parameter $v$ coming from a bounded set can be asymptotically dominated by a single density function with a larger value of the parameter $v$. 
\begin{lemma}\label{lemma:tail}
Given $s,s'\in (0,\infty)^d$, and $v', \epsilon >0$ there exists a sufficiently large $\mathcal{R}$ such that $\ln{\alpha(x,s',i,v')}>\ln{\alpha(x,s,i,v)}$ for all $x \notin (0,\mathcal{R})^d$ and for all $ v\le v'-\epsilon$.
\end{lemma}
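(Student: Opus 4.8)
The plan is to work directly from the explicit formula for $\ln\alpha$. Taking logarithms in \eqref{eq:alpha} and using $\mathfrak{S}(i)=va(i)$, $|\mathfrak{S}(i)|=v^{d}|a(i)|$, one gets
\begin{equation*}
\ln\alpha(x,s,i,v)=-\frac{1}{2v}(z-\tilde z)^{*}a(i)^{-1}(z-\tilde z)-\frac{d}{2}\ln(2\pi v)-\frac12\ln|a(i)|-\sum_{l=1}^{d}\ln x_{l},
\end{equation*}
with $z_{l}=\ln(x_{l}/s_{l})$ and $\tilde z$ the drift vector in \eqref{eq:alpha}, which is linear in $v$. The crucial observation is that the term $\sum_{l}\ln x_{l}$ and the constant $\frac12\ln|a(i)|$ depend on neither $v$ nor $s$, so they cancel in the difference $\ln\alpha(x,s',i,v')-\ln\alpha(x,s,i,v)$. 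Hence, writing $z'_{l}=\ln(x_{l}/s'_{l})$ and $\tilde z'$ for the drift at $v'$, the conclusion of the lemma is equivalent to the statement that, for $\mathcal R$ large enough, all $x\notin(0,\mathcal R)^{d}$ and all $0<v\le v'-\epsilon$,
\begin{equation*}
\frac{1}{v}(z-\tilde z)^{*}a(i)^{-1}(z-\tilde z)-\frac{1}{v'}(z'-\tilde z')^{*}a(i)^{-1}(z'-\tilde z')>d\ln\frac{v'}{v}.
\end{equation*}

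Next I would introduce $u:=(\ln x_{1},\dots,\ln x_{d})$, the only quantity that can become large, and write $z-\tilde z=u-b$, $z'-\tilde z'=u-b'$ with $b:=\ln s+\tilde z$ and $b':=\ln s'+\tilde z'$; since $\tilde z$ is linear in $v$ and $v\le v'$, both $\norm{b}$ and $\norm{b'}$ are bounded by a constant depending only on $s,s',v',i$. Expanding the two quadratic forms and collecting, the $u^{*}a(i)^{-1}u$ parts combine with coefficient $\tfrac1v-\tfrac1{v'}$, which is positive and in fact $\ge\epsilon/(v'(v'-\epsilon))$, while the cross and constant remainders are at most $\tfrac{C_{2}}{v}\norm{u}$ and $\tfrac{C_{3}}{v}$ in absolute value (using $\tfrac1{v'}\le\tfrac1v$), with $C_{2},C_{3}$ independent of $v$ and $x$. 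Using that $a(i)^{-1}$ is symmetric positive definite, so $w^{*}a(i)^{-1}w\ge\mu_{i}\norm{w}^{2}$ for its smallest eigenvalue $\mu_{i}>0$ (with $\norm{\cdot}$ Euclidean), together with the elementary bound $\ln(v'/v)\le|\ln v'|+1/v$, the displayed inequality follows once
\begin{equation*}
\Bigl(\tfrac1v-\tfrac1{v'}\Bigr)\mu_{i}\norm{u}^{2}-\tfrac{C_{2}}{v}\norm{u}-\tfrac{C_{3}}{v}\;>\;d|\ln v'|+\tfrac{d}{v}.
\end{equation*}

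Multiplying this through by $v>0$ and using $1-v/v'\ge\epsilon/v'$ and $dv|\ln v'|\le dv'|\ln v'|$, it suffices to have
\begin{equation*}
\frac{\epsilon\mu_{i}}{v'}\norm{u}^{2}-C_{2}\norm{u}-C_{4}\;>\;0,
\end{equation*}
where $C_{4}$ absorbs $C_{3}$, $d$ and $dv'|\ln v'|$ and is still independent of $v$ and $x$. This is a quadratic inequality in $\norm{u}$ with positive leading coefficient, hence holds whenever $\norm{u}\ge R_{0}$ for some $R_{0}=R_{0}(a(i),s,s',v',\epsilon)$. Finally, $x\notin(0,\mathcal R)^{d}$ forces $x_{l_{0}}\ge\mathcal R$ for some coordinate $l_{0}$, so for $\mathcal R\ge1$ one has $\norm{u}\ge\norm{u}_{\infty}\ge\ln\mathcal R$; taking $\mathcal R:=e^{R_{0}}$ then gives $\norm{u}\ge R_{0}$ and finishes the proof.

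The step I expect to be the main obstacle is the \emph{uniformity of the estimate over $v\in(0,v'-\epsilon]$}, especially as $v\to0^{+}$: each side of the reduced inequality separately blows up (the left like $1/v$, the right like $\ln(1/v)$), so any bound obtained at a single fixed $v$ is useless. The resolution is structural rather than computational: after clearing the common factor $v$, the $1/v$ terms disappear and what remains is a quadratic in $\norm{\ln x}$ whose leading coefficient stays bounded below by $\epsilon\mu_{i}/v'$ uniformly in $v$. The only genuinely delicate point is then keeping track of the $v$-dependent constants---above all the facts that $\norm{\tilde z}\le C v'$ and $\tfrac1{v'}\le\tfrac1v$ whenever $v\le v'$---after which the remaining estimates are elementary.
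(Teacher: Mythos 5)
Your proof is correct and follows essentially the same route as the paper's: take logarithms, cancel the $v$- and $s$-independent terms, and observe that the difference is governed by a quadratic form in $\ln x$ whose coefficient $\tfrac12(\tfrac1v-\tfrac1{v'})$ is positive. In fact your treatment of the uniformity over $v\in(0,v'-\epsilon]$ (clearing the factor $v$ so the leading coefficient stays bounded below by $\epsilon\mu_i/v'$ while the remaining coefficients stay bounded) is more explicit than the paper's, which only asserts that the positive quadratic term dominates as $\norm{x}\to\infty$.
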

\par For each $i\in \mathcal{X}$, we consider the following initial value problem (IVP) 
\begin{align}\label{eq:BS_equation}
\left(\frac{\partial \eta_i}{\partial t}+r(i)\sum_{l=1}^{d}s_l \frac{\partial \eta_i}{\partial s_l}+\frac{1}{2}\sum_{l=1}^{d}\sum_{l'=1}^{d}a_{ll'}(i)s_ls_{l'}\frac{\partial^2 \eta_i}{\partial s_l \partial s_{l'}}\right)(t,s)=r(i)\eta_i (t,s),
\end{align}
on $(0,T)\times (0,\infty)^d$ with terminal condition $\eta_i(T,s)=\mathcal{K}(s)$. For fixed $i$, (\ref{eq:BS_equation}) is known as the Black-Scholes-Merton PDE for European option price with payoff $\mathcal{K}(S(T))$. Due to the linear growth property of the payoff $\mathcal{K}(\cdot)$, $\eta_i(t,s)$ also has at most linear growth in $s$ variable for every $i$. Indeed, a log transformation of variables, followed by applications of \cite[Theorem 9.3, page 256]{Friedman83} and Feynman-Kac formula \cite[Theorem 4.4.2, page 268]{KarS98} give
\begin{equation}\label{eq:eta}
\eta_i(t,s)=E\left[e^{-r(i)(T-t)}\mathcal{K}(Y(T))|Y(t)=s\right],
\end{equation}
where $Y$ satisfies the SDE: $dY_l(t) = Y_l(t) \left[ r(i)dt +\sum_{j=1}^{d} \sigma_{l,j}(i) ~dW_t^j \right].$ Hence, $x \mapsto \alpha(x,y,i,t)$ gives the density of $Y(t)$ for each $t$, having $Y(0)=y$. Consequently, at most linear growth property of $\eta_i(t,\cdot)$ follows from the direct application of Lemma \ref{lemma:used_in_2}. Moreover, from \eqref{eq:eta} the non-negativity of $\eta$ is evident as $\mathcal{K}\ge 0$.
\begin{remark}\label{remark1}
Throughout this article, the various constant notations appear in several subsequent proofs. They carry the same meaning inside a single proof but may have different meanings when they appear in a different proof.
\end{remark}
\begin{theorem}\label{theorem:IE_solution}
Let 
\begin{align}\label{eq:operatorA}
\mathcal{A}\phi(t,s,i)&:= e^{-\lambda_{i}(T-t)}\eta_{i}(t,s)+\int_{0}^{T-t}e^{-\{\lambda_{i}+r(i)\}v}\sum_{j\neq i}\lambda_{ij} \int_{(0,\infty)^d}\phi(t+v,x,j)\alpha(x,s,i,v) dx dv,
\end{align}
for every $\phi \in V$. Then for every $\phi \in V$,
\begin{enumerate}[(i)]
    \item $\mathcal{A}\phi\in C(\mathfrak{D})$,
    \item $\mathcal{A}\phi\in V$ and $\mathcal{A}:V\to V$ is a contraction,
    \item IE (\ref{eq:integral_equation}) has a unique solution in the Banach space $(V,\norm{\cdot}_{V})$.
\end{enumerate}
\end{theorem}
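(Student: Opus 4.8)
The plan is to run a Banach fixed point argument, proving (i)--(iii) in the stated order with (iii) immediate. First I would record three facts used repeatedly: (a) since $\Lambda$ is a transition rate matrix, $\sum_{j\neq i}\lambda_{ij}=-\lambda_{ii}=\lambda_i$ for each $i$; (b) by the discussion around \eqref{eq:eta} (which uses Lemma \ref{lemma:used_in_2}), each $\eta_i$ is continuous, non-negative, and of at most linear growth, i.e.\ $0\le\eta_i(t,s)\le C_0(1+\norm{s}_1)$ for a constant $C_0$; and (c) for $\phi\in V$, the pointwise bound $\abs{\phi(t+v,x,j)}\le\norm{\phi}_V(1+\norm{x}_1)$ together with Lemma \ref{lemma:used_in_2} gives absolute convergence of the inner integral and the estimate $\abs{\int_{(0,\infty)^d}\phi(t+v,x,j)\,\alpha(x,s,i,v)\,dx}\le\norm{\phi}_V\bigl(1+\sum_{l}s_l e^{r(i)v}\bigr)$.

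For (i), I would fix $i$ and a point $(t_0,s_0)$, take $(t_n,s_n)\to(t_0,s_0)$ lying eventually in a compact $K\subset(0,T)\times(0,\infty)^d$, rewrite the double integral in \eqref{eq:operatorA} as $\int_0^T h_n(v)\,dv$ with $h_n(v)=\mathbf{1}_{(0,T-t_n)}(v)\,e^{-\{\lambda_i+r(i)\}v}\sum_{j\neq i}\lambda_{ij}\int\phi(t_n+v,x,j)\,\alpha(x,s_n,i,v)\,dx$ (and $h_n(v)=0$ for $v\ge T-t_n$), and apply dominated convergence. For $v\neq T-t_0$ one has $h_n(v)\to h_0(v)$: for $v<T-t_0$ this uses continuity of $\phi$ and of $s\mapsto\alpha(x,s,i,v)$ together with a dominated convergence in $x$ whose dominating function is furnished by Lemma \ref{lemma:tail} (which bounds $\alpha(x,s_n,i,v)$, for $x$ outside a large cube, by a single density with enlarged parameters, times the weight $1+\norm{x}_1$ that is integrable by Lemma \ref{lemma:used_in_2}); and by (c), $\abs{h_n(v)}$ is dominated on $[0,T]$ by a $K$-dependent constant, so $\int_0^T h_n\to\int_0^{T-t_0}h_0$. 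Since $e^{-\lambda_i(T-t)}\eta_i(t,s)$ is continuous by (b), this yields $\mathcal{A}\phi\in C(\mathfrak{D})$.

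For (ii), combining (a), (c) and the identities $\lambda_i\int_0^{T-t}e^{-\{\lambda_i+r(i)\}v}dv=\tfrac{\lambda_i}{\lambda_i+r(i)}(1-e^{-(\lambda_i+r(i))(T-t)})$ and $\lambda_i\int_0^{T-t}e^{-\lambda_i v}dv=1-e^{-\lambda_i(T-t)}$ (the former read as $0$ when $\lambda_i=0$), I get
\[
\abs{\mathcal{A}\phi(t,s,i)}\le C_0(1+\norm{s}_1)+\norm{\phi}_V\Bigl[\tfrac{\lambda_i}{\lambda_i+r(i)}\bigl(1-e^{-(\lambda_i+r(i))(T-t)}\bigr)+\norm{s}_1\bigl(1-e^{-\lambda_i(T-t)}\bigr)\Bigr],
\]
so, both bracketed coefficients lying in $[0,1]$, $\norm{\mathcal{A}\phi}_V\le C_0+\norm{\phi}_V<\infty$ and $\mathcal{A}\phi\in V$. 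Running the same estimates on $\phi-\psi$ (the $\eta_i$-terms cancel), and using $\tfrac{\lambda_i}{\lambda_i+r(i)}(1-e^{-(\lambda_i+r(i))u})\le1-e^{-\lambda_i u}$ (a consequence of $e^{-x}(1+x)\le1$) together with $\tfrac{a+b\norm{s}_1}{1+\norm{s}_1}\le\max\{a,b\}$ for $a,b\ge0$, I would conclude $\norm{\mathcal{A}\phi-\mathcal{A}\psi}_V\le\bigl(\max_{i\in\mathcal{X}}(1-e^{-\lambda_i T})\bigr)\norm{\phi-\psi}_V$, with the constant strictly below $1$ since $\mathcal{X}$ is finite and $T<\infty$; hence $\mathcal{A}$ is a contraction on $V$.

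Finally, (iii) follows at once: $(V,\norm{\cdot}_V)$ is complete (a Cauchy sequence converges in the weighted sup norm, and its limit is component-wise continuous, being a locally uniform limit of continuous functions, and has finite $\norm{\cdot}_V$), so by (i)--(ii) and the Banach fixed point theorem $\mathcal{A}$ has a unique fixed point in $V$, which is precisely the unique solution of \eqref{eq:integral_equation} in $(V,\norm{\cdot}_V)$. The one genuinely delicate step is inside (i): producing a dominating function for the $x$-integral that is uniform for $s$ in a neighbourhood of the base point---exactly the role of Lemma \ref{lemma:tail} combined with the integrable weight of Lemma \ref{lemma:used_in_2}. The apparent blow-up of $\alpha(\cdot,s,i,v)$ as $v\downarrow0$ is harmless, as the $x$-integral stays bounded for $v$ in bounded sets while the $v$-range $(0,T-t)$ is itself bounded.
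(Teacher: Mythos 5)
Your argument is correct, and parts (ii) and (iii) essentially coincide with the paper's proof: the same application of Lemma \ref{lemma:used_in_2} yields the contraction bound (the paper absorbs the constant term via $1\le e^{r(i)v}$ and arrives at the same constant $\sup_D\int_0^{T-t}\lambda_i e^{-\lambda_i v}\,dv=\max_i(1-e^{-\lambda_i T})<1$, while you keep the two terms separate and compare them with the elementary inequality $\tfrac{\lambda}{\lambda+r}(1-e^{-(\lambda+r)u})\le 1-e^{-\lambda u}$ --- both are fine), and then Banach's fixed point theorem. Where you genuinely diverge is part (i). The paper proves continuity by writing the increment of $e^{-(\lambda_i+r(i))v}\alpha$ via the mean value theorem, dominating the resulting $v$- and $s$-derivatives through the growth estimates of Lemmas \ref{lemma:delalphadelv} and \ref{lemma:delalphadels} together with Lemma \ref{lemma:tail}, and closing with a Vitali/general Lebesgue convergence argument; this heavier route is not gratuitous, since exactly the same machinery is reused to prove differentiability in Theorems \ref{theorem:delphidelt} and \ref{theorem:delphidels}. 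You instead prove sequential continuity directly: rewrite the outer integral with an indicator to fix the domain, pass to the limit in the inner integral by dominated convergence using only continuity of $\phi$ and of $s\mapsto\alpha(x,s,i,v)$ plus the tail domination of Lemma \ref{lemma:tail} weighted by the integrable $(1+\norm{x}_1)$ of Lemma \ref{lemma:used_in_2}, and then dominate in $v$ by a constant. This is simpler and avoids the derivative lemmas entirely, at the cost of not pre-building the estimates needed later. One small caveat, which you correctly identify as the delicate step: Lemma \ref{lemma:tail} as stated compares two \emph{fixed} densities, so to dominate $\alpha(x,s_n,i,v)$ uniformly in $n$ you need the radius $\mathcal{R}$ to be uniform for $s_n$ in a neighbourhood of $s_0$; this does hold (and the paper itself uses the lemma in exactly this uniform fashion for the family $s+\epsilon_1 1_l$), but it is strictly a mild extension of the lemma's statement and deserves a sentence of justification.
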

\begin{proof}
$(i)$ To show that $\mathcal{A}\phi$ is in $V$, its continuity in $t$ and $s$ variables will be shown first. As the first term on the right side of (\ref{eq:operatorA}) is continuous, it is enough to show the continuity of the integral term with respect to $t$ and $s$ variables.    

\noindent\textbf{Continuity in $t$ variable:}
We need to show that 
\begin{equation*}
\begin{split}
&\left[\int_{0}^{T-t-\epsilon}e^{-(\lambda_i+r(i))v}\sum_{j \neq i}\lambda_{ij}\int_{(0,\infty)^d}\phi(t+v+\epsilon,x,j)\alpha(x,s,i,v) dxdv\right.\\ 
&\left. -\int_{0}^{T-t}e^{-(\lambda_i+r(i))v}\sum_{j \neq i}\lambda_{ij}\int_{(0,\infty)^d}\phi(t+v,x,j)\alpha(x,s,i,v) dxdv \right] \rightarrow 0\: \text{as} \: \epsilon \rightarrow 0.\nonumber
\end{split}
\end{equation*}
Note that if $\epsilon>0$ we can split the second integral in two parts, where first part will be an integration from $0$ to $\epsilon$, and second part will be an integration from $\epsilon$ to $T-t$. On the other hand, if $\epsilon<0$, we can split the first integral in two parts where the first part will be an integration from $0$ to $T-t$ and second part will be an integration from $T-t$ to $T-t-\epsilon$. Since the analyses for $\epsilon>0$ and $\epsilon<0$ are very similar, we present only for the $\epsilon>0$ case. Using a suitable substitution of variables in  the first integral term, we get
\begin{align*}
&\int_{\epsilon}^{T-t}\sum_{j \neq i}\lambda_{ij}\int_{(0,\infty)^d}\phi(t+v,x,j)\left(e^{-(\lambda_i+r(i))(v-\epsilon)}\alpha(x,s,i,v-\epsilon)-e^{-(\lambda_i+r(i))v}\right.\\
&\left.\alpha(x,s,i,v)\right)dxdv-\int_{0}^{\epsilon}e^{-(\lambda_i+r(i))v}\sum_{j \neq i}\lambda_{ij}\int_{(0,\infty)^d}\phi(t+v,x,j)\alpha(x,s,i,v) dxdv,\\
&=term1+term2 \: (say).\nonumber
\end{align*}
As $\phi$ is in $V$, the direct use of Lemma \ref{lemma:used_in_2} implies that the integral with respect to $x$ variable appearing in $term2$ is a bounded function in $v$. Hence $term2$ is an integral of a bounded function in $v$ over the interval $(0,\epsilon)$. Therefore as $\epsilon \rightarrow 0$, $term2$ $\rightarrow 0$.
\par We denote the factor $\left[e^{-(\lambda_i+r(i))(v-\epsilon)}\alpha(x,s,i,v-\epsilon)-e^{-(\lambda_i+r(i))v}\alpha(x,s,i,v)\right]$ of integrand of $term1$ as $\beta_{\epsilon}(x,s,i,v)$. Since $\alpha$ and $e^{-(\lambda_i+r(i))v}$ are continuously differentiable, we can write using mean value theorem (MVT):
\[
\beta_{\epsilon}(x,s,i,v)=-\epsilon \frac{\partial }{\partial v}\left[e^{-(\lambda_i+r(i))(v-\epsilon_1)}\alpha(x,s,i,v-\epsilon_1)\right],
\]
for some $0< \epsilon_1 < \epsilon$. From Lemma \ref{lemma:delalphadelv}, there exist $C_1(v)$ and $C_2(v)$ such that
\begin{equation}\label{eq:lemma3}
\frac{\partial \alpha}{\partial v}\leq \left(C_1+C_2\ln^2\norm{x}_1\right)\alpha \:\: \forall \:\: x \notin B,
 \end{equation}
where $B=(0,\mathcal{R})^d$, for a sufficiently large $\mathcal{R}$. For each $v>0$, inner integral of $term1$ is equal to
\begin{align}\label{eq:theorem2_two}
\epsilon\int_{(0,\infty)^d}\phi(t+v,x,j)e^{-(\lambda_i+r(i))(v-\epsilon_1)}
\Big((\lambda_i+r(i))\alpha(x,s,i,v-\epsilon_1)-\frac{\partial \alpha}{\partial v}(x,s,i,v-\epsilon_1)\Big)&dx.
\end{align}
We know that if $ 0 < \epsilon \ll v$, $\sup_{0 < \epsilon_1 < \epsilon}\alpha(x,s,i,v-\epsilon_1)$ is bounded on $B$. We write the integral (\ref{eq:theorem2_two}) as sum of two integrals by decomposing the domain $(0,\infty)^d$ as union of $B$ and $B^c$. For the integral over $B$ with a finite domain and uniformly bounded integrand, the convergence is obvious due to the dominated convergence theorem (DCT). Next from (\ref{eq:lemma3}), we note that the integrand over $B^c$ is dominated by $(C_3+C_4\norm{x}_1)(C_5+C_2 \ln^2\norm{x}_1)\alpha(x,s,i,v-\epsilon_1)$ for some $C_2, C_3, C_4$, and $C_5$ chosen independent of $\epsilon_1$. On the other hand, using Lemma \ref{lemma:tail} we get some $\mathcal{R}>0$ such that 
\[
\sup_{0<\epsilon_1 < \epsilon}\alpha(x,s,i,v-\epsilon_1)\leq \alpha(x,s,i,v+2\epsilon) \: \forall \: x \notin B.
\]
Using above inequality it is evident that integrand in (\ref{eq:theorem2_two}) is dominated by $(C_6+C_7\norm{x}_1^2) \alpha( x, s, i, v+2\epsilon)$ on $B^c$. Now we have the following claim:
\begin{equation}\label{eq:exit_claim1}
\int (C_6+C_7\norm{x}_1^2)\alpha(x,s,i,v+2\epsilon)dx \xrightarrow[]{\text{converges to}} \int (C_6+C_7\norm{x}_1^2)\alpha(x,s,i,v)dx < \infty.
\end{equation}
To prove above claim, we note that the integrand of L.H.S. of (\ref{eq:exit_claim1}) is a product of a fixed quadratic function, and a lognormal density. This is uniformly integrable family of functions in $x$ where family parameter, $v+2\epsilon$, vary on a bounded set away from $\{0\}$. This family is also tight as a consequence of tightness of Gaussian measures with parameters from a bounded set (bounded mean and variance). Then from generalized Vitali's theorem ($pp. \:\:98$, \cite{Royden10}) (\ref{eq:exit_claim1}) holds.
Using (\ref{eq:exit_claim1}), and General Lebesgue Convergence Theorem (Theorem $19$, Chapter $4$ in \cite{Royden10}), we assert that as $\epsilon\rightarrow 0$, the integral in (\ref{eq:theorem2_two}) converges to
\begin{align}
\label{philimit}    
\int_{(0,\infty)^d}\phi(t+v,x,j)\left[-\frac{\partial }{\partial v}\left(e^{-(\lambda_i+r(i))v}\alpha(x,s,i,v)\right)\right] dx.
\end{align}
However, the expression in (\ref{eq:theorem2_two}) is product of $\epsilon$ and the integral. Hence, for the outer integral w.r.t. $v$ variable in $term1$, the integrand converges to zero pointwise as $\epsilon$ goes to $0$. For a fixed $s$, this convergence is indeed uniform as the integrand of R.H.S. of (\ref{eq:exit_claim1}) is a bounded function of $v$ over the interval $(0,T-t)$. Thus using the DCT, $term1$ converges to $0$ as $\epsilon \rightarrow 0$.

\noindent \textbf{Continuity in $s_l$ variables:} Let $1_l$ denote the unit vector along $l$th direction and $1_l(l')$ denote the $l'$th component of $1_l$, and $$\gamma_{\epsilon}(x,s,i,v):=\left(\alpha(x,s+\epsilon 1_l,i,v)-\alpha(x,s,i,v)\right).$$ We need to show that for each $1 \leq l \leq d$
\begin{equation}\label{eq:conttinuityins}
\int_{0}^{T-t}e^{-(\lambda_i+r(i))v}\sum_{j \neq i}\lambda_{ij}\int_{(0,\infty)^d}\phi(t+v,x,j)\gamma_{\epsilon}(x,s,i,v)dxdv \rightarrow 0,
\end{equation}
as $\epsilon \rightarrow 0$ from either sides. Since $\alpha$ is continuously differentiable, we write using the MVT,
$\gamma_{\epsilon}(x,s,i,v)=\epsilon \frac{\partial \alpha}{\partial s_l}(x,s+\epsilon_1 1_l,i,v)$, for some $0< \abs{\epsilon_1} < \abs{\epsilon}$. Now, inner integral of (\ref{eq:conttinuityins}) is equal to
\begin{equation}\label{eq:theorem2_twos}
\epsilon\int_{(0,\infty)^d}\phi(t+v,x,j)\frac{\partial \alpha}{\partial s_l}(x,s+\epsilon_1 1_l,i,v)dx.
\end{equation}
We know that $\sup_{0 < \abs{\epsilon_1} < \abs{\epsilon}}\phi(t+v,x,j)\frac{\partial \alpha}{\partial s_l}(x,s+\epsilon_1 1_l,i,v)$ is bounded on $B$ for each $s$, $i$, $v$, $l$, $j$, and $\abs{\epsilon}< s_l$. We write the integral (\ref{eq:theorem2_twos}) as sum of two integrals by decomposing the domain $(0,\infty)^d$ as union of $B$ and $B^c$. For the first integral with a finite domain and bounded integrand, the convergence is obvious due to the DCT. Hence, we consider the integral on $B^c$ only. Again, Lemma \ref{lemma:delalphadels} guarantees the existence of constants $C_1(v)$ and $C_2(v)$ such that 
\[
\frac{\partial \alpha}{\partial s_l}\leq \left(C_1+C_2\ln\norm{x}_1\right)\frac{\alpha}{s_{l}}, \:\: \forall \:\: x \notin B,
 \]
where $B=(0,\mathcal{R})^d$ for some $\mathcal{R}>0$. Consequently, on $B^c$, the integrand of (\ref{eq:theorem2_twos}) is dominated by $\frac{1}{s_l-\abs{\epsilon}}(C_3+C_4\norm{x}_1)(C_1+C_2 \ln{\norm{x}_1})\alpha(x,s+\epsilon_1 1_l,i,v)$ for some $C_1, C_2, C_3$ and $C_4$ chosen independent of $\epsilon_1$. On the other hand using Lemma \ref{lemma:tail} we get
\[
\sup_{\abs{\epsilon_1} < \abs{\epsilon}}\alpha(x,s+\epsilon_11_l,i,v)\leq \alpha(x,s,i,v+\abs{\epsilon}), \: \forall \: x \notin B.
\]
Using this, it is evident that expression in (\ref{eq:theorem2_twos}) is dominated by $$\frac{1}{s_l-\abs{\epsilon}}(C_6+C_5\norm{x}^2_1)\alpha(x,s,i,v+\abs{\epsilon}).$$ Since (\ref{eq:exit_claim1}) holds, we also have
\begin{equation}\label{eq:exit_claim1s}
\int (C_6+C_5\norm{x}^2_1)\alpha(x,s,i,v+\abs{\epsilon})dx \xrightarrow[]{\text{converges to}} \int (C_6+C_5\norm{x}^2_1)\alpha(x,s,i,v)dx < \infty.
\end{equation}
Next, the convergence of outer integral of (\ref{eq:conttinuityins}) can be argued in a similar way as done for the outer integral of $term1$ while proving continuity in $t$ variable. Hence $\mathcal{A}\phi \in C(\mathfrak{D})$.

\noindent $(ii)$ Prior to show that the range of $\mathcal{A}$ is in $(V,\norm{\cdot}_{V})$, we consider
\begin{align*}
&\norm{\mathcal{A}\phi_1-\mathcal{A}\phi_2}_V = \sup_{D} \abs{\frac{\mathcal{A}\phi_1-\mathcal{A}\phi_2}{1+\norm{s}_1}},
\end{align*}
\begin{align*}
= &\sup_{D} \abs{\int_{0}^{T-t}e^{-(\lambda_{i}+r(i))v}\sum_{j \neq i}\lambda_{ij}\int_{(0,\infty)^{n}}\frac{\abs{\phi_1-\phi_2}(x)\alpha(x,s,i,v)}{(1+\norm{s}_1)}dx dv}, \nonumber\\
= &\sup_{D} \abs{\int_{0}^{T-t}e^{-(\lambda_{i}+r(i))v}\sum_{j \neq i}\lambda_{ij}\int_{(0,\infty)^{n}}(1+\norm{x}_1)\frac{\abs{\phi_1-\phi_2}(x)}{(1+\norm{x}_1)}\frac{\alpha(x,s,i,v)}{(1+\norm{s}_1)}dx dv}, \nonumber\\
\leq & \norm{\phi_1-\phi_2}_V \sup_{D}\abs{\int_{0}^{T-t}e^{-(r(i)+\lambda(i))v}\sum_{j\neq i}\lambda_{ij}\int_{(0,\infty)^d}(1+\norm{x}_1)\frac{\alpha(x,s,i,v)}{(1+\norm{s}_1)}dx dv}.
\end{align*}
We simplify the above inequality using the result in Lemma~\ref{lemma:used_in_2}, and obtain
\begin{align}\label{eq:contraction}
\norm{\mathcal{A}\phi_1-\mathcal{A}\phi_2}_V & \leq \norm{\phi_1-\phi_2}_V \sup_{D}\abs{\int_{0}^{T-t}e^{-(r(i)+\lambda(i))v}\sum_{j\neq i}\lambda_{ij}\frac{1+\sum_{l=1}^{d}s_le^{r(i)v}}{1+\sum_{l=1}^ds_l }dv},\nonumber\\
&\leq \norm{\phi_1-\phi_2}_V\sup_{D}\int_{0}^{T-t}e^{-\lambda_i v}\sum_{j\neq i}\lambda_{ij}dv,\nonumber\\
&= \norm{\phi_1-\phi_2}_V\sup_{D}\int_{0}^{T-t}\lambda_i e^{-\lambda_iv}dv= C \norm{\phi_1-\phi_2}_V,
\end{align}
for some $C<1$. To show that $\norm{\mathcal{A}\phi}_V$ is finite for every $\phi \in V$, we take $\phi_2=0$ in above inequality (\ref{eq:contraction}), and get
\begin{equation*}
\begin{split}
\norm{\mathcal{A}\phi_1}_{V} &\leq \norm{\mathcal{A}\phi_1-\mathcal{A}\textbf{0}}_{V}+\norm{\mathcal{A}\textbf{0}}_{V} \leq C\norm{\phi_1}_{V}+\norm{\mathcal{A}\textbf{0}}_{V}.
\end{split}
\end{equation*}
Using (\ref{eq:operatorA}), $\mathcal{A}\textbf{0}=e^{-\lambda_{i}(T-t)}\eta_{i}(t,s)$. Hence, from the argument below Eq. \eqref{eq:eta} $\norm{\mathcal{A}\textbf{0}}_{V} < \infty$ which implies $\norm{\mathcal{A}\phi_1}_{V}$ is also finite. Again since $\mathcal{A}\phi \in C(\mathfrak{D})$, we have $\mathcal{A}:V\to V$. Thus (\ref{eq:contraction}) implies that $\mathcal{A}$ is a contraction on $V$. 

\noindent $(iii)$ A direct application of Banach fixed point theorem \cite[Theorem A1, page 528]{Limaye96} gives that $\mathcal{A}$ has a unique fixed point in $V$. Hence, 
 (\ref{eq:integral_equation}) has a unique solution.
\end{proof}

\par For showing that the unique solution $\phi\in V$ of (\ref{eq:integral_equation}) solves (\ref{eq:main_ivp})-(\ref{eq:operator}) classically, we prove in Theorems \ref{theorem:delphidelt} and \ref{theorem:delphidels} that $\phi$ has required smoothness. The following lemma is required in the proof of Theorem \ref{theorem:delphidelt} and its proof is given in Appendix \ref{appendixA}.

\begin{lemma}\label{Lemma:theorem2firstpart}
Let $\phi$ be the solution of integral equation (\ref{eq:integral_equation}). Then for each $t, s, j, j'$, we have
\begin{equation}\label{eq:lemma5_1}
\lim_{u\downarrow 0}\int_{(0,\infty)^d} \phi(t+u,x,j)\alpha(x,s,j',u)\,dx=\phi(t,s,j).
\end{equation}
\end{lemma}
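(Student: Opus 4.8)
The plan is to exploit the fact that $x\mapsto\alpha(x,s,j',u)$ is the density of a lognormal random vector $(e^{Z_1},\dots,e^{Z_d})$ with $Z\sim N(\ln(s)+\tilde z,\mathfrak S(j'))$, where both $\tilde z$ and $\mathfrak S(j')=ua(j')$ tend to $0$ as $u\downarrow 0$; hence the associated probability measures converge weakly to the Dirac mass at $s$. The target quantity is $E[\phi(t+u,Y^u,j)]$ where $Y^u=(e^{Z_1},\dots,e^{Z_d})$, and we want to show this converges to $\phi(t,s,j)$. Since $\phi\in V$ has at most linear growth in $\|x\|_1$ and $\phi$ is continuous on $\mathfrak D$ (so in particular $(u,x)\mapsto\phi(t+u,x,j)$ is jointly continuous near $(0,s)$), weak convergence together with uniform integrability of $\{\,\|Y^u\|_1\,\}$ will give the result. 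I would therefore structure the argument as: (1) establish weak convergence $Y^u\Rightarrow s$; (2) establish uniform integrability of the family $\{1+\|Y^u\|_1\}_{u\in(0,\delta]}$; (3) combine via a generalized convergence theorem to pass the limit inside.

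For step (1), the cleanest route is to observe that $Z^u:=Z$ under parameter $u$ satisfies $Z^u\to\ln(s)$ in $L^2$ (mean $\ln(s)+\tilde z\to\ln(s)$ since each $\tilde z_l=(r(j')-\tfrac12 a_{ll'}(j'))u\to0$, and covariance $\mathfrak S(j')=ua(j')\to0$), hence $Z^u\to\ln(s)$ in probability, hence $Y^u=\exp(Z^u)\to s$ in probability by the continuous mapping theorem; convergence in probability to a constant implies weak convergence to that constant. For step (2), Lemma \ref{lemma:used_in_2} gives $E[1+\|Y^u\|_1]=\int_{(0,\infty)^d}(1+\sum_l x_l)\alpha(x,s,j',u)\,dx = 1+\sum_l s_l e^{r(j')u}$, which is bounded (indeed monotone) for $u$ in a bounded interval; but boundedness of first moments is not quite enough for uniform integrability, so I would instead bound a higher moment: since the $Z^u_l$ are Gaussian with bounded mean and variance for $u\in(0,\delta]$, $E[\,e^{2Z^u_l}\,]=\exp(2(\ln s_l+\tilde z_l)+2\mathfrak S_{ll}(j'))$ stays bounded, so $\sup_{u\in(0,\delta]}E[\|Y^u\|_1^2]<\infty$, which does yield uniform integrability of $\{\|Y^u\|_1\}$ (and of $\{1+\|Y^u\|_1\}$).

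With (1) and (2) in hand, step (3) is routine: write $\phi(t+u,x,j)=(1+\|x\|_1)\,g_u(x)$ with $g_u(x):=\phi(t+u,x,j)/(1+\|x\|_1)$; the functions $g_u$ are uniformly bounded by $\|\phi\|_V$ and, by joint continuity of $\phi$, $g_u(Y^u)\to g_0(s)=\phi(t,s,j)/(1+\|s\|_1)$ in probability as $u\downarrow0$. Combining bounded convergence for the $g_u$ part with uniform integrability of the $(1+\|x\|_1)$ factor — formally, an application of the generalized Lebesgue / Vitali convergence theorem exactly as invoked elsewhere in the paper (around \eqref{eq:exit_claim1}) — gives $E[\phi(t+u,Y^u,j)]\to(1+\|s\|_1)\cdot\phi(t,s,j)/(1+\|s\|_1)=\phi(t,s,j)$, which is \eqref{eq:lemma5_1}. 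I expect the main technical obstacle to be the justification in step (3) that one may multiply a weakly-convergent sequence by a non-constant continuous weight and still pass to the limit; this is precisely where uniform integrability of $\{(1+\|Y^u\|_1)\}$ is essential, and it must be coupled with the fact that $\phi$ is continuous (not merely bounded) so that the ratio $g_u$ converges along $Y^u\to s$. A minor point to handle with care is that $\phi$ is a priori only defined and continuous on the open set $\mathfrak D=(0,T)\times(0,\infty)^d\times\mathcal X$, so one should keep $u$ bounded away from $0$ when taking suprema and only take $u\downarrow0$ at the end.
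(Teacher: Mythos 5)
Your proposal is correct, but it packages the argument differently from the paper. The paper's proof is a hands-on, measure-theoretic implementation of the same two pillars you identify: it establishes the uniform integrability $\lim_{R\to\infty}\sup_l\int_{(0,\infty)^d\setminus(0,R)^d}(1+\norm{x}_1)\alpha(x,s,j',u_l)\,dx=0$ directly from the boundedness of the lognormal parameters, and then replaces your weak-convergence/continuous-mapping step by an explicit dyadic simple-function approximation $\phi_n$ of $\phi$: it shows $\int(\phi-\phi_n)\alpha_l$ is small uniformly in $l$ (compact part by uniform convergence, tail by UI), and then proves $\int\phi_n\,\alpha_l\,dx\to\phi(t,s,j)$ by showing the measures $\alpha(\cdot,s,j',u_l)\,dx$ concentrate on the level set of $\phi(t,\cdot,j)$ containing $s$. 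Your route — realize the densities as laws of $Y^u=\exp(Z^u)$ with $Z^u\to\ln(s)$ in $L^2$, get $Y^u\Rightarrow\delta_s$, control $\sup_u E[\norm{Y^u}_1^2]$ via the Gaussian moment generating function, factor $\phi(t+u,x,j)=(1+\norm{x}_1)g_u(x)$ with $g_u$ uniformly bounded by $\norm{\phi}_V$, and close with generalized Vitali — is cleaner and shorter, and is consistent with tools the paper already invokes elsewhere (e.g.\ around \eqref{eq:exit_claim1}). The one step you rightly flag as delicate, passing to the limit when the test function itself depends on $u$, is handled soundly either by coupling all $Y^u$ on one probability space (write $Y^u=\exp(\ln s+\tilde z(u)+\sqrt{u}\,\sigma(j')G)$ for a fixed standard Gaussian $G$, so convergence is almost sure and Vitali applies verbatim) or by splitting the integral into a small ball around $s$, where $\abs{g_u(x)-g_0(s)}<\epsilon$ by continuity of $\phi$ at the interior point $(t,s,j)$, and its complement, whose weighted $\alpha_u$-mass vanishes by concentration plus UI; your second-moment bound makes the UI quantitative, which is slightly stronger than what the paper states. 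What the paper's approach buys is self-containedness (no appeal to weak convergence of measures); what yours buys is brevity and a transparent identification of exactly which properties of $\alpha$ are being used.
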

\begin{theorem}\label{theorem:delphidelt} Let $\phi \in V$ and solves (\ref{eq:integral_equation}), then it is differentiable in $t$ variable. Furthermore, for $g_1(x,s,i,v) := \left(\frac{\partial \alpha}{\partial v}/\alpha\right)(x,s,i,v)$
\begin{align}\label{eq:delphi_delt}
\frac{\partial \phi}{\partial t}(t,s,i) &= -r(i)e^{-\lambda_{i}(T-t)}\eta_{i}(t,s)+e^{-\lambda_i(T-t)}\frac{\partial \eta_i}{\partial t}-\sum_{j \neq i}\lambda_{ij}\phi(t,s,j)+(\lambda_i+r(i))\times \phi(t,s,i) \nonumber\\
-&\int_{0}^{T-t}\sum_{j\neq i}\lambda_{ij}\int_{(0,\infty)^d}\phi(t+v,x,j)e^{-(\lambda_i+r(i))v} 
\times g_1(x,s,i,v)\alpha(x,s,i,v) dx dv.
\end{align}
\end{theorem}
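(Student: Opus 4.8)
The plan is to differentiate the two summands of \eqref{eq:integral_equation} separately. The first, $e^{-\lambda_i(T-t)}\eta_i(t,s)$, is $C^1$ in $t$ because $\eta_i$ solves the Black--Scholes equation \eqref{eq:BS_equation} classically (see the discussion below \eqref{eq:eta}), and it contributes $\lambda_i e^{-\lambda_i(T-t)}\eta_i+e^{-\lambda_i(T-t)}\partial_t\eta_i$. For the second summand set $F(t,s,i):=\int_{0}^{T-t}e^{-(\lambda_i+r(i))v}\sum_{j\neq i}\lambda_{ij}\int_{(0,\infty)^d}\phi(t+v,x,j)\alpha(x,s,i,v)\,dx\,dv$, so that \eqref{eq:integral_equation} reads $\phi=e^{-\lambda_i(T-t)}\eta_i+F$. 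The first step is the substitution $u=t+v$, which recasts $F$ as $\int_{t}^{T}\Psi(u,t)\,du$, where $\Psi(u,\tau):=e^{-(\lambda_i+r(i))(u-\tau)}\sum_{j\neq i}\lambda_{ij}\int_{(0,\infty)^d}\phi(u,x,j)\alpha(x,s,i,u-\tau)\,dx$; the point is that the inner function $\phi(u,\cdot,\cdot)$ is now free of $t$, and the $t$-dependence of $F$ sits only in the lower limit of integration, in the exponential prefactor, and in the time argument of $\alpha$.

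I would then analyse the forward difference quotient, decomposing it as $\delta^{-1}(F(t+\delta,s,i)-F(t,s,i))=\int_{t+\delta}^{T}\delta^{-1}(\Psi(u,t+\delta)-\Psi(u,t))\,du-\delta^{-1}\int_{t}^{t+\delta}\Psi(u,t)\,du$ (the $\delta<0$ case is symmetric, as in Theorem~\ref{theorem:IE_solution}). In the last term, $\Psi(u,t)\to\sum_{j\neq i}\lambda_{ij}\phi(t,s,j)$ as $u\downarrow t$, by Lemma~\ref{Lemma:theorem2firstpart} together with $e^{-(\lambda_i+r(i))(u-t)}\to1$, so this term tends to $-\sum_{j\neq i}\lambda_{ij}\phi(t,s,j)$. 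In the first term, $\delta^{-1}(\Psi(u,t+\delta)-\Psi(u,t))$ converges to the partial derivative $\partial_\tau\Psi(u,\tau)$ at $\tau=t$; differentiating the prefactor yields $(\lambda_i+r(i))\Psi(u,t)$, which integrates in $u$ to $(\lambda_i+r(i))F(t,s,i)$, while $\partial_\tau\alpha(x,s,i,u-\tau)=-\partial_v\alpha(x,s,i,u-\tau)$ yields, after undoing the substitution and writing $\partial_v\alpha=g_1\alpha$, precisely the last term of \eqref{eq:delphi_delt}. Substituting $F=\phi-e^{-\lambda_i(T-t)}\eta_i$ and merging the $\eta_i$-terms ($\lambda_i e^{-\lambda_i(T-t)}\eta_i$ from the first summand against $-(\lambda_i+r(i))e^{-\lambda_i(T-t)}\eta_i$ coming from $(\lambda_i+r(i))F$) into $-r(i)e^{-\lambda_i(T-t)}\eta_i$ then reproduces \eqref{eq:delphi_delt} exactly.

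Two limit interchanges require justification, and the second is the genuine obstacle. For the inner $dx$-integrals, the bound $\abs{\phi(u,x,j)}\le\norm{\phi}_V(1+\norm{x}_1)$, together with Lemma~\ref{lemma:delalphadelv} (giving $\abs{g_1}\le C_1+C_2\ln^2\norm{x}_1$ outside a box $B=(0,\mathcal{R})^d$) and Lemma~\ref{lemma:tail}, permits differentiation under the integral and the use of the generalized Vitali and General Lebesgue Convergence theorems from \cite{Royden10}, exactly as in the proof of Theorem~\ref{theorem:IE_solution}. The delicate point is the \emph{outer} $v$-integral of the $g_1\alpha$ term as $v\downarrow 0$: on $B$ one has $\int_B\abs{g_1}\alpha\,dx=\mathcal{O}(v^{-1})$, so the term is not absolutely convergent for a merely linearly growing $\phi$, and a naive $\delta$-uniform dominating function for the passage inside $\int_{t+\delta}^{T}du$ is unavailable. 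The remedy is the cancellation $\int_{(0,\infty)^d}\partial_v\alpha(x,s,i,v)\,dx=0$: one writes $\int\phi(t+v,x,j)g_1\alpha\,dx=\int(\phi(t+v,x,j)-\phi(t+v,s,j))g_1\alpha\,dx$ and uses that $x\mapsto\phi(t+v,x,j)$ is Lipschitz near $s$, uniformly in $t+v$. This regularity I would prove beforehand by a contraction argument on the subspace of $V$ of functions Lipschitz in $s$ --- parallel to Theorem~\ref{theorem:IE_solution}, using that $\eta_i$ is Lipschitz in $s$ (inherited from $\mathcal{K}$ via \eqref{eq:eta}) and that $s\mapsto\int\phi(t+v,x,j)\alpha(x,s,i,v)\,dx$ is a genuine lognormal convolution, so $\mathcal{A}$ maps this subspace into itself and contracts there. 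Since the $L^2(\alpha(\cdot,s,i,v)\,dx)$-norm of $x\mapsto\phi(t+v,x,j)-\phi(t+v,s,j)$ is $\mathcal{O}(v^{1/2})$ while $\norm{g_1}_{L^2(\alpha\,dx)}=\mathcal{O}(v^{-1})$, Cauchy--Schwarz makes the outer integrand $\mathcal{O}(v^{-1/2})$, hence $v$-integrable; the same Lipschitz bound together with $\sup_{\abs{\epsilon_1}<\delta}\alpha(x,s,i,u-t-\epsilon_1)\le\alpha(x,s,i,u-t+\delta)$ (Lemma~\ref{lemma:tail}) supplies the $\delta$-uniform dominating function, and dominated convergence finishes the proof.
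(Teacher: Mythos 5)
Your proof follows the same skeleton as the paper's: after the substitution $u=t+v$ the difference quotient splits into a boundary term, handled via Lemma \ref{Lemma:theorem2firstpart} and contributing $-\sum_{j\neq i}\lambda_{ij}\phi(t,s,j)$, and an interior term in which the exponential prefactor and $\alpha$ are differentiated under the integral; your bookkeeping (substituting $F=\phi-e^{-\lambda_i(T-t)}\eta_i$ and merging the $\eta_i$-terms into $-r(i)e^{-\lambda_i(T-t)}\eta_i$) reproduces \eqref{eq:delphi_delt} exactly as the paper does. Where you genuinely depart is in the treatment of the outer $v$-integral near $v=0$. The paper dismisses this with the phrase ``uniform boundedness of the integrand of the outer integral,'' but the constants $C_1(v),C_2(v)$ of Lemma \ref{lemma:delalphadelv} degenerate (like $v^{-1}$ and $v^{-2}$) as $v\downarrow 0$, and $\int\abs{g_1}\alpha\,dx$ is of exact order $v^{-1}$; so for a merely linearly growing $\phi$ the last integral in \eqref{eq:delphi_delt} is not absolutely convergent and the naive domination fails --- your diagnosis of the gap is correct. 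Your repair is sound: the cancellation $\int\partial_v\alpha\,dx=0$ together with a uniform Lipschitz bound for $s\mapsto\phi(t,s,j)$, which is indeed obtainable by rerunning the fixed-point argument of Theorem \ref{theorem:IE_solution} on the closed invariant subset $\{\phi\in V:\ \phi(t,\cdot,i)\ \text{is Lipschitz with constant}\le \mathrm{Lip}(\mathcal{K})\}$ (invariance holds because $\eta_i(t,\cdot)$ inherits $\mathrm{Lip}(\mathcal{K})$ from \eqref{eq:eta} and the lognormal convolution scales Lipschitz constants by at most $e^{r(i)v}$, which the prefactor $e^{-r(i)v}$ absorbs), and the Cauchy--Schwarz estimate $\mathcal{O}(v^{1/2})\cdot\mathcal{O}(v^{-1})=\mathcal{O}(v^{-1/2})$ restores integrability. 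The net effect is a strictly more careful version of the paper's argument. The one loose end is your final step: the ``$\delta$-uniform dominating function'' you obtain for $\int_{t+\delta}^{T}(\cdot)\,du$ is really a $\delta$-dependent family of order $(u-t-\delta)^{-1/2}$, so the passage to the limit should be closed with Vitali's theorem/uniform integrability (which the paper already invokes elsewhere), or by splitting off a boundary layer of width $\mathcal{O}(\delta)$ whose contribution is $\mathcal{O}(\sqrt{\delta})$, rather than by literal dominated convergence.
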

\begin{proof} The partial derivative $\frac{\partial \phi}{\partial t}(t,s,i)$, if exists, can be written as follows:
\begin{align*}
\frac{\partial \phi}{\partial t}(t,s,i)&=\frac{\partial}{\partial t}\left(e^{-\lambda_i(T-t)}\eta_i(t,s)\right)+\lim_{\epsilon \rightarrow 0}\frac{1}{\epsilon}\left[\int_{0}^{T-t-\epsilon}e^{-(\lambda_i+r(i))v}\sum_{j \neq i}\lambda_{ij}\times\int_{(0,\infty)^d}\phi(t+v+\epsilon,x,j) \right.~~\\
&\alpha(x,s,i,v) dxdv-\int_{0}^{T-t}e^{-(\lambda_i+r(i))v}\sum_{j \neq i}\lambda_{ij} \left.\int_{(0,\infty)^d}\phi(t+v,x,j) \alpha(x,s,i,v) dxdv \right].
\end{align*}
The partial derivative, i.e. the first term on R.H.S., exists as it is the derivative of product of two smooth functions. Next, we consider the second (limit) term. Using a suitable substitution, the limit term is equals to
\begin{align*}
&\frac{1}{\epsilon}\Bigg[\int_{\epsilon}^{T-t}\sum_{j \neq i}\lambda_{ij}\int_{(0,\infty)^d}\phi(t+v,x,j)\Bigg(e^{-(\lambda_i+r(i))(v-\epsilon)}\alpha(x,s,i,v-\epsilon)-e^{-(\lambda_i+r(i))v}\\
&\alpha(x,s,i,v)\Bigg)dxdv \Bigg] -\frac{1}{\epsilon}\int_{0}^{\epsilon}e^{-(\lambda_i+r(i))v}\sum_{j \neq i}\lambda_{ij}\int_{(0,\infty)^d}\phi(t+v,x,j)\alpha(x,s,i,v) dxdv,\\
&=term1+term2 \: (say).\nonumber
\end{align*}
From Lemma \ref{Lemma:theorem2firstpart}, as $\epsilon \rightarrow 0$, $term2$ $\rightarrow -\sum_{j \neq i}\lambda_{ij}\phi(t,s,j)$. As explained in the proof of Theorem \ref{theorem:IE_solution}, the inner integral of $term1$ converges to (\ref{philimit}) with an $\epsilon$ multiplied. After cancelling the $\epsilon$ with $\frac{1}{\epsilon}$, and using the uniform boudedness of the integrand of the outer integral, we get
\[
term1 \to \int_{0}^{T-t}\sum_{j \neq i}\lambda_{ij}\int_{(0,\infty)^d}\phi(t+v,x,j)\left[-\frac{\partial }{\partial v}\left(e^{-(\lambda_i+r(i))v}\alpha(x,s,i,v)\right)\right] dxdv.
\]

\noindent Therefore, we can write
\begin{align}\label{eq:limitinside}
\frac{\partial \phi}{\partial t}&(t,s,i)
-\frac{\partial}{\partial t}\left(e^{-\lambda_i(T-t)}\eta_i(t,s)\right) + \sum_{j \neq i}\lambda_{ij}\phi(t,s,j)\nonumber\\
=&\int_{0}^{T-t}\sum_{j \neq i}\lambda_{ij}\int_{(0,\infty)^d}\phi(t+v,x,j)\left[-\frac{\partial }{\partial v}\left(e^{-(\lambda_i+r(i))v}\alpha(x,s,i,v)\right)\right] dxdv,\nonumber\\
=&\int_{0}^{T-t}\sum_{j \neq i}\lambda_{ij}\int_{(0,\infty)^d}\phi(t+v,x,j)\left[(\lambda_i+r(i))e^{-(\lambda_i+r(i))v}\alpha(x,s,i,v)-e^{-(\lambda_i+r(i))v}\frac{\partial \alpha}{\partial v}\right] dxdv,\nonumber\\
=&-\int_{0}^{T-t}\sum_{j \neq i}\lambda_{ij}\int_{(0,\infty)^d}\phi(t+v,x,j)e^{-(\lambda_i+r(i))v}\frac{\partial \alpha}{\partial v} dxdv\nonumber\\
&+(\lambda_i+r(i))\int_{0}^{T-t}e^{-(\lambda_i+r(i))v}\sum_{j \neq i}\lambda_{ij}\int_{(0,\infty)^d}\phi(t+v,x,j)\alpha(x,s,i,v)dxdv.
\end{align}
\noindent By rewriting the last term of (\ref{eq:limitinside}) using (\ref{eq:integral_equation}), the above becomes
\begin{equation}
\begin{split}
&\frac{\partial \phi}{\partial t}(t,s,i)=\frac{\partial}{\partial t}\left(e^{-\lambda_i(T-t)}\eta_i(t,s)\right)-\int_{0}^{T-t}\sum_{j \neq i}\lambda_{ij}\int_{(0,\infty)^d}\phi(t+v,x,j)e^{-(\lambda_i+r(i))v} \\
& \frac{\partial \alpha}{\partial v}(x,s,i,v)dxdv -\sum_{j \neq i}\lambda_{ij}\phi(t,s,j)+(\lambda_i+r(i))\left(\phi(t,x,i)-e^{-\lambda_i(T-t)}\eta_i\right).
\nonumber
\end{split}
\end{equation}
The simplification of right side of above expression gives (\ref{eq:delphi_delt}).
\end{proof}
\begin{theorem}\label{theorem:delphidels} If  $\phi\in V$ solves (\ref{eq:integral_equation}), then for each $l\le d$, $\phi$ is twice differentiable in $s_l$ variable. Furthermore, 
\begin{align}
\frac{\partial \phi}{\partial s_l}(t,s,i)=&e^{-\lambda_{i}(T-t)}\frac{\partial \eta_{i}}{\partial s_l}+\int_{0}^{T-t}e^{-(\lambda_i+r(i))v}\sum_{j\neq i}\lambda_{ij}\int_{(0,\infty)^d}\phi(t+v,x,j) g_2^l(x,s,i,v)\nonumber\\&\alpha(x,s,i,v) dx dv,\label{eq:delphi_dels}\\
\frac{\partial^2 \phi}{\partial s_l \partial s_{l'}}(t,s,i)=&e^{-\lambda_{i}(T-t)}\frac{\partial^2 \eta_{i}}{\partial s_l \partial s_{l'}}+\int_{0}^{T-t}e^{-(\lambda_i+r(i))v}\sum_{j\neq i}\lambda_{ij}\nonumber\\&\int_{(0,\infty)^d}\phi(t+v,x,j)\left(\frac{\partial }{\partial s_{l'}}g_2^l+g_2^lg_2^{l'}\right)(x,s,i,v)\alpha(x,s,i,v) dx dv,\label{eq:delphi_dels2}
\end{align}
where $g_2^l(x,s,i,v):=\left(\frac{\partial \alpha}{\partial s_l}/\alpha\right)(x,s,i,v)$.
\end{theorem}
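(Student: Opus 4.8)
The plan is to reuse, twice over, the differentiation-under-the-integral scheme already employed for $\partial_t$ in Theorem \ref{theorem:delphidelt} and for continuity in the $s_l$ variables in Theorem \ref{theorem:IE_solution}. Writing $\phi=\mathcal{A}\phi$ via \eqref{eq:integral_equation}, the leading term $e^{-\lambda_i(T-t)}\eta_i(t,s)$ is smooth in $s$ (it is the classical Black--Scholes price, cf.\ the discussion around \eqref{eq:eta}) and supplies the $\eta_i$-terms of \eqref{eq:delphi_dels}--\eqref{eq:delphi_dels2}; hence it is enough to differentiate the double integral $J(t,s,i):=\int_0^{T-t}e^{-(\lambda_i+r(i))v}\sum_{j\neq i}\lambda_{ij}\int_{(0,\infty)^d}\phi(t+v,x,j)\,\alpha(x,s,i,v)\,dx\,dv$ once, and then its $s_l$-derivative once more, in the space variables.

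For the first derivative, form the difference quotient $\tfrac{1}{\epsilon}\bigl(J(t,s+\epsilon 1_l,i)-J(t,s,i)\bigr)$ and, as in \eqref{eq:theorem2_twos}, invoke the mean value theorem to replace $\alpha(x,s+\epsilon 1_l,i,v)-\alpha(x,s,i,v)$ by $\epsilon\,\partial_{s_l}\alpha(x,s+\epsilon_1 1_l,i,v)$ with $0<\abs{\epsilon_1}<\abs{\epsilon}$. To pass $\epsilon\to 0$ through the inner $x$-integral, split $(0,\infty)^d=B\cup B^{c}$ with $B=(0,\mathcal{R})^d$: on the bounded set $B$ the integrand is uniformly bounded for $\abs{\epsilon}<s_l$ so the DCT applies, while on $B^{c}$ Lemma \ref{lemma:delalphadels} gives $\abs{\partial_{s_l}\alpha}\le (C_1+C_2\ln\norm{x}_1)\,\alpha/(s_l-\abs{\epsilon})$ and Lemma \ref{lemma:tail} bounds $\alpha(x,s+\epsilon_1 1_l,i,v)$ by $\alpha(x,s,i,v+\abs{\epsilon})$, so the integrand is dominated by $\tfrac{1}{s_l-\abs{\epsilon}}(C_3+C_4\norm{x}_1^{2})\,\alpha(x,s,i,v+\abs{\epsilon})$; the generalized Vitali / General Lebesgue Convergence argument behind \eqref{eq:exit_claim1s} then yields convergence of the inner integral to $\int_{(0,\infty)^d}\phi(t+v,x,j)\,g_2^{l}(x,s,i,v)\,\alpha(x,s,i,v)\,dx$. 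The outer $v$-integral is handled exactly as in Theorem \ref{theorem:IE_solution}: by Lemma \ref{lemma:used_in_2} the inner integral is bounded uniformly in $v\in(0,T-t)$, so a further application of the DCT gives \eqref{eq:delphi_dels}.

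For the second derivative, repeat this argument verbatim with $\alpha$ replaced by $\partial_{s_l}\alpha=g_2^{l}\alpha$: differentiating $s_{l'}\mapsto\int\phi(t+v,x,j)\,\partial_{s_l}\alpha\,dx$ in $s_{l'}$ produces $\int\phi(t+v,x,j)\,\partial^{2}_{s_l s_{l'}}\alpha\,dx$, and the one-line identity $\partial^{2}_{s_l s_{l'}}\alpha=(\partial_{s_{l'}}g_2^{l}+g_2^{l}g_2^{l'})\alpha$ turns this into the integrand of \eqref{eq:delphi_dels2}. The only new ingredient is the second-order analogue of Lemma \ref{lemma:delalphadels}: from \eqref{eq:alpha}, $\ln\alpha$ is a quadratic form in the variables $z_m=\ln(x_m/s_m)$, so $g_2^{l}$ equals $\tfrac{1}{s_l}$ times an affine function of $\ln x$, whence $\partial_{s_{l'}}g_2^{l}=\mathcal{O}(\ln\norm{x}_1)$ as $\norm{x}_1\to\infty$ and therefore $\abs{\partial^{2}_{s_l s_{l'}}\alpha}\le(C_1+C_2\ln^{2}\norm{x}_1)\,\alpha/\bigl((s_l-\abs{\epsilon})(s_{l'}-\abs{\epsilon})\bigr)$ on $B^{c}$, of the same $\ln^{2}\norm{x}_1$ growth order as the bound \eqref{eq:lemma3} used earlier. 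With this estimate, the split-domain $+$ MVT $+$ Lemma \ref{lemma:tail} $+$ Vitali scheme runs through unchanged (the majorant on $B^{c}$ now being a constant multiple of $(1+\norm{x}_1^{2}\ln^{2}\norm{x}_1)\,\alpha(x,s,i,v+\abs{\epsilon})$, still integrable and uniformly integrable over the relevant bounded range of the parameter), and the outer $v$-integral is again uniformly bounded; this produces \eqref{eq:delphi_dels2}.

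I expect the main obstacle to be the same as in Theorems \ref{theorem:IE_solution} and \ref{theorem:delphidelt}: exhibiting an integrable majorant, uniform in the mean value parameter $\epsilon_1$, for the now $\ln^{2}$-weighted integrand. The three devices that make it work are (i) trading each $s$-derivative of $\alpha$ for $\alpha$ times a logarithmic factor through the (second-order) version of Lemma \ref{lemma:delalphadels}, (ii) absorbing the perturbed argument $s+\epsilon_1 1_l$ into a density with a slightly enlarged $v$ via Lemma \ref{lemma:tail}, and (iii) invoking the uniform integrability/tightness of the lognormal family as in \eqref{eq:exit_claim1}. A secondary technical point is the behaviour of the outer integral near $v=0$, where $\alpha$ concentrates; but the quadratic-times-lognormal integrals remain bounded as $v\downarrow 0$ (cf.\ Lemma \ref{lemma:used_in_2}), so the DCT still applies there.
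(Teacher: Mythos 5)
Your proposal is correct and follows essentially the same route as the paper: the paper obtains \eqref{eq:delphi_dels} by citing the continuity-in-$s$ argument of Theorem \ref{theorem:IE_solution} (MVT, split into $B\cup B^c$, Lemma \ref{lemma:delalphadels}, Lemma \ref{lemma:tail}, Vitali/DCT), and handles \eqref{eq:delphi_dels2} by the identical scheme with the second-order tail estimate $\frac{\partial^2 \alpha}{\partial s_l\partial s_{l'}}=\frac{\alpha}{s_l s_{l'}}\mathcal{O}(\ln\norm{x})^2$, exactly as you describe. Your write-up in fact supplies more detail (the explicit identity $\partial^2_{s_l s_{l'}}\alpha=(\partial_{s_{l'}}g_2^l+g_2^l g_2^{l'})\alpha$ and the derivation of the second-order bound) than the paper's proof, which simply refers back to the earlier argument.
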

\begin{proof} From (\ref{eq:integral_equation}), we can write
\begin{align}\label{eq:delphi_dels1_new}
\frac{\partial \phi}{\partial s_l}(t,s,i)=\frac{\partial}{\partial s_l} \Bigg[e^{-\lambda_i(T-t)}\eta_i+\int_{0}^{T-t}e^{-\{\lambda_{i}+r(i)\}v}\sum_{j\neq i}\lambda_{ij}\int_{(0,\infty)^d}\phi(t+v,x,j)\times \alpha(x,s,i,v) dx dv\Bigg],
\end{align}
provided the partial derivative exists. In other words, (\ref{eq:delphi_dels1_new}) can be rewritten as
\begin{align*}
\frac{\partial \phi}{\partial s_l}(t,s,i) = e^{-\lambda_i(T-t)}\frac{\partial \eta_i}{\partial s_l}+\lim_{\epsilon \rightarrow 0}\frac{1}{\epsilon}\int_{0}^{T-t}e^{-(\lambda_i+r(i))v}\sum_{j \neq i}\lambda_{ij}\int_{(0,\infty)^d}\phi(t+v,x,j)\\
\times (\alpha(x,s+\epsilon \textbf{1}_l,i,v)-\alpha(x,s,i,v)) dxdv,
\end{align*}
if the above limit exists. Now we consider the second (limit) term only. We have already proved in Theorem \ref{theorem:IE_solution} (continuity in $s$ variable part) that as $\epsilon \rightarrow 0$, this second (limit) term converges to 
\[
\int_{0}^{T-t}e^{-(\lambda_i+r(i))v}\sum_{j \neq i}\lambda_{ij}\int_{(0,\infty)^d}\phi(t+v,x,j)\frac{\partial \alpha}{\partial s_l}(x,s,i,v)dxdv,
\]
which gives the desired expression for $\frac{\partial \phi}{\partial s_l}(t,s,i)$ as in (\ref{eq:delphi_dels}), since $g_2^l\alpha:=\frac{\partial \alpha}{\partial s_l}$. Now for the second order partial derivative $\frac{\partial^2 \phi}{\partial s_l\partial s_l'}(t,s,i)$, we write
\begin{align}\label{eq:delphi_dels2_new}
\frac{\partial^2 \phi}{\partial s_l\partial s_{l'}}(t,s,i)=\frac{\partial^2}{\partial s_l\partial s_{l'}} \Bigg[e^{-\lambda_i(T-t)}\eta_i + \int_{0}^{T-t}e^{-\{\lambda_{i}+r(i)\}v}\sum_{j\neq i}\lambda_{ij}\int_{(0,\infty)^d}\phi(t+v,x,j)\nonumber \\
\times \alpha(x,s,i,v) dx dv\Bigg],
\end{align}
provided the partial derivative in (\ref{eq:delphi_dels2_new}) exists. One can show the existence of this in a similar line of that for the first order derivative with the only difference arising from the tail estimate of $\frac{\partial^2 \alpha}{\partial s_l \partial s_{l'}}$. However in this case, similar to Lemma \ref{lemma:delalphadels}, we have  $\frac{\partial^2 \alpha}{\partial s_{l} \partial s_{l'}}=\frac{\alpha}{s_ls_{l'}}\mathcal{O}(\ln \norm{x})^2$, for a fixed $s\in (0,\infty)^d, v >0$. Thus, we can dominate the second order partial derivative of $\alpha$ by the product of $\alpha$, and a quadratic function in $x$. Note that this dominating function has also arisen in Theorem \ref{theorem:IE_solution} (continuity in $s$ variable part). Hence, the rest of the proof of this theorem will exactly be the same as the proof of Theorem \ref{theorem:IE_solution} (continuity in $s$ variable part).
\end{proof}
\par We have proved that unique solution $\phi$ of IE (\ref{eq:integral_equation}) is sufficiently smooth. Now we prove that it also satisfies the system of PDEs (\ref{eq:main_ivp}) in the following Theorem.
\begin{theorem}\label{theorem:IEalsoPDE}
Let $\phi(t,s,i)$ be the unique solution of IE (\ref{eq:integral_equation}) then
\begin{enumerate}[(i)]
\item $\phi(t,s,i)$ also satisfies PDE (\ref{eq:main_ivp})-(\ref{eq:operator}).
\item $\phi(t,s,i)$ is non-negative and of at-most linear growth.
\end{enumerate}
\end{theorem}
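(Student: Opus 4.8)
The plan is to verify the two claims of Theorem~\ref{theorem:IEalsoPDE} separately, using the explicit derivative formulas already established in Theorems~\ref{theorem:delphidelt} and~\ref{theorem:delphidels}. For part~(i), I would start by substituting \eqref{eq:delphi_delt}, \eqref{eq:delphi_dels}, and \eqref{eq:delphi_dels2} into the combination $\frac{\partial\phi}{\partial t}+\mathbb{L}\phi$. Inside each integral over $(0,\infty)^d$, the relevant quantity becomes
\[
\Bigl(r(i)\textstyle\sum_l s_l g_2^l + \tfrac12\sum_{l,l'}a_{ll'}(i)s_l s_{l'}(\partial_{s_{l'}}g_2^l + g_2^l g_2^{l'}) - g_1\Bigr)\alpha
= \frac{\partial\alpha}{\partial v} - \mathbb{L}_0^{*}\alpha + \text{(first-order/zeroth-order terms)},
\]
where $\mathbb{L}_0$ is the purely differential (Black--Scholes) part of $\mathbb{L}$ in the $s$ variable. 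The key computational fact I would isolate as a sub-step is that the lognormal density $x\mapsto\alpha(x,s,i,v)$, viewed as a function of $(v,s)$, satisfies a Kolmogorov-type identity relating $\partial_v\alpha$ to the $s$-derivatives of $\alpha$ — this is exactly the statement, implicit in the discussion around \eqref{eq:eta}, that $\alpha(\cdot,y,i,t)$ is the transition density of the diffusion $Y$. Concretely, the bracketed expression collapses so that the entire integrand is $e^{-(\lambda_i+r(i))v}$ times $\phi(t+v,x,j)$ times $\partial_v\bigl(e^{r(i)v}\alpha\bigr)$ up to the $-r(i)\phi$ and $\sum_j\lambda_{ij}\phi(\cdot,j)$ bookkeeping terms.

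Next I would collect the non-integral terms. From \eqref{eq:delphi_delt} the surviving boundary/point terms are $-r(i)e^{-\lambda_i(T-t)}\eta_i + e^{-\lambda_i(T-t)}\partial_t\eta_i - \sum_{j\neq i}\lambda_{ij}\phi(t,s,j) + (\lambda_i+r(i))\phi(t,s,i)$; to these I add the $\eta_i$-contributions coming from $\mathbb{L}$ acting on the first term $e^{-\lambda_i(T-t)}\eta_i$ of \eqref{eq:integral_equation}. Because $\eta_i$ solves the Black--Scholes PDE \eqref{eq:BS_equation}, the spatial operator applied to $e^{-\lambda_i(T-t)}\eta_i$ exactly cancels $e^{-\lambda_i(T-t)}(\partial_t\eta_i - r(i)\eta_i)$, and one is left with $-\lambda_i e^{-\lambda_i(T-t)}\eta_i$ from differentiating the exponential prefactor. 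Combining everything and using the integral equation \eqref{eq:integral_equation} itself to re-express $\int_0^{T-t}\!\!\int e^{-(\lambda_i+r(i))v}\sum_{j\neq i}\lambda_{ij}\phi\,\alpha\,dx\,dv = \phi(t,s,i) - e^{-\lambda_i(T-t)}\eta_i$, all the terms telescope to leave $\frac{\partial\phi}{\partial t}+\mathbb{L}\phi=0$. The terminal condition $\phi(T,s,i)=\mathcal{K}(s)$ follows by setting $t=T$ in \eqref{eq:integral_equation}: the integral vanishes and $e^{-\lambda_i\cdot 0}\eta_i(T,s)=\mathcal{K}(s)$.

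For part~(ii), I would argue by a fixed-point/monotonicity consideration rather than by direct estimation. Non-negativity: the operator $\mathcal{A}$ of Theorem~\ref{theorem:IE_solution} maps the closed cone $V^+:=\{\phi\in V:\phi\ge0\}$ into itself, since $\eta_i\ge0$ (noted below \eqref{eq:eta}), $\alpha>0$, and the off-diagonal rates $\lambda_{ij}\ge0$ for $j\neq i$; as $V^+$ is closed and $\mathcal{A}$ is a contraction, its unique fixed point lies in $V^+$. At-most-linear growth: I would exhibit an explicit supersolution of the form $\psi(t,s,i):=c(t)\bigl(1+\sum_l s_l\bigr)$ (or more simply iterate $\mathcal{A}$ starting from a linear function), using Lemma~\ref{lemma:used_in_2} to compute $\mathcal{A}\psi$ exactly; since $\eta_i(t,s)\le C(1+\|s\|_1)$ for a uniform $C$, one checks $\mathcal{A}$ maps the set $\{\phi:0\le\phi\le C'(1+\|s\|_1)\}$ into itself for a suitable $C'$ depending on $r$, $\Lambda$, $T$, and the Lipschitz constant of $\mathcal{K}$, and again closedness plus contraction pins the fixed point inside. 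Equivalently, $\|\phi\|_V<\infty$ is already immediate from $\phi\in V$, so the real content is the \emph{quantitative} bound with the right constant, which the supersolution delivers.

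The main obstacle is the algebraic verification in part~(i): one must be careful that the formal manipulation $\int\phi\,\partial_v(e^{r(i)v}\alpha)\,dx$ is legitimate — i.e., that differentiation under the integral sign and the cancellation of the Black--Scholes operator against $\partial_v\alpha$ hold pointwise — and here the tail bounds of Lemmas~\ref{lemma:delalphadelv}--\ref{lemma:tail}, together with the smoothness already proved in Theorems~\ref{theorem:delphidelt}--\ref{theorem:delphidels}, are exactly what licenses the interchange. Packaging the Kolmogorov identity for $\alpha$ cleanly (so that the dozen-or-so terms genuinely telescope) is the step I would write out most carefully; everything else is bookkeeping.
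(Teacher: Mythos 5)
Your proposal is correct and follows essentially the same route as the paper: substitute the derivative formulas of Theorems~\ref{theorem:delphidelt} and~\ref{theorem:delphidels}, cancel the $\eta_i$ terms via the Black--Scholes PDE \eqref{eq:BS_equation}, and reduce everything to the vanishing of the bracketed expression in the integrand --- your ``Kolmogorov-type identity'' for $\alpha$ is precisely the identity \eqref{eq:lemma_appendixA} that the paper verifies by direct computation in Appendix~\ref{appendixA}. Part~(ii) likewise matches the paper's argument (invariance of the non-negative cone under the contraction $\mathcal{A}$, with linear growth immediate from membership in $V$).
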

\begin{proof}
Using (\ref{eq:delphi_delt}) from Theorem \ref{theorem:delphidelt}, and (\ref{eq:delphi_dels}) $\&$ (\ref{eq:delphi_dels2}) from Theorem \ref{theorem:delphidels}, we get
\begin{align*}
&\bigg(\frac{\partial \phi}{\partial t}+r(i)\sum_{l=1}^d s_l \frac{\partial \phi}{\partial s_l}+\frac{1}{2}\sum_{l=1}^d \sum_{l'=1}^d s_l s_{l'}a_{ll'}(i)\frac{\partial^2 \phi}{\partial s_l \partial s_{l'}}\bigg)(t,s,i)\\
&=e^{-\lambda_i(T-t)}\left[\frac{\partial \eta_i}{\partial t}-r(i)\eta_i+r(i)\sum_{l=1}^d s_l \frac{\partial \eta_i}{\partial s_l}+\frac{1}{2}\sum_{l=1}^d\sum_{l'=1}^d s_l s_{l'}a_{ll'}(i)\frac{\partial^2 \eta_i}{\partial s_l \partial s_l \partial s_{l'}}\right]\\
&-\sum_{j \neq i}\lambda_{ij}\phi(t,s,j)+(r(i)+\lambda_i)\phi(t,s,i)+\int_{0}^{T-t}\sum_{j \neq i} \lambda_{ij}\int_{(0,\infty)^d}\phi(t+v,x,i)e^{-(\lambda_i+r(i))v}\\
&\alpha(x,s,i,v)\times\left[-g_1+r(i)\sum_{l=1}^d s_l g_2^l+\frac{1}{2}\sum_{l=1}^{d}\sum_{l'=1}^d s_l s_{l'}a_{ll'}(i)\left(\frac{\partial}{\partial s_{l'}}g_2^l+g_2^lg_2^{l'}\right)\right]dxdv,
\end{align*}
which implies
\begin{align*}
\bigg(\frac{\partial \phi}{\partial t}+r(i) \sum_{l=1}^d s_l \frac{\partial \phi}{\partial s_l}+\frac{1}{2}\sum_{l=1}^d \sum_{l'=1}^d s_l s_{l'}a_{ll'}(i)\frac{\partial^2 \phi}{\partial s_l \partial s_{l'}} -r(i)\phi \bigg)(t,s,i)+\sum_{j=1}^k\lambda_{ij}\phi(t,s,j)\\
=\int_{0}^{T-t}\sum_{j \neq i} \lambda_{ij}\int_{(0,\infty)^d}\phi(t+v,x,i)e^{-(\lambda_i+r(i))v}\alpha(x,s,i,v)
\Bigg[-g_1+r(i)\\
\times\sum_{l=1}^d s_l g_2^l+\frac{1}{2}\sum_{l=1}^{d}\sum_{l'=1}^d s_l s_{l'}a_{ll'}(i)\left(\frac{\partial}{\partial s_{l'}}g_2^l+g_2^lg_2^{l'}\right)\Bigg](x,s,i,v)dxdv.
\end{align*}
We refer to Appendix \ref{appendixA} for proof of the following identity
\begin{align}\label{eq:lemma_appendixA}
\Bigg[-g_1+r(i)\sum_{l=1}^d s_l g_2^l+\frac{1}{2}\sum_{l=1}^{d}\sum_{l'=1}^d s_l s_{l'}a_{ll'}(i)
\left(\frac{\partial}{\partial s_{l'}}g_2^l+g_2^lg_2^{l'}\right)\Bigg](x,s,i,v)&=0.
\end{align}
Using \eqref{eq:lemma_appendixA}, the equation above reduces to
\begin{align*}
\bigg(\frac{\partial \phi}{\partial t}+r(i) \sum_{l=1}^d s_l \frac{\partial \phi}{\partial s_l}+\frac{1}{2}\sum_{l=1}^d \sum_{l'=1}^d s_l s_{l'}a_{ll'}(i)\frac{\partial^2 \phi}{\partial s_l \partial s_{l'}}\bigg)(t,s,i)
+\sum_{j=1}^k\lambda_{ij}\phi(t,s,j)-r(i)\phi(t,s,i)=0.
\end{align*}
Again, in (\ref{eq:integral_equation}), by substituting $t=T$, we get $\phi(T,s,i)= \eta_i(T,s)= \mathcal{K}(s)$. The last equality is due to (\ref{eq:BS_equation}).  Thus, unique solution $\phi(t,s,i)$ of IE (\ref{eq:integral_equation})  satisfies (\ref{eq:main_ivp_1}) along with system of PDEs (\ref{eq:main_ivp}).\\
(ii) Since, $\mathcal{K}$ is non-negative, $\eta$ in (\ref{eq:operatorA}) is non-negative too. Therefore, the left side of (\ref{eq:operatorA}) is non-negative, provided $\phi \ge 0$. Thus $\mathcal{A}: H \rightarrow H$, where $H$ is the set of all non-negative functions in $V$. Clearly, $H$ is a complete metric space with metric $d(h_1,h_2) = \| h_1-h_2\|_V$ too. Moreover, Theorem \ref{theorem:IE_solution}-(ii) implies $d(\mathcal{A}\phi_1, \mathcal{A}\phi_2) \leq C d(\phi_1,\phi_2)$ for some $0<C<1$. Finally, the result follows from \cite[Theorem $17.1$(a) ]{DeimK85}.
\end{proof}
\subsection{Uniqueness}\label{ssec:uniqueness}
In this subsection, we aim to prove that the system of PDEs (\ref{eq:main_ivp})-(\ref{eq:operator}) has unique classical solution in $(V,\norm{\cdot}_V)$ via probabilistic approach. Let $1 \leq l \leq d$ and  $\tilde{S}_l:= \{\tilde{S}_l(t)\}_{t\geq 0}$ be the strong solution of the following SDE
\begin{equation}\label{eq:sde2}
d\tilde{S}_l(t)=\tilde{S}_l(t)\left(r(X_t)dt+\hat{\sigma}_l(X_t)d\hat{W}_t^l\right), \quad \tilde{S}_l(0) >0,
\end{equation}
where $\hat{\sigma}_l(i) = \norm{\sigma^l(i)}_2$, and $\hat{W}_t^l=\frac{\sum_j\sigma_{l,j}(X_t)W_t^j}{\hat{\sigma}_l}$. Note that, $\hat{W}_t^l$ is a Brownian motion using \cite[Theorem $8.4.2$, pp. $143$]{Oks00}. We denote $(\tilde{S}_1,\ldots, \tilde{S}_d)$ by $\tilde{S}$, and the filtration generated by $(\tilde S, X)$ by $\{\tilde{\mathcal{F}}_t\}_t$. The following lemmas, whose proofs are given in Appendix \ref{appendixA}, are crucial to achieve the uniqueness result.
\begin{lemma}\label{lemma:Stmartingale}
The $l^{th}$ component $\tilde{S}_l$ of $\tilde{S}$ is a sub-martingale for each $l$.
\end{lemma}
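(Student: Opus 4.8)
The plan is to show that $\tilde S_l$ has nondecreasing conditional expectation by exhibiting it as an exponential of a process with nonnegative drift after accounting for the quadratic variation, and then invoking the tower property together with nonnegativity of $r$. First I would solve the SDE \eqref{eq:sde2} explicitly: since the coefficients are of geometric type and $\hat\sigma_l(X_t)$ is piecewise constant along the (independent) Markov chain path, an application of It\^o's formula to $\ln \tilde S_l(t)$ gives
\[
\tilde S_l(t) = \tilde S_l(0)\exp\!\left(\int_0^t \Bigl(r(X_u)-\tfrac12\hat\sigma_l(X_u)^2\Bigr)du + \int_0^t \hat\sigma_l(X_u)\,d\hat W_u^l\right).
\]
Conditioning first on the whole trajectory $\{X_u\}_{u\ge 0}$ (which is independent of $\hat W^l$), the stochastic integral is, conditionally, a Gaussian with mean zero and variance $\int_s^t \hat\sigma_l(X_u)^2\,du$, so the standard lognormal mean computation yields
\[
E\!\left[\tilde S_l(t)\,\middle|\,\tilde{\mathcal F}_s \vee \sigma(X_u:u\ge 0)\right] = \tilde S_l(s)\,\exp\!\left(\int_s^t r(X_u)\,du\right).
\]
Since $r\ge 0$, the right-hand side is $\ge \tilde S_l(s)$ pointwise; taking conditional expectation with respect to $\tilde{\mathcal F}_s$ and using the tower property (noting $\tilde S_l(s)$ is $\tilde{\mathcal F}_s$-measurable) gives $E[\tilde S_l(t)\mid \tilde{\mathcal F}_s]\ge \tilde S_l(s)$, which is the submartingale inequality.

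Two routine points must be checked to make this rigorous. One is integrability: $\tilde S_l(t)\in L^1$ for each $t$. Because $X$ takes values in the finite set $\mathcal X$, the functions $r$ and $\hat\sigma_l$ are bounded, so $\hat\sigma_l(X_u)^2 \le M$ for a constant $M$; hence $E[\tilde S_l(t)] \le \tilde S_l(0)e^{Mt}<\infty$ by the same lognormal computation without conditioning, and in fact all moments are finite. The second is adaptedness and the fact that $\hat W^l$ is a Brownian motion with respect to $\{\tilde{\mathcal F}_t\}_t$: this is exactly the content of the cited L\'evy characterisation argument (\cite[Theorem 8.4.2]{Oks00}) invoked right after \eqref{eq:sde2}, since $\hat W^l_t = \sum_j \sigma_{l,j}(X_u)W^j_u/\hat\sigma_l$ has quadratic variation $t$ and is a continuous local martingale in the filtration $\{\mathcal F_t\}$, and the completed filtration generated by $(\tilde S, X)$ is contained in $\{\mathcal F_t\}$.

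I expect the only genuine subtlety to be the conditioning-on-the-Markov-path step: one must justify that, conditionally on $\sigma(X_u : u\ge 0)$, the process $u\mapsto \int_0^u \hat\sigma_l(X_r)\,d\hat W_r^l$ is still a (time-changed) Brownian motion and in particular has conditionally Gaussian increments with the stated variance. This follows from the independence of $W$ (hence of $\hat W^l$) and $X$ together with the fact that the integrand $\hat\sigma_l(X_r)$ becomes a \emph{deterministic} $L^2$ function once the path of $X$ is frozen; then It\^o's isometry and the Gaussianity of Wiener integrals of deterministic integrands apply directly. Everything else --- the exponential moment bound, the lognormal mean formula, the tower property --- is standard, so the argument is short once this conditioning is set up cleanly.
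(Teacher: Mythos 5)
Your proposal is correct and follows essentially the same route as the paper: solve the SDE explicitly, condition on the Markov chain path (the paper conditions on $\tilde{\mathcal F}_t \vee \mathcal F^X_{t'}$, you on the full $X$-trajectory, which is the same idea), use the lognormal mean formula to get $E[\tilde S_l(t)\mid \tilde{\mathcal F}_s] = \tilde S_l(s)\,E[e^{\int_s^t r(X_u)du}\mid X_s]\ \ge\ \tilde S_l(s)$ by nonnegativity of $r$. Your additional remarks on integrability and on why the frozen-path stochastic integral is conditionally Gaussian fill in details the paper leaves implicit, but the argument is the same.
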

\par Due to the above lemma, and the finite second order moment of $\tilde{S}_l$, we can apply the Doob's maximal inequality \cite[pp.132, Theorem 7.3.2]{ANSh19} on $\tilde{S}_l$ to get
\begin{align}\label{DMI}
E\left(\sup_{s \leq t}|\tilde{S}_l(s)|\right)< \infty,
\end{align}
for each $l=1,\ldots, d$, and $t\ge 0$.
\begin{lemma}\label{lemma:Ntmartingale}
If $\phi(t,s,i)$ is a classical solution of (\ref{eq:main_ivp}) with at most linear growth, then the process $N^\phi=\{N^\phi_t\}_{t \geq 0}$ given by
\begin{equation}\label{eq:poisson}
N^\phi_t=e^{-\int_{0}^{t}r(X_u)du}\phi(t,\tilde{S}(t),X_t),
\end{equation} 
is a martingale.
\end{lemma}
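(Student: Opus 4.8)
The plan is to apply the Itô–Dynkin formula to the process $N^\phi_t = e^{-\int_0^t r(X_u)\,du}\phi(t,\tilde S(t),X_t)$ and show that its drift vanishes identically, so that $N^\phi$ is a local martingale; then upgrade ``local'' to ``true'' martingale using the linear growth of $\phi$ together with the integrability estimate \eqref{DMI}. First I would invoke the version of Itô's formula for a function of time, a diffusion $\tilde S$, and an independent (finite-state, pure-jump) Markov chain $X$. Applying it to $e^{-\int_0^t r(X_u)du}\phi(t,\tilde S(t),X_t)$ produces three kinds of terms: the time/space differential of $\phi$ driven by the Brownian parts $d\hat W^l_t$; the compensated jump terms coming from transitions of $X$, whose predictable compensator is $\sum_{j}\lambda_{X_{t^-} j}\bigl(\phi(t,\tilde S(t),j)-\phi(t,\tilde S(t),X_{t^-})\bigr)\,dt$; and the discount factor contributing $-r(X_t)\phi\,dt$. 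Grouping the finite-variation (drift) part, the coefficient of $dt$ at state $X_t=i$ is exactly
\[
e^{-\int_0^t r(X_u)du}\left(\frac{\partial \phi}{\partial t}+r(i)\sum_{l=1}^d \tilde S_l\frac{\partial \phi}{\partial s_l}+\frac12\sum_{l,l'}a_{ll'}(i)\tilde S_l\tilde S_{l'}\frac{\partial^2\phi}{\partial s_l\partial s_{l'}}-r(i)\phi+\sum_{j=1}^k\lambda_{ij}\phi(t,\tilde S(t),j)\right),
\]
where I used that the second–order term generated by $\tilde S$ reproduces $\frac12\sum_{l,l'}a_{ll'}(i)\tilde S_l\tilde S_{l'}\partial_{s_l}\partial_{s_{l'}}\phi$ since $\hat\sigma_l(i)\hat\sigma_{l'}(i)\rho_{ll'}=a_{ll'}(i)$ by the construction of $\hat W^l$ in \eqref{eq:sde2} (the single driving Brownian motion and the definition $\hat\sigma_l=\norm{\sigma^l}_2$ are precisely arranged so that $\tilde S$ has the same generator as $S$ on the relevant functions). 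By \eqref{eq:main_ivp}–\eqref{eq:operator}, this bracket is $0$, so the drift of $N^\phi$ vanishes and $N^\phi$ is a local martingale.

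It remains to check that $N^\phi$ is a genuine martingale, for which I would show the family $\{N^\phi_\tau : \tau \le t\ \text{stopping time}\}$ is uniformly integrable, or more simply that $E\bigl(\sup_{s\le t}|N^\phi_s|\bigr)<\infty$ for each fixed $t$. Since $r\ge 0$, the discount factor is bounded by $1$, so $|N^\phi_s|\le |\phi(s,\tilde S(s),X_s)|$. The at-most-linear-growth of $\phi$ (Theorem \ref{theorem:IEalsoPDE}(ii)) gives a constant $C$ with $|\phi(s,\tilde S(s),X_s)|\le C\bigl(1+\sum_{l=1}^d \tilde S_l(s)\bigr)$, hence
\[
E\Bigl(\sup_{s\le t}|N^\phi_s|\Bigr)\le C\Bigl(1+\sum_{l=1}^d E\bigl(\sup_{s\le t}\tilde S_l(s)\bigr)\Bigr)<\infty
\]
by \eqref{DMI}. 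A local martingale dominated in $L^1$ by an integrable random variable on every $[0,t]$ is a martingale (localize by a reducing sequence $\tau_n\uparrow\infty$ and pass to the limit in $E[N^\phi_{t\wedge\tau_n}\mid\mathcal F_s]=N^\phi_{s\wedge\tau_n}$ using dominated convergence, legitimate because of the uniform domination). This completes the argument.

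The main obstacle, and the step deserving the most care, is the bookkeeping in the Itô formula for the mixed diffusion–Markov-chain process: one must handle the jumps of $X$ correctly (using left limits $X_{t^-}$, and the fact that $\tilde S$ has no jumps and is independent of $X$, so there is no simultaneous-jump cross term), and one must verify that the local martingale parts — the stochastic integral $\int e^{-\int_0^\cdot r}\sum_l \hat\sigma_l(X)\tilde S_l\,\partial_{s_l}\phi\,d\hat W^l$ and the compensated jump martingale associated with $X$ — are indeed well defined, which again follows from the linear growth of $\phi$ and of its first $s$-derivatives (Theorem \ref{theorem:delphidels}) combined with the moment bound on $\tilde S$. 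A secondary point is to confirm that $\hat W^l$ as defined is a Brownian motion adapted to $\{\tilde{\mathcal F}_t\}_t$ and that the quadratic covariations $d\langle \hat W^l,\hat W^{l'}\rangle_t$ reproduce the matrix $a(i)$ at state $i$; this is where the cited Lévy characterization \cite[Theorem $8.4.2$]{Oks00} and the explicit form of $\hat W^l$ in \eqref{eq:sde2} are used.
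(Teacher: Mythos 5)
Your proposal is correct and follows essentially the same route as the paper: It\^o's formula for the jump--diffusion pair $(\tilde S, X)$ applied to the discounted process, cancellation of the drift via \eqref{eq:main_ivp}--\eqref{eq:operator}, and the upgrade from local to true martingale by dominating $|N^\phi_t|$ with $\|\phi\|_V\bigl(1+\sum_l \tilde S_l(t)\bigr)$ and invoking \eqref{DMI} (the paper cites the standard dominated-local-martingale criterion from Protter where you spell out the localization and dominated-convergence step explicitly). Your remark that the construction of $\hat W^l$ and $\hat\sigma_l$ is precisely what makes the generator of $\tilde S$ reproduce $\frac12\sum_{l,l'}a_{ll'}(i)s_ls_{l'}\partial_{s_l}\partial_{s_{l'}}$ is a point the paper uses implicitly, and it is worth having stated.
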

\par From Theorem \ref{theorem:IEalsoPDE}, we know that  (\ref{eq:main_ivp})-(\ref{eq:operator}) has at least one classical solution in $V$. Again, the above lemma implies that every such solution $\phi$ would produce a martingale $N^\phi$ (as in \eqref{eq:poisson}). Then using $N^\phi_t=E[N^\phi_T|\tilde{\mathcal{F}}_t]$, Markovity of $(\tilde{S},X)$, and the terminal condition of (\ref{eq:main_ivp})-(\ref{eq:operator}), we get
\begin{align}
\label{eq:option_expectation}
\nonumber 
e^{-\int_{0}^t r(X_u)du}\phi(t,\tilde{S}_t,X_t)= & E\left(e^{-\int_{0}^T r(X_u)du}\mathcal{K}(\tilde{S}(T))|\tilde{\mathcal{F}}_t\right),
\\
\textrm{ or,}\quad \phi(t,s,i)=&E\left(e^{-\int_{t}^T r(X_u)du}\mathcal{K}(\tilde{S}(T))|\tilde{S}(t)=s,X_t=i\right).
\end{align}
\par Since every classical solution of (\ref{eq:main_ivp})-(\ref{eq:operator}) in $V$ has the identical expression \eqref{eq:option_expectation}, all of them are identical. By this we have established the following theorem.
\begin{theorem}\label{theorem:uniqueness}
The problem (\ref{eq:main_ivp})-(\ref{eq:operator}) has a unique classical solution in $V$.
\end{theorem}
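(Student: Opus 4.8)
The plan is to split the statement into existence and uniqueness, drawing existence directly from Theorem~\ref{theorem:IEalsoPDE} and proving uniqueness by a probabilistic (Feynman--Kac) argument built on the auxiliary diffusion $\tilde S$ of \eqref{eq:sde2}. Existence is immediate: the unique fixed point $\phi\in V$ of the operator $\mathcal A$ produced in Theorem~\ref{theorem:IE_solution} is shown in Theorem~\ref{theorem:IEalsoPDE} to be a classical, nonnegative, at-most-linearly growing solution of \eqref{eq:main_ivp}--\eqref{eq:operator}. So only uniqueness remains.

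For uniqueness I would proceed as follows. First, let $\phi$ be an arbitrary classical solution of \eqref{eq:main_ivp}--\eqref{eq:operator} lying in $V$; membership in $V$ forces $\phi(t,\cdot,i)$ to grow at most linearly in $s$, which is precisely the hypothesis of Lemma~\ref{lemma:Ntmartingale}. Hence the process $N^\phi_t=e^{-\int_0^t r(X_u)\,du}\,\phi(t,\tilde S(t),X_t)$ is a martingale with respect to $\{\tilde{\mathcal F}_t\}_t$. Second, write $N^\phi_t=E[\,N^\phi_T\mid\tilde{\mathcal F}_t\,]$, substitute the terminal condition $\phi(T,\cdot,\cdot)=\mathcal K$ (with $\mathcal K$ Lipschitz, so that the terminal random variable $e^{-\int_t^T r(X_u)\,du}\mathcal K(\tilde S(T))$ is integrable by \eqref{DMI}), and invoke the Markov property of the pair $(\tilde S,X)$; this yields the representation \eqref{eq:option_expectation}, i.e. $\phi(t,s,i)=E\big(e^{-\int_t^T r(X_u)\,du}\mathcal K(\tilde S(T))\mid \tilde S(t)=s,\,X_t=i\big)$. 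Third, observe that the right-hand side of this identity involves only the model data and not $\phi$ itself; therefore any two classical solutions in $V$ agree pointwise on $\mathfrak D$, and combined with the existence from Theorem~\ref{theorem:IEalsoPDE} this proves the theorem.

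The crux of the argument --- and the step I expect to be the genuine obstacle --- is the martingale property asserted in Lemma~\ref{lemma:Ntmartingale}, which is why it is relegated to the appendix. Applying It\^o's formula to $N^\phi$ (legitimate because $\phi$ has the $t$- and $s$-regularity established in Theorems~\ref{theorem:delphidelt} and~\ref{theorem:delphidels}) and using that $\phi$ solves \eqref{eq:main_ivp} makes $N^\phi$ a local martingale; upgrading it to a true martingale requires a domination/uniform-integrability bound, and this is exactly where the at-most-linear growth of $\phi$ combines with the moment estimate $E(\sup_{s\le t}|\tilde S_l(s)|)<\infty$ from \eqref{DMI}, which itself rests on Lemma~\ref{lemma:Stmartingale} and Doob's maximal inequality. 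Once Lemma~\ref{lemma:Ntmartingale} is in hand, the uniqueness proof is essentially a two-line consequence of conditioning and Markovity, as sketched above.
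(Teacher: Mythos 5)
Your proposal is correct and takes essentially the same route as the paper: existence is imported from Theorem~\ref{theorem:IEalsoPDE}, and uniqueness follows by applying Lemma~\ref{lemma:Ntmartingale} to an arbitrary classical solution in $V$, using the martingale identity $N^\phi_t=E[N^\phi_T\mid\tilde{\mathcal F}_t]$, the terminal condition, and the Markov property of $(\tilde S,X)$ to arrive at the $\phi$-independent representation \eqref{eq:option_expectation}, which forces all such solutions to coincide. The only cosmetic quibble is that the $C^{1,2}$ regularity needed for It\^o's formula on an \emph{arbitrary} classical solution is part of the definition of classical solution, not a consequence of Theorems~\ref{theorem:delphidelt} and~\ref{theorem:delphidels} (which concern the fixed point of the integral equation).
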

\par It is evident from the above discussion and Theorem \ref{theorem:uniqueness} that the classical solution of (\ref{eq:main_ivp})-(\ref{eq:operator}) having at most linear growth also solves the IE (\ref{eq:integral_equation}). However, above results do not indicate how to derive the IE from the PDE. For an independent interest, we produce the derivation of the IE (\ref{eq:integral_equation}) using the stochastic representation (\ref{eq:option_expectation}) of $\phi(t,s,i)$ in Appendix \ref{sec:appendixB}.
%%%%%%%%%%%%%%%%%%%%%%%%%%%%%%%%%%%%%%%
\section{Truncated domain problem}\label{sec:far_estimate}
\noindent Due to the absence of analytical solution of (\ref{eq:main_ivp})-(\ref{eq:operator}), it needs to be solved numerically by truncating the unbounded domain to a bounded one. Let $\psi:[0,T]\times \Pi_{l=1}^{d}[s_l^b, s_l^u]\times \mathcal{X}\to \mathbb{R}$ be such that in the interior
\begin{align}
\bigg(\frac{\partial \psi}{\partial t}+r(i)\sum_{l=1}^{d}s_l \frac{\partial \psi}{\partial s_l}+\frac{1}{2}\sum_{l=1}^{d}\sum_{l'=1}^{d}a_{ll'}(i)s_ls_{l'}&\frac{\partial^2 \psi}{\partial s_l \partial s_{l'}} -r(i)\psi\bigg)(t,s,i)+\sum_{j=1}^{k} \lambda_{ij}\psi(t,s,j) =0, \label{eq:main_ibvp}\\
\psi(T,s,i)&=\mathcal{K}(s) \forall \quad s \in R,\: i \in \mathcal{X},\label{eq:main_ibvp_ini}\\
\psi(t,s,i)&=h(t,s,i) \quad \forall \: (t,s,i)\in  (0,T)\times \Gamma\times \mathcal{X},\label{eq:main_ibvp_boun}
\end{align}
where $R=\Pi_{l=1}^{d}(s_l^b, s_l^u)$,
$0 \leq s_l^b<s_l^u$, $\forall$ $l$, and $\Gamma=\partial R \cap (0,\infty)^d$. For each $i$, $h(\cdot, \cdot,i)$ is set as a sufficiently smooth function on closure of $(0,T)\times R$. The existence and uniqueness of the classical solution of (\ref{eq:main_ibvp})-(\ref{eq:main_ibvp_boun}) can be borrowed from the Theorem $3.5$ on $pp.$ $291$ in \cite{EgS94}, and also in Theorem $10.1$ on $pp.$ $616$ in \cite{Lady67}. However, the application of these Theorems requires smoothness of $\Gamma$. For our case, this is achieved by a smooth approximation of $R$, which is explained below.
\par For any $\varepsilon\in (0,1)$, let $U^0_{\varepsilon}:= \{ s\in \mathbb{R}^d\mid \sum_{l=1}^d {|s_l|}^{1/\varepsilon} < 1 \}$ and the diagonal matrix $\mathcal{M}$ be such that the $l$th diagonal element is $\frac{s_l^u - s_l^b}{2}$. Then clearly $U_{\varepsilon} := \frac{1}{2}(s^b +s^u) + \mathcal{M} U^0_{\varepsilon}$ is contained in $R$, and having smooth boundary. Furthermore, $U_{\varepsilon_1} \subset U_{\varepsilon_2}$ for any $1\ge \varepsilon_1> \varepsilon_2>0$, and $$ \bigcup_{\varepsilon\in (0,1)} U_{\varepsilon} = R. $$ Hence, due to the smoothness of $h$, $\psi_{\varepsilon}$, the classical solution of (\ref{eq:main_ibvp})-(\ref{eq:main_ibvp_boun}), where $R$ is replaced by $U_{\varepsilon}$, approximates $\psi$ for sufficiently small $\varepsilon>0$. 
\subsection{Growth estimate}
Next, we derive a growth estimate of the solution of the un-truncated problem depending on the growth of the terminal data. This is useful in several aspects, including in estimating the error due to the boundary data of the truncated domain problem. A similar result appeared in \cite[Theorem $2$]{KaN00} for Black-Scholes-Merton PDE, which is extended for a system of PDEs here. It is worth noting that the present proof is entirely different from that given in \cite{KaN00}. 
\begin{theorem}\label{theorem:lineargrowth} Let $\phi(t,s,i)$ be the solution of (\ref{eq:main_ivp})-(\ref{eq:operator}). In addition to the non-negativity and Lipschitz continuity, we further assume that 
\begin{equation}\label{eq:solution_bound}
-k_3+k_4\cdot s \leq \mathcal{K}(s) \leq k_1+k_2\cdot s \quad \forall \quad s \in (0,\infty)^d,
\end{equation}
for some $k_1,k_3 \geq 0$ and vectors $k_2, k_4 \in \mathbb{R}^d$, where $``\cdot"$ represents inner product. Then
\begin{equation}\label{eq:theorem15}
-k_3 e^{-(\min_{i} r(i))(T-t)}+ k_4\cdot  s \leq \phi(t,s,i) \leq k_1e^{-(\min_{i} r(i))(T-t)}+k_2\cdot  s \quad \forall \quad (t,s,i) \in D.
\end{equation}
\end{theorem}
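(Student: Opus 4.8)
The plan is to derive the bound from the probabilistic representation \eqref{eq:option_expectation} of the solution, namely
\[
\phi(t,s,i)=E\left(e^{-\int_{t}^{T}r(X_u)\,du}\,\mathcal{K}(\tilde{S}(T))\,\Big|\,\tilde{S}(t)=s,\ X_t=i\right),
\]
which is already available for the unique classical solution of \eqref{eq:main_ivp}--\eqref{eq:operator} of at most linear growth (and \eqref{eq:solution_bound} in particular forces $\mathcal{K}$ to have at most linear growth). Given this representation, the two-sided bound \eqref{eq:solution_bound} can be inserted inside the conditional expectation, and the whole statement reduces to two facts: (a) since $r\ge 0$, one has $e^{-\int_{t}^{T}r(X_u)\,du}\le e^{-(\min_i r(i))(T-t)}$ pointwise; and (b) for each $l$ the discounted $l$th asset is a true martingale, so that $E\big(e^{-\int_{t}^{T}r(X_u)\,du}\,\tilde{S}_l(T)\mid \tilde{S}(t)=s,\ X_t=i\big)=s_l$.

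Fact (b) is where a little work is needed, and I would obtain it by applying Lemma~\ref{lemma:Ntmartingale} to the function $\phi_l(t,s,i):=s_l$. This $\phi_l$ is smooth and of (at most, in fact exactly) linear growth, and it is a classical solution of \eqref{eq:main_ivp}: its $t$-derivative and all second-order $s$-derivatives vanish, the first-order term contributes $r(i)s_l$, the zeroth-order term contributes $-r(i)s_l$, and $\sum_{j=1}^{k}\lambda_{ij}\phi_l(t,s,j)=s_l\sum_{j=1}^{k}\lambda_{ij}=0$ because the rows of $\Lambda$ sum to zero. Hence Lemma~\ref{lemma:Ntmartingale} makes $N^{\phi_l}_t=e^{-\int_{0}^{t}r(X_u)\,du}\tilde{S}_l(t)$ a martingale, and the same conditioning step that produced \eqref{eq:option_expectation} (using $N^{\phi_l}_t=E[N^{\phi_l}_T\mid\tilde{\mathcal{F}}_t]$, the $\tilde{\mathcal{F}}_t$-measurability of $e^{-\int_{0}^{t}r(X_u)\,du}$, and the Markovianity of $(\tilde{S},X)$) yields the claimed identity in (b).

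With (a) and (b) in hand the conclusion is immediate from linearity and monotonicity of conditional expectation. From $\mathcal{K}(\tilde{S}(T))\le k_1+k_2\cdot\tilde{S}(T)$ and $k_1\ge 0$,
\[
\phi(t,s,i)\le k_1\,E\Big(e^{-\int_{t}^{T}r(X_u)\,du}\,\Big|\,\tilde{S}(t)=s,\ X_t=i\Big)+k_2\cdot s\le k_1 e^{-(\min_i r(i))(T-t)}+k_2\cdot s,
\]
and symmetrically, from $\mathcal{K}(\tilde{S}(T))\ge -k_3+k_4\cdot\tilde{S}(T)$ and $k_3\ge 0$, one gets $\phi(t,s,i)\ge -k_3 e^{-(\min_i r(i))(T-t)}+k_4\cdot s$; together these are exactly \eqref{eq:theorem15}.

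The only genuinely delicate ingredient is the martingale (as opposed to merely local-martingale) property behind (b), but this is already subsumed in Lemma~\ref{lemma:Ntmartingale}, whose proof rests on the integrability estimate \eqref{DMI} coming from the submartingale property of $\tilde{S}_l$ and Doob's maximal inequality; everything else is bookkeeping with the signs of $k_1,k_2,k_3,k_4$ and the linear and monotone behaviour of conditional expectation.
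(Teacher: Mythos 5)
Your proposal is correct and follows essentially the same route as the paper: both arguments start from the stochastic representation \eqref{eq:option_expectation}, insert the two-sided bound \eqref{eq:solution_bound} inside the conditional expectation, bound the discount factor by $e^{-(\min_i r(i))(T-t)}$ using $r\ge 0$ together with the signs of $k_1,k_3$, and obtain the identity $E\big(e^{-\int_t^T r(X_u)du}\tilde S_l(T)\mid \tilde S(t)=s, X_t=i\big)=s_l$ by applying Lemma~\ref{lemma:Ntmartingale} to the classical solution $\varphi_l(t,s,i)=s_l$. The only cosmetic difference is that the paper carries the computation through conditional expectations given $\tilde{\mathcal{F}}_t$ and then passes from an almost-sure statement to all $(t,s,i)$ via continuity, whereas you work directly with the conditioned form; this does not affect correctness.
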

\begin{proof}
From (\ref{eq:option_expectation}) and (\ref{eq:solution_bound}), we can write,
\begin{align*}
E\left[e^{-\int_{t}^T r(X_u)du}(-k_3+k_4\cdot\tilde{S}(T))|\tilde{S}(t),X_t\right]\leq \phi(t,\tilde{S}(t),X_t) \\ \leq E\left[e^{-\int_{t}^T r(X_u)du}(k_1+k_2\cdot\tilde{S}(T))|\tilde{S}(t),X_t\right].
\end{align*}
Using the Markovity of $(\tilde S,X)$ w.r.t. $\{\tilde{\mathcal{F}}_t\}_t$, we have
\begin{align*}
E\left[e^{-\int_{t}^T r(X_u)du}(-k_3+k_4\cdot \tilde{S}(T))|\tilde{\mathcal{F}}_t\right] \leq \phi(t,\tilde{S}(t),X_t)\\ 
\leq E\left[e^{-\int_{t}^T r(X_u)du}(k_1+k_2\cdot \tilde{S}(T))|\tilde{\mathcal{F}}_t\right].
\end{align*}
Multiplying $e^{-\int_{0}^tr(X_u)du}$ to each term in the above inequality, we obtain
\begin{align*}
E\left[e^{-\int_{0}^T r(X_u)du}(-k_3+k_4\cdot  \tilde{S}(T))|\tilde{\mathcal{F}}_t\right]&\leq e^{-\int_{0}^tr(X_u)du}\phi(t,\tilde{S}(t),X_t) \\&\leq E\left[e^{-\int_{0}^T r(X_u)du}(k_1+k_2 \cdot  \tilde{S}(T))|\tilde{\mathcal{F}}_t\right],
\end{align*}
or
\begin{align*}
-k_3 E\left[e^{-\int_{0}^T r(X_u)du}|\tilde{\mathcal{F}}_t\right]  +k_4\cdot E\left[e^{-\int_{0}^T r(X_u)du} \tilde{S}(T)|\tilde{\mathcal{F}}_t\right]\leq e^{-\int_{0}^tr(X_u)du}\phi(t,\tilde{S}(t),X_t) \\ 
\leq k_1 E\left[e^{-\int_{0}^T r(X_u)du}|\tilde{\mathcal{F}}_t\right] +k_2\cdot E\left[e^{-\int_{0}^T r(X_u)du} \tilde{S}(T)|\tilde{\mathcal{F}}_t\right].
\end{align*}
Note that the function $\varphi_l\in V$ given by $\varphi_l(t,s,i):=s_l$ solves (\ref{eq:main_ivp})
classically for each $l$. By applying Lemma \ref{lemma:Ntmartingale}, we get that  $\{e^{-\int_{0}^tr(X_u)du}\tilde{S}^l_t\}_{t \geq 0}$ is martingale for each $l$. Hence, using $E\left[e^{-\int_{0}^T r(X_u)du} \tilde{S}(T)|\tilde{\mathcal{F}}_t\right] =e^{-\int_{0}^t r(X_u)du} \tilde{S}(t)$, the above inequality reduce to
\begin{align*}
   -k_3 E\left[e^{-\int_{0}^T r(X_u)du}|\tilde{\mathcal{F}}_t\right]  +k_4\cdot e^{-\int_{0}^t r(X_u)du} \tilde{S}(t)\leq 
e^{-\int_{0}^tr(X_u)du}\phi(t,\tilde{S}(t),X_t), \\\leq k_1 E\left[e^{-\int_{0}^T r(X_u)du}|\tilde{\mathcal{F}}_t\right]  +k_2\cdot e^{-\int_{0}^t r(X_u)du} \tilde{S}(t).
\end{align*}
Cancelling $e^{-\int_{0}^t r(X_u)du}$ from each term, we get
\begin{align*}
&-k_3 E\left[e^{-\int_{t}^T r(X_u)du}|\tilde{\mathcal{F}}_t\right] +k_4\cdot  \tilde{S}(t)\leq \phi(t,\tilde{S}(t),X_t) \\& \leq k_1 E\left[e^{-\int_{t}^T r(X_u)du}|\tilde{\mathcal{F}}_t\right] +k_2\cdot  \tilde{S}(t)-k_3e^{-(\min_{i}r(i))(T-t)}+k_4\cdot  \tilde{S}(t)\\& \leq \phi(t,\tilde{S}(t),X_t) \leq k_1e^{-(\min_{i}r(i))(T-t)}+k_2\cdot  \tilde{S}(t),
\end{align*}
almost surely for all $t\in [0,T]$.
Hence (\ref{eq:theorem15}), obtained by replacing $\tilde{S}(t)=s$ and $X_t=i$, follows for almost every $(t,s,i) \in D$, since $X$ is irreducible, and $\tilde{S}$ is not degenerate on the positive orthant. In fact, the inequality holds for all $(t,s,i) \in D$, as $\phi$ is continuous, for every $(t,s,i)\in D$.
\end{proof}
\begin{remark}[Far boundary estimate]\noindent From Theorem \ref{theorem:lineargrowth}, we can obtain the error bound $\norm{\phi-\psi}_{V_1}$, where $\phi$ and $\psi$ are solutions of (\ref{eq:main_ivp})-(\ref{eq:operator}) and (\ref{eq:main_ibvp})-(\ref{eq:main_ibvp_boun}) respectively, and 
$$V_1= \left(C((0,T)\times \Gamma \times \mathcal{X} ), \|\phi\|_{V_1} = \sup_{t,s,i} \frac{|\phi(t,s,i)|}{1+\|s\|_1}\right).$$
To be more precise, the maximum error on the boundary due to the imposition of artificial data is not more than
$ \max\Big\{\| k_1e^{-(\min_{i} r(i))(T-t)}+k_2\cdot  s -h(t,s,i) \|_{V_1}, \| -k_3 e^{-(\max_{i} r(i))(T-t)}+ k_4\cdot  s  -h(t,s,i) \|_{V_1}\Big\}.$
In literature, this bound is often termed as \textit{far field boundary error estimate}.
\end{remark}
\subsection{Near field estimates} \label{ssec:near_field_estimate} 
In this subsection, we establish a few intermediate results for developing our final result Theorem \ref{theorem:pointwise_estimate}, which is an extension of \cite[Theorem 4]{KaN00}. The following lemma, which resembles to \cite[Lemma 1]{KaN00}, is proved here using a probabilistic method instead of an analytical approach. While \cite[Lemma 1]{KaN00} is for Black-Scholes-Merton model and deals with a scalar equation, the following lemma is for a parabolic system of equations originating from the regime-switching extension of the Black-Scholes-Merton model.
\begin{lemma}\label{lemma:far_feild_lemma1}
Let $f_1$ and $f_2$ be in $$C\left([0, T] \times (\Bar{\mathcal{U}} \cap (0,\infty)^d) \times \mathcal{X}\right) \cap C^{1,2}\left((0, T) \times \mathcal{U} \times \mathcal{X}\right),$$ with at most linear growth in space variable, where $\mathcal{U}$ is an open domain in $(0,\infty)^d$. We also assume that 
\begin{align}\label{eq:far_field_3}
\left(\frac{\partial f_1}{\partial t}+\mathbb{L}f_1 \right)(t,s,i)\leq 0 \quad \mbox{on} \quad (0,T) \times \mathcal{U} \times \mathcal{X},\\
\label{eq:far_field_4}
\left(\frac{\partial f_2}{\partial t}+\mathbb{L}f_2 \right)(t,s,i)\geq 0 \quad \mbox{on} \quad (0,T)\times \mathcal{U} \times \mathcal{X},
\end{align}
and $f_1 \geq f_2$ at $t=T$, and on $(0, T) \times (\partial\mathcal{U}\cap (0,\infty)^d) \times \mathcal{X}$. Then $f_1 \geq f_2$ in $(0,T) \times \mathcal{U} \times \mathcal{X}$.
\end{lemma}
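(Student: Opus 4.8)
The plan is to establish this comparison principle by a martingale argument built on the stochastic representation \eqref{eq:option_expectation}, rather than by the analytic maximum‑principle route used in \cite{KaN00}. Put $w:=f_1-f_2$. Since $\mathbb{L}$ is linear, \eqref{eq:far_field_3}--\eqref{eq:far_field_4} give $\big(\frac{\partial w}{\partial t}+\mathbb{L}w\big)(t,s,i)\le 0$ on $(0,T)\times\mathcal{U}\times\mathcal{X}$; moreover $w$ is continuous on $[0,T]\times(\bar{\mathcal{U}}\cap(0,\infty)^d)\times\mathcal{X}$, lies in $C^{1,2}\big((0,T)\times\mathcal{U}\times\mathcal{X}\big)$, has at most linear growth in $s$, and satisfies $w\ge 0$ at $t=T$ as well as on $(0,T)\times(\partial\mathcal{U}\cap(0,\infty)^d)\times\mathcal{X}$. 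It therefore suffices to prove $w(t_0,s_0,i_0)\ge 0$ for an arbitrary $(t_0,s_0,i_0)\in(0,T)\times\mathcal{U}\times\mathcal{X}$.

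Fix such a point and run the process $(\tilde{S},X)$ of \eqref{eq:sde2} from time $t_0$ with $\tilde{S}(t_0)=s_0$ and $X_{t_0}=i_0$; recall that $\tilde{S}$ remains in $(0,\infty)^d$ at all finite times almost surely, being of geometric type. Let $\tau:=\inf\{t>t_0:\tilde{S}(t)\notin\mathcal{U}\}$. Because $\mathcal{U}$ may be unbounded while $w$ only has linear growth, I would localize first: pick bounded open sets $\mathcal{U}_n$ with $\bar{\mathcal{U}}_n\subset\mathcal{U}_{n+1}$ and $\bigcup_n\mathcal{U}_n=\mathcal{U}$, set $\tau_n:=\inf\{t>t_0:\tilde{S}(t)\notin\mathcal{U}_n\}$, so that $\tau_n\uparrow\tau$ by continuity of paths. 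On $[t_0,\tau_n\wedge T]$ the coefficients $r(X_\cdot),a_{ll'}(X_\cdot)$ are bounded (finite state space), $s$ ranges over a bounded set, and $w\in C^{1,2}$; hence Itô's formula applied to $M^n_t:=\exp\!\big(-\int_{t_0}^{t}r(X_u)\,du\big)\,w\big(t,\tilde{S}(t),X_t\big)$ --- with the jumps of $X$ recombining with the $\Lambda$-term of $\mathbb{L}$ exactly as in the proof of Lemma~\ref{lemma:Ntmartingale} --- produces a drift equal to $\exp\!\big(-\int_{t_0}^{t}r(X_u)\,du\big)\big(\frac{\partial w}{\partial t}+\mathbb{L}w\big)\big(t,\tilde{S}(t),X_t\big)\le 0$. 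Thus $t\mapsto M^n_{t\wedge\tau_n\wedge T}$ is a genuine supermartingale, and in particular $w(t_0,s_0,i_0)=M^n_{t_0}\ge E\big[M^n_{\tau_n\wedge T}\big]$.

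It then remains to let $n\to\infty$. Since $\tau_n\wedge T\uparrow\tau\wedge T$, path continuity together with continuity of $w$ on $[0,T]\times(\bar{\mathcal{U}}\cap(0,\infty)^d)\times\mathcal{X}$ gives $M^n_{\tau_n\wedge T}\to\exp\!\big(-\int_{t_0}^{\tau\wedge T}r(X_u)\,du\big)\,w\big(\tau\wedge T,\tilde{S}(\tau\wedge T),X_{\tau\wedge T}\big)$ almost surely; since $r\ge 0$ and $w$ has linear growth, $|M^n_{\tau_n\wedge T}|$ is dominated by the integrable random variable $C\big(1+\sup_{s\le T}\norm{\tilde{S}(s)}_1\big)$ (integrability by \eqref{DMI}), so by dominated convergence the expectations converge. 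Finally, on $\{\tau\ge T\}$ the terminal point is $(T,\tilde{S}(T),X_T)$, where $w\ge 0$ by the terminal condition, while on $\{\tau<T\}$ continuity and strict positivity of $\tilde{S}$ force $\tilde{S}(\tau)\in\partial\mathcal{U}\cap(0,\infty)^d$, where $w\ge 0$ by the boundary hypothesis; hence $E\big[M^n_{\tau_n\wedge T}\big]\to E\big[\exp(-\int_{t_0}^{\tau\wedge T}r)\,w(\tau\wedge T,\cdot,\cdot)\big]\ge 0$, giving $w(t_0,s_0,i_0)\ge 0$ and completing the argument.

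I expect the main obstacle to be the localization-and-limit step: rigorously justifying the supermartingale property and the passage $E[M^n_{\tau_n\wedge T}]\to E[M_{\tau\wedge T}]$ on the possibly unbounded $\mathcal{U}$ under only at-most-linear growth of $w$, which is exactly where the integrability estimate \eqref{DMI} is needed (together with the standard routine of applying Itô's formula on a region not touching $t=T$ and then using continuity of $w$). A minor secondary point is verifying $\tilde{S}(\tau)\in\partial\mathcal{U}\cap(0,\infty)^d$ on $\{\tau<T\}$, which follows from continuity of paths and the almost sure strict positivity of $\tilde{S}$ at finite times.
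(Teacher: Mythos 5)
Your proposal is correct and follows essentially the same route as the paper: both arguments run the process $(\tilde S,X)$ from the fixed interior point, apply It\^o's formula to the discounted function stopped at the exit time, localize with a sequence of stopping times to make the local martingale genuine, use the sign of $\frac{\partial}{\partial t}+\mathbb{L}$ on the drift, and pass to the limit via \eqref{DMI} and the linear-growth bound before invoking the terminal and boundary inequalities. The only cosmetic difference is that you work with $w=f_1-f_2$ as a supermartingale while the paper writes the local martingale decomposition for $f_1$ and $f_2$ separately and subtracts; these are equivalent.
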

\begin{proof} Given any two real numbers $a,b$, let $a \wedge b$ denote $\min(a, b)$. We fix a $(t,s,i)\in (0,T) \times \mathcal{U} \times \mathcal{X}$. We define $\tau=T\wedge \tau'$, where $\tau':=\inf\{t' \ge t | \tilde{S}(t') \notin{\mathcal{U}}\}$, gives the exit time of $\{\tilde{S}(t')\}_{t'\ge t}$ from $\mathcal{U}$. We further specify that $\tilde{S}$ solves the SDE in \eqref{eq:sde2} with $\tilde{S}(t) =s$. The transition kernel of the Markov chain $\{X_{t'}\}_{t'\ge t}$ is as before with $X_t=i$. Let $f_j$ be in $C\left([0, T] \times (\Bar{\mathcal{U}} \cap (0,\infty)^d) \times \mathcal{X}\right) \cap C^{1,2}\left((0, T) \times \mathcal{U} \times \mathcal{X}\right)$, with 
$$\sup \abs{\frac{f_j(t',s',i')}{1+\norm{s'}_1}} < \infty, \quad (t',s',i')\in [0,T] \times (\Bar{\mathcal{U}} \cap (0,\infty)^d) \times \mathcal{X},$$ 
for $j=1,2$. For each $j$ let the processes $N^{f_j}:=\{N^{f_j}_{t'}\}_{t'\ge t}$ be defined as in \eqref{eq:poisson} by 
$$ N^{f_j}_{t'}=e^{-\int_{t}^{t'\wedge \tau'}r(X_u)du}f_j(t'\wedge \tau', \tilde{S}(t'\wedge \tau'),X_{t'\wedge \tau'})
.$$
Then as in the proof of Lemma \ref{lemma:Ntmartingale}, we obtain that
\begin{align*}
t'\mapsto M^{f_j}_{t'} := N^{f_j}_{t'} - \int_t^{t'\wedge \tau'} e^{-\int_{t}^{u} r(X_{u'})du'} \left(\frac{\partial f_j}{\partial t}+\mathbb{L}f_j\right)(u,\tilde{S}(u),X_{u})du,
\end{align*}
is a local martingale for each $j=1,2$. 
Let us introduce a sequence of stopping times $\{\tau_n\}_n$, where $\tau_n$ represents the exit time of $\tilde{S}$ (starting from $s$ at time $t$) from an open neighbourhood of $s$ where the modulus of the functions $f_1$, $f_2$, and their first order time derivative, all first and second order partial space derivatives are bounded by $n$. Thus in the expression of $M^{f_j}_{t' \wedge \tau_n}$, the boundedness of $f_j$ and its partial derivatives may be assumed. Hence, by an argument similar to that appearing in the proof of Lemma \ref{lemma:Ntmartingale}, $\{M^{f_j}_{t' \wedge \tau_n}\}_{t'\ge t}$ is a martingale for each $n$. We also note that for every $t'\ge t$, $E[M^{f_j}_{t' \wedge \tau_n}]=E[M^{f_j}_{t\wedge \tau_n}]$. But $M^{f_j}_{t\wedge \tau_n}=M^{f_j}_{t}= N^{f_j}_{t}=f_j(t,s,i)$. So, for each $j$
\begin{equation}
\nonumber E(N^{f_j}_{t'\wedge \tau_n}) = f_j (t,s,i)+ E\left[\int_{t}^{t'\wedge \tau'\wedge \tau_n} e^{-\int_{t}^{u} r(X_{u'})du'} \left(\frac{\partial f_j}{\partial t}+\mathbb{L}f_j\right)(u,\tilde{S}(u),X_{u})du\right].
\end{equation}
The difference of above equations for $j=1$ and $2$ gives
\begin{align} \label{eq:far_field_ito3}
E\left[e^{-\int_{t}^{t'\wedge \tau' \wedge \tau_n}r(X_u)du}(f_1-f_2) (t'\wedge \tau'\wedge \tau_n, \tilde{S}(t'\wedge \tau'\wedge \tau_n),X_{t'\wedge \tau'\wedge \tau_n})\right] \hspace*{0.8in}\nonumber \\
=(f_1-f_2) (t,s,i)+E\Bigg[\int\limits_{t}^{t'\wedge \tau'\wedge \tau_n} e^{-\int_{t}^{u} r(X_{u'})du'} \left \{\left(\frac{\partial f_1}{\partial t}+\mathbb{L}f_1\right)-\left(\frac{\partial f_2}{\partial t}+\mathbb{L}f_2\right)\right\} \nonumber\\
(u,\tilde{S}(u),X_{u})du \Bigg] \leq (f_1-f_2)(t,s,i),
\end{align}
as $f_1$ and $f_2$ satisfy (\ref{eq:far_field_3}) and (\ref{eq:far_field_4}) respectively. On the other hand, $\tau_n\to \infty$ almost surely as $n\to \infty$. Therefore, due to the growth constraint on $f_j$ and \eqref{DMI}, the left side of (\ref{eq:far_field_ito3}) converges as $t' \uparrow T$ and $n\to \infty$ to 
$
E\left[e^{-\int_{t}^{T\wedge \tau'} r(X_u)du} (f_1-f_2) (T\wedge \tau',\right. $ $ \left.\tilde{S}(T\wedge \tau'),X_{T\wedge \tau'})\right],
$
which is non-negative due to the assumption that $f_1 \geq f_2$ at $t=T$, and on $(0, T) \times \partial\mathcal{U} \times \mathcal{X}$. Thus the right side of (\ref{eq:far_field_ito3}), i.e., $(f_1-f_2)(t,s,i)$ is non-negative too for every fixed $(t,s,i)\in (0,T) \times \mathcal{U} \times \mathcal{X}$.
\end{proof}
\par By following \cite{KaN00}, we introduce a parameterized function that satisfies a relevant partial differential inequality for certain choices of the parameters. In Lemmas \ref{lemma:near_feild1} and \ref{lemma:near_feild2}, we specify those ranges of parameters along with detailed proofs. For each $l\in \{1,2,...,d\}$, consider scalars $\epsilon_l > 0$, $\gamma_l\geq 0$, and $k_l \geq 1$ and a function $y_l:[0,T]\times R\times \mathcal{X} \rightarrow \mathbb{R}^+$ such that
\begin{align}\label{eq:yl}
y_{l}(t,s,i)=\frac{1}{\sqrt{T+\epsilon_l-t}}\exp{\left[-\gamma_l \left(\ln\frac{s_l}{k_ls_l^u}\right)^2/(T+\epsilon_l-t)\right]}, %\quad s_l < s_l^u,
\end{align}
 which essentially depends only on time and the $l$th component of space variable.
 Since, this is constant on $\mathcal{X}$, the term $\sum_{j=1}^k\lambda_{ij}y_l(t,s,j)$, which appears in $\mathbb{L}y_l(t,s,i)$, is zero because the row sums are zero in $\Lambda=[\lambda_{ij}]$. Furthermore, we have
\begin{align}\label{eq:yl_deri}
\left\{
\begin{array}{rl}
\frac{\partial y_{l}}{\partial t}(t,s,i)&=\frac{1}{(T+\epsilon_l-t)^2}\left[\frac{T+\epsilon_l-t}{2}-\gamma_l\left(\ln\frac{s_l}{k_ls_l^u}\right)^2\right]y_{l}(t,s,i),\\
\frac{\partial y_{l}}{\partial s_l}(t,s,i)&=\frac{-2\gamma_l}{s_l(T+\epsilon_l-t)}\left[\ln\frac{s_l}{k_ls_l^u}\right]y_{l}(t,s,i),\\
\frac{\partial^2 y_{l}}{\partial s^2_l}(t,s,i)&=\frac{2\gamma_l}{s^2_{l}(T+\epsilon_l-t)^2}\left[\left(\ln\frac{s_l}{k_ls_l^u}-1\right)(T+\epsilon_l-t)+2\gamma_l\left(\ln\frac{s_l}{k_ls_l^u}\right)^2\right]y_l(t,s,i).
\end{array}\right.
\end{align}
We recall from (\ref{eq:operator}) that $a_{ll}(i)=\sum_{j=1}^d\sigma_{lj}^2(i)$. For each $l\in \{1,2,...,d\}$, we set 
\begin{equation}\label{eq:dl}
D_l:=\min\{a_{ll}(i)-2r(i): i \in \mathcal{X}\}.
\end{equation}
\begin{lemma}\label{lemma:near_feild1}
Fix an $l\in \{1,2,...,d\}$. If $D_l$ as in (\ref{eq:dl}) is positive, we set
\begin{align}\label{eq:parameters}
\gamma_l:=\frac{1}{2\ \underset{i}{\max}\{a_{ll}(i)\}}, \textrm{ and } k_l:=\exp{\left(\frac{\underset{i}{\max}\{a_{ll}(i)\}}{D_l}\right)}.
\end{align}
Then we have on $(0,T)\times R\times \mathcal{X}$ for any $\epsilon_l>0$
\begin{equation}\label{eq:near_field_inequality}
\frac{\partial y_l}{\partial t}(t,s,i)+\mathbb{L}y_l(t,s,i) \leq 0.\end{equation}
\end{lemma}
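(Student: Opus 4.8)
The plan is to compute $\frac{\partial y_l}{\partial t}+\mathbb{L}y_l$ explicitly, divide through by the strictly positive quantity $y_l$, and verify that the resulting expression is a polynomial in $1/(T+\epsilon_l-t)$ whose coefficients are all non-positive under the stated choices of $\gamma_l$ and $k_l$. First I would simplify $\mathbb{L}y_l$. Since $y_l$ depends only on $t$ and the $l$th space coordinate $s_l$ and is constant in the regime variable $i$, the coupling term $\sum_{j}\lambda_{ij}y_l(t,s,j)$ vanishes (the rows of $\Lambda$ sum to zero), the first-order term collapses to $r(i)s_l\,\frac{\partial y_l}{\partial s_l}$, and the second-order term collapses to $\tfrac12 a_{ll}(i)s_l^2\,\frac{\partial^2 y_l}{\partial s_l^2}$, exactly as noted in the text preceding the lemma.

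Substituting the formulas \eqref{eq:yl_deri} for the derivatives of $y_l$, writing $\tau:=T+\epsilon_l-t>0$ and $\xi:=\ln\!\big(s_l/(k_l s_l^u)\big)$, dividing by $y_l>0$, and collecting powers of $1/\tau$, one obtains
\begin{equation*}
\frac{1}{y_l}\left(\frac{\partial y_l}{\partial t}+\mathbb{L}y_l\right)(t,s,i)=\frac{\gamma_l\xi^2\big(2a_{ll}(i)\gamma_l-1\big)}{\tau^2}+\frac{\tfrac12-a_{ll}(i)\gamma_l+\gamma_l\xi\big(a_{ll}(i)-2r(i)\big)}{\tau}-r(i).
\end{equation*}
This rearrangement is the only genuine computation in the argument, and it is routine once the derivative formulas are in hand.

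It then remains to check that each of the three coefficients above is non-positive. The constant term $-r(i)$ is $\le 0$ because $r\ge 0$. The coefficient of $1/\tau^2$ equals $\gamma_l\xi^2\big(2a_{ll}(i)\gamma_l-1\big)$, and the choice $\gamma_l=1/\big(2\max_{i'}a_{ll}(i')\big)$ forces $2a_{ll}(i)\gamma_l-1=a_{ll}(i)/\max_{i'}a_{ll}(i')-1\le 0$, so this term is $\le 0$. For the coefficient of $1/\tau$, note that on $R$ we have $s_l<s_l^u$ and $k_l>1$, hence $\xi<-\ln k_l<0$; moreover the hypothesis $D_l>0$ gives $a_{ll}(i)-2r(i)\ge D_l>0$ for every $i$. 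Consequently $\gamma_l\xi\big(a_{ll}(i)-2r(i)\big)\le \gamma_l(-\ln k_l)D_l=-\gamma_l\max_{i'}a_{ll}(i')=-\tfrac12$, where the first equality uses $\ln k_l=\max_{i'}a_{ll}(i')/D_l$ and the last uses the value of $\gamma_l$; combining this with $\tfrac12-a_{ll}(i)\gamma_l\le\tfrac12$ shows the $1/\tau$ coefficient is $\le 0$ as well. Since $\tau>0$ and all three coefficients are non-positive, the bracketed quantity is non-positive; multiplying back by $y_l>0$ yields \eqref{eq:near_field_inequality}.

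I do not expect a real obstacle here. The only point requiring care is the bookkeeping in the collection step and keeping track of the signs of $\xi$ and of $a_{ll}(i)-2r(i)$; this is precisely where the explicit formulas \eqref{eq:parameters} for $\gamma_l$ and $k_l$ are engineered so that the $1/\tau^2$ coefficient and the $1/\tau$ coefficient each turn non-positive, while the only genuinely non-negative piece $\tfrac12-a_{ll}(i)\gamma_l$ is absorbed by the strictly negative contribution coming from $\xi<-\ln k_l$.
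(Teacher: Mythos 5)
Your proof is correct and follows essentially the same route as the paper: expand $\tfrac{\partial y_l}{\partial t}+\mathbb{L}y_l$ using \eqref{eq:yl_deri}, note the regime-coupling term vanishes, and show that the choice of $k_l$ makes the negative contribution $\gamma_l\xi(a_{ll}(i)-2r(i))\le -\tfrac12$ absorb the lone positive term $\tfrac12$. Your organization by powers of $1/(T+\epsilon_l-t)$ is just a cosmetic repackaging of the paper's term-by-term sign check.
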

\begin{proof}
Using the expressions in  \eqref{eq:yl_deri}, the LHS of (\ref{eq:near_field_inequality}) becomes $y_l$ times the following term
\begin{align}\label{eq:lemma_18_lessthan}
&\gamma_l\left(2a_{ll}(i)\gamma_l-1\right)\left(\ln \frac{s_l}{k_l s^u_l}\right)^2+\gamma_l\left(a_{ll}(i)-2r(i)\right)(T+\epsilon_l-t)\left(\ln \frac{s_l}{k_l s^u_l}\right)\\
&+\left(\frac{1}{2}-\gamma_la_{ll}(i)-r(i)(T+\epsilon_l-t)\right)(T+\epsilon_l-t),\nonumber
\end{align}
for all $(t,s,i)\in (0,T)\times R\times \mathcal{X}$. After substituting $\gamma_l$, the above expression becomes
\begin{align*}
&\frac{1}{2\underset{i}{\max}\{a_{ll}(i)\}}\left(\frac{2a_{ll}(i)}{2\underset{i}{\max}\{a_{ll}(i)\}}-1\right)\left(\ln \frac{s_l}{k_l s^u_l}\right)^2+\frac{\left(a_{ll}(i)-2r(i)\right)}{2\underset{i}{\max}\{a_{ll}(i)\}}(T+\epsilon_l-t)\\&\left(\ln \frac{s_l}{k_l s^u_l}\right)+\left(\frac{1}{2}-\frac{a_{ll}(i)}{2\underset{i}{\max}\{a_{ll}(i)\}}-r(i)(T+\epsilon_l-t)\right)(T+\epsilon_l-t).
\end{align*}
By substituting $k_l$ in the second additive term of the above expression, we get
\begin{align*}
&\frac{1}{2\underset{i}{\max}\{a_{ll}(i)\}}\left(\frac{2a_{ll}(i)}{2\underset{i}{\max}\{a_{ll}(i)\}}-1\right)\left(\ln \frac{s_l}{k_l s^u_l}\right)^2+\frac{\left(a_{ll}(i)-2r(i)\right)}{2\underset{i}{\max}\{a_{ll}(i)\}}(T+\epsilon_l-t)\\
&\times \left(\ln s_l-\ln s^u_l\right) -\frac{\left(a_{ll}(i)-2r(i)\right)}{2\underset{i}{\max}\{a_{ll}(i)\}}(T+\epsilon_l-t)\frac{\underset{i}{\max}\{a_{ll}(i)\}}{D_l}+\frac{1}{2}(T+\epsilon_l-t)\\
&-\frac{a_{ll}(i)}{2\underset{i}{\max}\{a_{ll}(i)\}}(T+\epsilon_l-t)-r(i)(T+\epsilon_l-t)^2.
\end{align*}
Note that except the fourth term, all other terms are non-positive because $(T+\epsilon_l-t)$ is always positive. Since $D_l>0$, it is clear that the third term dominates the fourth term in magnitude. Hence \eqref{eq:lemma_18_lessthan} is non-positive, and the result follows. 
\end{proof}

\begin{lemma}\label{lemma:near_feild2}
Let us fix $l\in \{1,\ldots, d\}$, and assume that $D_l\leq 0$. Fix a point $(\hat{t},\hat{s}) \in [0,T)\times R $ such that
\begin{align}\label{sl'}
\ln \frac{s^u_l}{\hat{s}_l}>-D_l(T-\hat{t}).
\end{align}
Using this point, we set the following values of the parameters
\begin{align} \label{parameters}
\epsilon_l=\frac{(T-\hat{t})\ln k_l}{\ln \frac{s^u_l}{\hat{s}_l}}, \quad \gamma_l < \min\left( \frac{1}{2\underset{i}{\max}\{a_{ll}(i)\}}\left(1+\frac{D_l\epsilon_l}{\ln k_l}\right), \frac{\epsilon_l}{2\ln k_l}\right).
\end{align}
Then let $y_l$ be as in \eqref{eq:yl}, then \eqref{eq:near_field_inequality} holds for all $(t,s,i)\in(0,T)\times R\times \mathcal{X}$, with these parameter values and for sufficiently large $k_l$.
\end{lemma}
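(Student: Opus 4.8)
The plan is to reduce \eqref{eq:near_field_inequality} to a one-variable quadratic inequality that is attained on the face $s_l=s_l^u$, and then to force that inequality by taking $k_l$ large. Reproducing the computation behind \eqref{eq:lemma_18_lessthan} in the proof of Lemma~\ref{lemma:near_feild1} (which used only \eqref{eq:yl_deri} and none of the parameter values), the left side of \eqref{eq:near_field_inequality} is a strictly positive multiple of \eqref{eq:lemma_18_lessthan}; hence it suffices to show \eqref{eq:lemma_18_lessthan}$\,\le 0$. Abbreviating $a:=a_{ll}(i)$, $r:=r(i)$, $\tau:=T+\epsilon_l-t\in(\epsilon_l,T+\epsilon_l)$, $m:=\ln\frac{k_ls_l^u}{s_l}$, and $L:=\ln\frac{s_l^u}{\hat s_l}>0$ (so $\epsilon_l=\frac{(T-\hat t)\ln k_l}{L}$), the quantity \eqref{eq:lemma_18_lessthan} becomes
\[
Q(m,\tau,i):=\gamma_l(2a\gamma_l-1)\,m^2+\gamma_l(2r-a)\,\tau m+\Big(\tfrac12-\gamma_l a\Big)\tau-r\tau^2 ,
\]
and, since $s_l<s_l^u$ and $k_l>1$, every admissible $m$ satisfies $m>\ln k_l>0$. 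Observe that \eqref{sl'} says precisely that $1+\frac{D_l\epsilon_l}{\ln k_l}=1-\frac{|D_l|(T-\hat t)}{L}\in(0,1]$, which is also why a positive $\gamma_l$ obeying \eqref{parameters} exists.

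First I would fix the shape of $Q(\cdot,\tau,i)$. The first bound in \eqref{parameters} gives $2\max_{i}a_{ll}(i)\,\gamma_l<1-\frac{|D_l|(T-\hat t)}{L}$, so for every state
\[
1-2a\gamma_l\ \ge\ 1-2\max_{i}a_{ll}(i)\,\gamma_l\ >\ \frac{|D_l|(T-\hat t)}{L}\ >\ 0 .
\]
In particular $\gamma_l(2a\gamma_l-1)<0$, so $Q(\cdot,\tau,i)$ is a downward parabola with vertex $m^\ast=\frac{(2r-a)\tau}{2(1-2a\gamma_l)}$. If $a\ge 2r$ then $m^\ast\le 0<\ln k_l$; if $a<2r$ (which can only occur when $D_l<0$), then using $2r-a\le -D_l=|D_l|$, $\tau\le T+\epsilon_l$, and the displayed lower bound on $1-2a\gamma_l$, one gets $m^\ast<\frac{L(T+\epsilon_l)}{2(T-\hat t)}=\frac{\ln k_l}{2}+\frac{TL}{2(T-\hat t)}$. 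Hence once $\ln k_l>\frac{TL}{T-\hat t}$, the vertex lies strictly to the left of every admissible $m$, so $Q(\cdot,\tau,i)$ is strictly decreasing there and $\sup_m Q(m,\tau,i)=Q(\ln k_l,\tau,i)$. This reduces the whole problem to the single value $m=\ln k_l$, i.e. to the face $s_l=s_l^u$.

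It then remains to show $Q(\ln k_l,\tau,i)\le 0$ for all states and all $\tau\in(\epsilon_l,T+\epsilon_l)$ once $k_l$ is large. I would discard the nonpositive pieces $-r\tau^2$, $-\gamma_l a\tau$ (and also $\gamma_l(2r-a)\tau\ln k_l$ when $a\ge 2r$), bound $2r-a\le|D_l|$ and $\tau\le T+\epsilon_l$ in what remains, and substitute $\epsilon_l=\frac{(T-\hat t)\ln k_l}{L}$. The outcome is a degree-two polynomial in the single scalar $\ln k_l$ whose leading coefficient is $-\gamma_l(1-2a\gamma_l)<0$ in the case $a\ge 2r$, and
\[
\gamma_l\!\left(-(1-2a\gamma_l)+\frac{|D_l|(T-\hat t)}{L}\right)<0
\]
in the case $a<2r$, the strict negativity in the second case being exactly the sharp lower bound on $1-2a\gamma_l$ recorded above. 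Since $\mathcal X$ is finite, one threshold on $k_l$ serves all states, and above it $Q(\ln k_l,\tau,i)<0$ uniformly in $\tau$; combined with the previous paragraph this yields \eqref{eq:near_field_inequality}.

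The step I expect to be the main obstacle is the bookkeeping created by $\epsilon_l$ itself growing like $\ln k_l$: the range of $\tau$ and the vertex $m^\ast$ both move with $k_l$, so one must verify that the decisive $(\ln k_l)^2$ contribution genuinely dominates the $\epsilon_l$-induced linear growth rather than being cancelled by it; this is precisely what \eqref{sl'} and the first threshold in \eqref{parameters} are designed to guarantee. A secondary delicate point is the positive sign of the cross term $\gamma_l(2r-a)\tau m$ at states with $a<2r$, which must be absorbed by the strictly negative $m^2$ term. (The second threshold in \eqref{parameters} is not needed for this lemma; it enters later, when $y_l$ is evaluated at the base point $(\hat t,\hat s)$.)
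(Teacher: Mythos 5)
Your argument is correct, and it reaches \eqref{eq:near_field_inequality} by a genuinely different route than the paper's own proof. Both proofs start from the same reduction (the computation in Lemma~\ref{lemma:near_feild1} showing that \eqref{eq:near_field_inequality} amounts to the non-positivity of \eqref{eq:lemma_18_lessthan}), but the paper then splits \eqref{eq:lemma_18_lessthan} into the squared-logarithm term, the coefficient of $r(i)$, the coefficient of $a_{ll}(i)$, and the leftover $\tfrac12(T+\epsilon_l-t)$, arguing that each piece (or a dominating combination) is non-positive for large $k_l$; that argument uses \emph{both} thresholds in \eqref{parameters}, the second one precisely to make the multiplier of $\ln k_l$ in the $r(i)$-coefficient negative. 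You instead read \eqref{eq:lemma_18_lessthan} as a concave quadratic $Q(m,\tau,i)$ in $m=\ln\frac{k_ls_l^u}{s_l}>\ln k_l$, use \eqref{sl'} and the \emph{first} threshold to get $1-2a_{ll}(i)\gamma_l>\frac{|D_l|(T-\hat t)}{L}$, locate the vertex to the left of $\ln k_l$ once $\ln k_l>\frac{TL}{T-\hat t}$, and then check the single value $Q(\ln k_l,\tau,i)$, whose $(\ln k_l)^2$-coefficient is strictly negative by that same inequality. This buys two things. First, you only invoke the first constraint in \eqref{parameters}, so your statement is marginally stronger (the second constraint is indeed only needed downstream, and in the parameter choice of Theorem~\ref{theorem:pointwise_estimate} the two thresholds in fact coincide). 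Second, and more substantively, your argument is uniform in $s_l$ near the lower edge of $R$: the paper's claim that the grouped coefficient of $r(i)$ is negative for \emph{all} $(t,s,i)$ requires $\sup_{s_l}2\gamma_l\ln\frac{s_l^u}{s_l}<\infty$ and therefore cannot be rescued by enlarging $k_l$ when $s_l^b=0$ (a case the paper explicitly allows and even uses numerically), whereas in your proof that regime is absorbed automatically by the strictly concave $-m^2$ term via the vertex location; in this sense your decomposition repairs a gap in the paper's. Two small points to tighten: you need $\gamma_l>0$ strictly (not just an admissible $\gamma_l\ge 0$) so that the leading coefficient $-\gamma_l\bigl(1-2a_{ll}(i)\gamma_l\bigr)$ is genuinely negative — such a choice exists precisely because \eqref{sl'} makes the right-hand side of \eqref{parameters} positive — and when $D_l=0$ your displayed chain should end in $\ge 0$ rather than $>0$, which is harmless since then $a_{ll}(i)\ge 2r(i)$ for every $i$ and only your first case occurs.
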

\begin{proof} First, we argue that $(\hat{t},\hat{s})$ satisfying (\ref{sl'}) exists in $[0,T)\times R $. The left side of \eqref{sl'} is clearly positive for any $\hat{s}\in R$. Hence, one may choose $\hat{t}$ sufficiently closer to $T$ so that \eqref{sl'} holds. As shown in the proof of Lemma \ref{lemma:near_feild1}, we need to show that \eqref{eq:lemma_18_lessthan} is non-positive. We first consider the first term of (\ref{eq:lemma_18_lessthan}). Since  $D_l\leq 0$ and $T-\hat{t}>0$, using \eqref{sl'}, $0< \ln \frac{s^u_l}{\hat{s}_l}+D_l(T-\hat{t}) \le \ln \frac{s^u_l}{\hat{s}_l}$. Thus we have 
$ 0<1+\frac{D_l(T-\hat{t})}{\ln \frac{s^u_l}{\hat{s}_l}} \le 1$ as $\ln \frac{s^u_l}{\hat{s}_l}$ is positive. 
Therefore, a non-negative $\gamma_l$ can be chosen satisfying \eqref{parameters}, and hence 
$ 0\le a_{ll}(i)\gamma_l < 1/2$. That is, $(2a_{ll}(i)\gamma_l-1) < 0$. Hence the first term of (\ref{eq:lemma_18_lessthan}) is negative. We next consider the coefficient of $r(i)$ from second and third term of (\ref{eq:lemma_18_lessthan}) and simplify as $-2\gamma_l(T+\epsilon_l-t)\ln \frac{s_l}{k_ls^u_l}-(T+\epsilon_l-t)^2$, that is $\left[2\gamma_l\ln \left(\frac{k_ls^u_l}{s_l}\right)-(T+\epsilon_l-t)\right](T+\epsilon_l-t)$. By substituting the value of $\epsilon_l$ from \eqref{parameters}, the above is negative provided 
$$2\gamma_l\ln\frac{s^u_l}{s_l} -(T-t) + \left( 2\gamma_l - (T-\hat{t})/\ln \frac{s^u_l}{\hat{s}_l}\right) \ln k_l < 0.$$
We note that due to \eqref{parameters}, the multiplier of $\ln k_l$ in the above expression is negative for all $(t,s,i)\in(0,T)\times R\times \mathcal{X}$. Therefore, for a sufficiently large  $k_l$ the above inequality holds, i.e., the coefficient of $r(i)$ is negative. Next, we rewrite the coefficient of $a_{ll}(i)$ from second and third term of (\ref{eq:lemma_18_lessthan}) as  
\begin{align*}
   \gamma_l(T+\epsilon_l-t)\left( \ln\left(\frac{s_l}{k_ls^u_l}\right)-1\right),
\end{align*}
which is negative for all $k_l\ge 1$ irrespective to the magnitude of $\gamma_l$ for all $(t,s,i)\in(0,T)\times R\times \mathcal{X}$. Furthermore, by using the expression of $\epsilon_l$ we see that the magnitude of the above term grows as $\mathcal{O}(\ln k_l)^2$ as $k_l\to \infty$. Hence this dominates the only other remaining term in \eqref{eq:lemma_18_lessthan}, i.e., $\frac{1}{2}(T+\epsilon_l-t)$
which grows as $\mathcal{O}(\ln k_l)$. Thus for sufficiently large $k_l$ the value of the expression in \eqref{eq:lemma_18_lessthan} is negative  for all $(t,s,i)\in(0,T)\times R\times \mathcal{X}$.
\end{proof}
\begin{figure}[h!]
    \centering
    \includegraphics[width=0.3\textwidth]{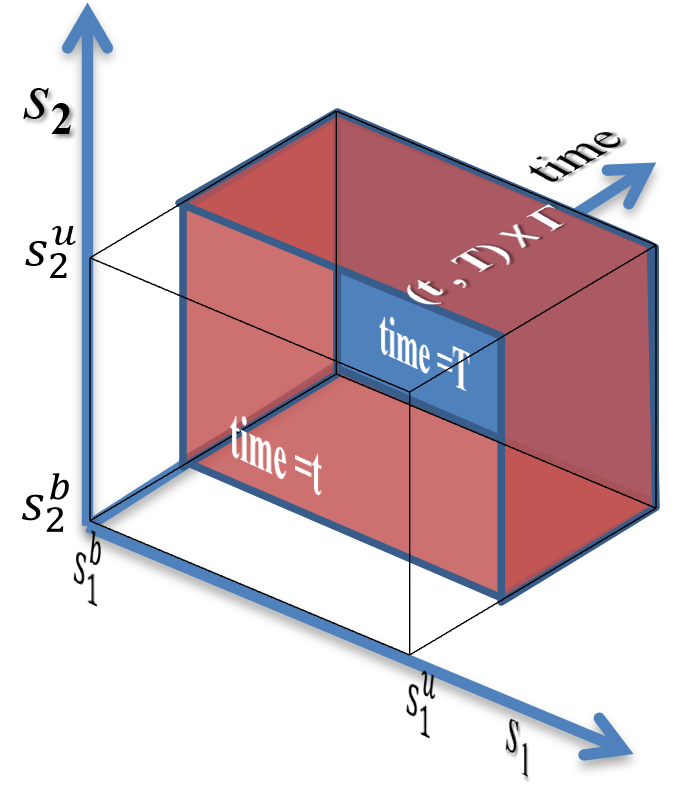}
    \caption{Illustration of $(t, T ) \times \Gamma$ for $d=2$.}
    \label{fig.boundary}
\end{figure}
\noindent Next, we obtain a point-wise error estimate at an interior point of the domain by considering a general $y_l$ satisfying \eqref{eq:near_field_inequality}.
Due to this generality, the error bound obtained in the following theorem is abstract in nature, and is not expressed explicitly in terms of the model parameters. This limitation is addressed in Theorem \ref{theorem:pointwise_estimate}. The point-wise error is to be estimated in terms of the maximum boundary error where the supremum should be taken over $(t, T ) \times \Gamma \times \mathcal{X}$. This set has been illustrated for a single-regime and two-dimension case in Figure \ref{fig.boundary}.

\begin{theorem}\label{theorem:near_feild_1}
Let $v$ and $w$ satisfy (\ref{eq:main_ibvp}) and (\ref{eq:main_ibvp_ini}). Let $y:[0,T]\times R\times \mathcal{X}\rightarrow \mathbb{R}^+$ be given by  $y(t,s,i):=\sum_{l=1}^dC_ly_l(t,s,i)$, $C_l\geq0$, where for each $l$, $y_l$ satisfies  \eqref{eq:near_field_inequality} on $(0,T)\times R\times \mathcal{X}$. Then for $(t,s,i)\in (0,T)\times R \times \mathcal{X}$ we have 
\begin{equation}
\abs{v(t,s,i)-w(t,s,i)}\leq \sup_{(t',s',i') \in (t,T)\times \Gamma \times \mathcal{X}}\left[\frac{\abs{v(t',s',i')-w(t',s',i')}}{y(t',s',i')}\right]y(t,s,i).
\end{equation}
\end{theorem}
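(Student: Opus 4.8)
The plan is to use $y$ as a two-sided barrier and reduce the estimate to two applications of the comparison principle, Lemma \ref{lemma:far_feild_lemma1}. First I would set $u:=v-w$; since $v$ and $w$ both solve the linear system \eqref{eq:main_ibvp} in the interior and share the terminal data \eqref{eq:main_ibvp_ini}, the difference $u$ is a classical solution of the homogeneous system $\left(\frac{\partial u}{\partial t}+\mathbb{L}u\right)(t,s,i)=0$ on $(0,T)\times R\times\mathcal{X}$ with $u(T,\cdot,\cdot)\equiv 0$, and being a difference of classical solutions on the bounded domain $R$ it is bounded there, hence trivially of at most linear growth, and continuous on $[0,T]\times(\bar R\cap(0,\infty)^d)\times\mathcal{X}$. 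Fixing the interior point $(t,s,i)$ at which the bound is claimed, I would put
\[
M:=\sup_{(t',s',i')\in(t,T)\times\Gamma\times\mathcal{X}}\frac{\abs{u(t',s',i')}}{y(t',s',i')},
\]
the coefficient of $y(t,s,i)$ in the asserted inequality; note $y>0$, and if $M=\infty$ there is nothing to prove, so assume $M<\infty$.

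Next I would verify that $My$ and $\pm u$ satisfy the hypotheses of Lemma \ref{lemma:far_feild_lemma1} with $\mathcal{U}=R$. Linearity of $\frac{\partial}{\partial t}+\mathbb{L}$ together with $C_l\ge 0$, $M\ge 0$ and \eqref{eq:near_field_inequality} for each $y_l$ gives $\left(\frac{\partial}{\partial t}+\mathbb{L}\right)(My)=M\sum_{l=1}^d C_l\left(\frac{\partial y_l}{\partial t}+\mathbb{L}y_l\right)\le 0$, while $\left(\frac{\partial}{\partial t}+\mathbb{L}\right)(\pm u)=0$, so $My$ is a supersolution of the required type and each of $\pm u$ is a subsolution. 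Each $y_l$ is smooth in $(t,s)$ on $(0,T)\times R$, continuous and strictly positive up to $\bar R\cap(0,\infty)^d$, and bounded on $R$, so $y$ meets the regularity and growth requirements of the lemma, and so does $u$. On the parabolic boundary one has $My(T,\cdot,\cdot)\ge 0=\pm u(T,\cdot,\cdot)$ at the terminal time, and $My\ge\abs{u}\ge\pm u$ on $(t,T)\times\Gamma\times\mathcal{X}$ directly from the definition of $M$.

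Finally I would invoke Lemma \ref{lemma:far_feild_lemma1} on the cylinder $(t,T)\times R\times\mathcal{X}$, first with $(f_1,f_2)=(My,u)$ to obtain $u(t,s,i)\le My(t,s,i)$ and then with $(f_1,f_2)=(My,-u)$ to obtain $-u(t,s,i)\le My(t,s,i)$; combining them gives $\abs{v(t,s,i)-w(t,s,i)}=\abs{u(t,s,i)}\le My(t,s,i)$, which is precisely the assertion. The one step needing care is that the supremum defining $M$ ranges only over $(t,T)\times\Gamma\times\mathcal{X}$, not over the larger set $(0,T)\times\Gamma\times\mathcal{X}$: this requires the time-restricted form of Lemma \ref{lemma:far_feild_lemma1}, which is legitimate because its proof tracks the process $\{\tilde S(t')\}_{t'\ge t}$ started at $(t,s,i)$ only until it first exits $R$ or reaches $T$, so it never samples $u$ at times before $t$ and (since $\tilde S$ is a regime-switching geometric Brownian motion confined to $(0,\infty)^d$) leaves $R$ only through $\Gamma$, even when some $s_l^b=0$; rerunning that proof on $[t,T]$ in place of $[0,T]$ carries over verbatim.
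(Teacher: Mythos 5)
Your proposal is correct and follows essentially the same route as the paper: define the barrier $My$ with $M$ the boundary supremum, verify it is a supersolution via \eqref{eq:near_field_inequality} and linearity, and apply Lemma \ref{lemma:far_feild_lemma1} twice (to $v-w$ and to $w-v$) on the time-restricted cylinder $(t,T)\times R\times\mathcal{X}$. Your explicit remark that the comparison lemma must be run on $[t,T]$ rather than $[0,T]$, and that this is harmless because its proof starts the process at time $t$ and $\tilde S$ exits $R$ only through $\Gamma$, is a point the paper uses implicitly but does not spell out.
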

\begin{proof}
For a fixed $0<t<T$, let us define
\begin{equation}\label{eq:f_def}
f(\tau,s,i):=\sup_{(t',s',i') \in (t,T)\times \Gamma \times \mathcal{X}}\left[\frac{\abs{v(t',s',i')-w(t',s',i')}}{y(t',s',i')}\right]y(\tau,s,i),
\end{equation}
for all $(\tau,s,i)\in (t,T)\times R \times \mathcal{X}$. On this domain we have using (\ref{eq:near_field_inequality}),
\[
\frac{\partial f}{\partial \tau}+\mathbb{L}f=\sup_{(t',s',i') \in (t,T)\times \Gamma \times \mathcal{X}}\left[\frac{\abs{v(t',s',i')-w(t',s',i')}}{y(t',s',i')}\right]\sum_{l=1}^d C_l \left(\frac{\partial y_l}{\partial \tau}+\mathbb{L}y_l\right) \leq 0,
\]
as $C_l\ge 0$ and $y$ is positive. On the other hand $v-w$, satisfies 
\begin{align*}
\frac{\partial (v-w)}{\partial \tau}+\mathbb{L}(v-w)= 0,\:\:\forall \:\:(\tau,s,i)\in (t,T)\times R \times \mathcal{X}.
\end{align*}
Hence, by Lemma \ref{lemma:far_feild_lemma1}, we get 
\begin{align}\label{fy}
f(\tau,s,i) \geq (v-w)(\tau,s,i) \ \ \forall \ \ (\tau,s,i)\in (t,T)\times R \times \mathcal{X},
\end{align}
as $f \geq 0= (v-w)$ on $\tau=T$ as well as due to (\ref{eq:f_def})  $f(\tau,s,i) \ge \abs{v(\tau,s,i)-w(\tau,s,i)}\ge v(\tau,s,i)-w(\tau,s,i)$ for any $(\tau,s,i) \in (t,T)\times \Gamma \times \mathcal{X}$. Now we observe that the above argument follows if we replace $v-w$ by $w-v$. Thus as in \eqref{fy} we obtain $f\ge w-v$ on $(t,T)\times R \times \mathcal{X}$. Therefore by combining this and \eqref{fy} we get $f\ge |v-w|$ on $(t,T)\times R \times \mathcal{X}$. 
As $t\in (0,T)$ has been fixed arbitrarily,   $f(t,s,i) \geq |(v-w)|(t,s,i)$ for any $0< t <T$, $s\in R$ and $i\in \mathcal{X}$.
\end{proof}
\par Lemmas \ref{lemma:near_feild1} and \ref{lemma:near_feild2} show that for any values of model parameters, one can suitably define a function $y_l$ on $[0,T]\times R\times \mathcal{X}$ as in \eqref{eq:yl} satisfying \eqref{eq:near_field_inequality}. With the help of such functions $y_1, \ldots, y_d$, a point-wise error estimate at an interior point is derived in terms of the maximum error on the far boundary in the following theorem. Unlike in Theorem \ref{theorem:near_feild_1}, this estimate is explicitly expressed in terms of the model parameters. In this connection, we recall that Theorem 4 in \cite{KaN00}, also gives an error estimate serving a similar purpose. It also turns out that the consideration of regime switching extension does not prevent one to derive an expression identical to that in \cite{KaN00} by mimicking the arguments therein. However, that estimate works only on a strictly smaller subdomain satisfying \eqref{sl'}. So, we propose a different estimate that works globally. Hence, the expressions and derivations of these two estimates are significantly different.
\begin{theorem}\label{theorem:pointwise_estimate}
Let $v$ and $w$ be the classical solutions of (\ref{eq:main_ivp})-(\ref{eq:main_ivp_1}) and (\ref{eq:main_ibvp})-(\ref{eq:main_ibvp_boun}) respectively. Then at each point $(t',s',i')\in [0,T]\times R \times \mathcal{X}$, we have
\begin{align}\label{eq:vminusw}
\nonumber &\abs{v(t',s',i')-w(t',s',i')} \le  \sup_{[t',T)\times \Gamma\times \mathcal{X}}   \abs{v-w} \\
& \times \sum_{l=1}^d \exp \left(\frac{ - \ln(\frac{s_l^u}{s'_l}) \left(\frac{D_l^+}{\underset{i}{\max} \{a_{ll}(i)\}} \ln(\frac{s_l^u}{s'_l})+2\right)
+ (\underset{i}{\max} \{a_{ll}(i)\}+|D_l|)(T-t')}{2\left(D_l^+ (T-t')+\frac{\underset{i}{\max} \{a_{ll}(i)\}}{(\underset{i}{\max} \{a_{ll}(i)\}+D_l^+)}\right)}\right),
\end{align}
where $D_l^+ =\max\{D_l, 0\}$.
\end{theorem}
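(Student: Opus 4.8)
The plan is to obtain \eqref{eq:vminusw} from the abstract estimate of Theorem~\ref{theorem:near_feild_1} by feeding it an explicit supersolution $y$ tailored to the model data. First, dispose of the trivial cases: if $t'=T$ then $v-w=\mathcal{K}-\mathcal{K}=0$ on $R$, and the case $t'=0$ follows from the interior case $t'\in(0,T)$ by continuity of $v-w$ and of the right-hand side of \eqref{eq:vminusw}. So fix $(t',s',i')\in(0,T)\times R\times\mathcal{X}$ and write $\tau:=T-t'>0$, $a_l:=\max_i a_{ll}(i)>0$, and $\rho_l:=\ln(s_l^u/s_l')>0$. I shall use that, excluding the coordinate hyperplanes (as in the standard option-pricing truncation), $\Gamma=\bigcup_{l=1}^d\{s\in\overline{R}:s_l=s_l^u\}$.

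For each $l$ I build a positive function on $[0,T]\times R\times\mathcal{X}$ satisfying $\partial_t(\cdot)+\mathbb{L}(\cdot)\le 0$ there and $\ge 1$ on the $l$-th upper face $\{s_l=s_l^u\}$ for all $t\le T$. If $D_l\le 0$, take the affine-in-$s_l$ function $\tilde y_l(t,s,i):=\tfrac{s_l}{s_l^u}\exp\!\big(\tfrac{(a_l+|D_l|)(T-t)}{2}\big)$; since $\Lambda$ has zero row sums and $\tilde y_l$ is affine in $s_l$ and independent of the other coordinates and of $i$, a direct computation gives $\partial_t\tilde y_l+\mathbb{L}\tilde y_l=-\tfrac{a_l+|D_l|}{2}\tilde y_l\le 0$, and on $\{s_l=s_l^u\}$ one has $\tilde y_l=e^{(a_l+|D_l|)(T-t)/2}\ge 1$. (So, unlike the \cite{KaN00}-type estimate built from Lemma~\ref{lemma:near_feild2}, the case $D_l\le 0$ needs only this elementary function here.) If $D_l>0$, take $y_l$ as in \eqref{eq:yl} with the parameters $\gamma_l,k_l$ of \eqref{eq:parameters} (Lemma~\ref{lemma:near_feild1}) and the specific value $\epsilon_l:=\tfrac{a_l}{D_l(a_l+D_l)}>0$: Lemma~\ref{lemma:near_feild1} gives $\partial_t y_l+\mathbb{L}y_l\le 0$, and since $y_l$ does not depend on $i$, on $\{s_l=s_l^u\}$ it equals $g_l(T+\epsilon_l-t)$ with $g_l(u):=u^{-1/2}\exp(-\gamma_l(\ln k_l)^2/u)$; multiplying by $C_l:=\big(\inf_{t\in[t',T]}g_l(T+\epsilon_l-t)\big)^{-1}$ makes $C_l y_l\ge 1$ on $\{s_l=s_l^u\}$ for $t\in[t',T]$. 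Put $y:=\sum_{l:D_l\le 0}\tilde y_l+\sum_{l:D_l>0}C_l y_l$; then $y>0$, $\partial_t y+\mathbb{L}y\le 0$ on $(0,T)\times R\times\mathcal{X}$, and $y\ge 1$ on $(t',T)\times\Gamma\times\mathcal{X}$ (on each upper face the corresponding summand is $\ge 1$ and the rest are non-negative).

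Applying Theorem~\ref{theorem:near_feild_1} with this $y$ (its components satisfy the required inequality \eqref{eq:near_field_inequality}), the weight $\sup[|v-w|/y]$ is at most $\sup_{[t',T)\times\Gamma\times\mathcal{X}}|v-w|$ since $y\ge 1$ there, so
\[
|v(t',s',i')-w(t',s',i')|\ \le\ \sup_{[t',T)\times\Gamma\times\mathcal{X}}|v-w|\cdot\Big(\sum_{l:D_l\le 0}\tilde y_l(t',s',i')+\sum_{l:D_l>0}C_l y_l(t',s',i')\Big).
\]
For $D_l\le 0$, $\tilde y_l(t',s',i')=\tfrac{s_l'}{s_l^u}e^{(a_l+|D_l|)\tau/2}$ is exactly the $l$-th summand of \eqref{eq:vminusw} (there $D_l^+=0$ and $\tfrac{a_l}{a_l+D_l^+}=1$). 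For $D_l>0$, write $C_l y_l(t',s',i')=\tfrac{y_l(t',s',i')}{g_l(T+\epsilon_l-t')}\cdot\tfrac{g_l(T+\epsilon_l-t')}{\min\{g_l(\epsilon_l),\,g_l(T+\epsilon_l-t')\}}$ (the infimum of the unimodal $g_l$ over $[\epsilon_l,\tau+\epsilon_l]$ is at an endpoint). Inserting $\gamma_l,k_l,\epsilon_l$ and $\ln(s_l'/k_l s_l^u)=-(\rho_l+\ln k_l)$, the first factor equals $\exp\!\big[-\tfrac{\rho_l(\frac{D_l}{a_l}\rho_l+2)}{2(D_l\tau+a_l/(a_l+D_l))}\big]$ — the choice of $\epsilon_l$ is what normalises the denominator to $2\big(D_l\tau+\tfrac{a_l}{a_l+D_l}\big)$ — while the second factor equals $\max\big\{(\tfrac{\epsilon_l}{\tau+\epsilon_l})^{1/2}e^{(a_l+D_l)\tau/(2(D_l\tau+a_l/(a_l+D_l)))},\,1\big\}\le e^{(a_l+D_l)\tau/(2(D_l\tau+a_l/(a_l+D_l)))}$. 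Multiplying, $C_l y_l(t',s',i')\le\exp\!\big[\tfrac{-\rho_l(\frac{D_l}{a_l}\rho_l+2)+(a_l+D_l)\tau}{2(D_l\tau+a_l/(a_l+D_l))}\big]$, which is the $l$-th summand of \eqref{eq:vminusw} (here $D_l^+=D_l=|D_l|$, $\rho_l=\ln(s_l^u/s_l')$). Summing over $l$ gives the claim.

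The step I expect to be the main obstacle is this last evaluation for $D_l>0$: pinning down $\epsilon_l=\tfrac{a_l}{D_l(a_l+D_l)}$ (forced by requiring the denominator to come out as $2(D_l\tau+\tfrac{a_l}{a_l+D_l})$) and bounding the time-factor $g_l(T+\epsilon_l-t')/\inf_{t\in[t',T]}g_l(T+\epsilon_l-t)$ via the unimodality of $u\mapsto u^{-1/2}e^{-\gamma_l(\ln k_l)^2/u}$ both require care, whereas the $D_l\le 0$ part and the invocation of Theorem~\ref{theorem:near_feild_1} are routine. A minor additional point is checking that the far boundary carrying $\sup|v-w|$ is covered by $\bigcup_l\{s_l=s_l^u\}$, so that $y\ge 1$ holds on it.
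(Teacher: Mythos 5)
Your argument is correct and reproduces \eqref{eq:vminusw} exactly, but it is not a verbatim replay of the paper's proof, so a comparison is worth recording. For $D_l>0$ you and the paper coincide: you take $\gamma_l$ and $k_l$ from Lemma~\ref{lemma:near_feild1} and your $\epsilon_l=\frac{a_l}{D_l(a_l+D_l)}$ is precisely the paper's choice $\epsilon_l=\frac{\ln k_l}{\max_i a_{ll}(i)+|D_l|}$ specialised to $\ln k_l=a_l/D_l$; your unimodality-of-$g_l$ argument is the same observation the paper makes about the sign of $\partial_t y_l$, and the algebra leading to the $l$-th summand matches. The genuine differences are two. First, you organise the boundary majorisation by normalising each facet supersolution to be $\ge 1$ on its own facet $\Gamma_l$ and then invoking Theorem~\ref{theorem:near_feild_1} with weight $\sup_{[t',T)\times\Gamma\times\mathcal{X}}|v-w|$; this makes the inequality $|v-w|\le y\cdot\sup|v-w|$ on $\Gamma$ transparent (on any $s\in\Gamma$ one summand is $\ge1$ and the rest are non-negative), whereas the paper's bookkeeping via $C_l(t')=\sup_{\Gamma_l}|v-w|/\inf_{\Gamma_l}y_l$ is stated less carefully on $\Gamma\setminus\Gamma_l$. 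Second, and more substantially, for $D_l\le 0$ you bypass Lemma~\ref{lemma:near_feild2} entirely — no auxiliary point $(\hat t,\hat s)$ satisfying \eqref{sl'} and no $k_l\to\infty$ limit of a family of valid bounds — by exhibiting the elementary affine supersolution $\tilde y_l=\frac{s_l}{s_l^u}e^{(a_l+|D_l|)(T-t)/2}$, for which $\partial_t\tilde y_l+\mathbb{L}\tilde y_l=-\frac{a_l+|D_l|}{2}\tilde y_l\le 0$ by inspection and which equals the $l$-th summand of \eqref{eq:vminusw} at $(t',s',i')$ exactly; this is cleaner than the paper's limiting argument and yields the same constant. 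Two small caveats: both routes rely on $\Gamma=\bigcup_l\{s\in\Gamma: s_l=s_l^u\}$, i.e.\ effectively $s_l^b=0$ for all $l$ (you flag this, the paper does not); and Theorem~\ref{theorem:near_feild_1} is stated for summands $y_l$ satisfying \eqref{eq:near_field_inequality} in the context of the specific form \eqref{eq:yl}, so to use your affine summand one should either note that the proof of Theorem~\ref{theorem:near_feild_1} uses nothing beyond positivity and the differential inequality, or apply Lemma~\ref{lemma:far_feild_lemma1} directly. Neither caveat affects correctness.
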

\begin{proof} 
For each $l$, let $y_l$ be as in \eqref{eq:yl} satisfying \eqref{eq:near_field_inequality} on $(0,T)\times R \times \mathcal{X}$. We also denote the far facet of $R$ in the $l$th direction by $\Gamma_l:= \{s\in \Gamma : s_l=s^u_{l}\}$. We note that if for each $t'\in (0,T)$, $C_l(t')$ is defined as $\frac{\sup_{[t',T)\times \Gamma_l\times \mathcal{X}} \abs{v-w}}{\inf_{ (t',T) \times \Gamma_l\times \mathcal{X}}y_l}$,
then for all $t\in [t',T)$ and $s\in \Gamma$, $\abs{v(t,s,i)-w(t,s,i)}$ is less than or equal to
$\sup_{[t',T)\times \Gamma\times \mathcal{X}} \abs{v-w} \le \sum_{l=1}^d \sup_{[t',T)\times \Gamma_l\times \mathcal{X}} \abs{v-w}\le  \sum_{l=1}^d C_l(t') y_l(t,s,i).$
Then from Lemma {\ref{lemma:far_feild_lemma1}}, 
we write for all $(t,s,i)\in (t',T)\times R \times \mathcal{X}$
\begin{align*}
\abs{v(t,s,i)-w(t,s,i)}\leq  \sum_{l=1}^d C_l(t')y_l(t,s,i)
\leq & \sum_{l=1}^d \sup_{[t',T)\times \Gamma_l \times \mathcal{X}} \abs{v-w} \frac{y_l(t,s,i)}{\inf_{ (t',T) \times \Gamma_l\times \mathcal{X}}y_l},\\
\leq & \sup_{[t',T)\times \Gamma\times \mathcal{X}}   \abs{v-w} \sum_{l=1}^d \frac{y_l(t,s,i)}{\inf_{ (t',T) \times \Gamma_l\times \mathcal{X}}y_l}.
\end{align*}
Next, using the continuity in $t$ variable and allowing $t\downarrow t'$ on both sides, we write
\begin{align}\label{eq:vminusw_1}
\abs{v(t',s',i')-w(t',s',i')}\leq \sup_{[t',T)\times \Gamma\times \mathcal{X}}   \abs{v-w} \sum_{l=1}^d Y_l(t',s',i'),
\end{align}
where  $Y_l(t',s',i') $ is defined as
\begin{align*}
&\inf\left\{ \frac{y_l(t',s',i')}{\inf_{ (t',T) \times \Gamma_l\times \mathcal{X}}y_l}  \mid \textrm{ for each } l, \ y_l \textrm{ is given by} \eqref{eq:yl} \textrm{and satisfies } \eqref{eq:near_field_inequality} \right\}.
\end{align*}
Next a simple expression for an upper bound of $Y_l$ (defined in \eqref{eq:vminusw_1}) is derived. It is evident from \eqref{eq:yl_deri} that if there is a $t_0\in [0,T)$ such that the sign of $\frac{\partial y_l}{\partial t}(t_0, s)$ is non-positive, $\frac{\partial y_l}{\partial t}\le 0$ for all $t\ge t_0$. Hence, $y_l(\cdot,s)$ attains its minimum on $[t',T]$ either at $t=t'$ or $t=T$. Therefore using \eqref{eq:yl} and by simplifying some terms we get
\begin{align}\label{eq:hl}
\nonumber  H_l(t',s'_l,i'; \epsilon_l, \gamma_l,& k_l) :=\frac{y_l(t',s',i')}{\inf_{ (t',T) \times \Gamma_l\times \mathcal{X}}y_l},\\
\nonumber & =\max \left\{\exp\left(-\frac{\gamma_l}{T+\epsilon_l-t'} \ln \frac{s_l^u}{s'_l}
\ln\left(\frac{k_l^2 s_l^u}{s'_l}\right)\right),\right.\\
\nonumber & \hspace{0.4in}\left.\sqrt{\frac{\epsilon_l}{T+\epsilon_l-t'}}\exp\left(-\frac{\gamma_l}{T+\epsilon_l-t'}\ln^2\left(\frac{k_l s_l^u}{s'_l}\right)+\frac{\gamma_l}{\epsilon_l}\ln^2 k_l \right)\right\},\\
& \leq \exp \left(-\gamma_l\frac{\ln^2\left(k_l\frac{s_l^u}{s'_l}\right)}{(T+\epsilon_l-t')}+\frac{\gamma_l}{\epsilon_l}\ln^2 k_l\right),
\end{align}
where $\gamma_l$, $k_l$ and $\epsilon_l$ are as in Lemmas \ref{lemma:near_feild1} or  \ref{lemma:near_feild2} depending on the model parameters. The above inequality is obtained by observing that for any $\epsilon_l>0$, $\frac{\epsilon_l}{T+\epsilon_l-t'}<1$ and 
$$
\frac{-\gamma_l}{T+\epsilon_l-t'}\left[\ln\frac{s_l^u}{s'_l}\ln \left(\frac{k_l^2 s_l^u}{s_l'}\right)-\ln^2 \left(\frac{k_l s_l^u}{s_l'}\right)\right]=\frac{\gamma_l \ln^2 k_l}{T+\epsilon_l-t'}< \frac{\gamma_l \ln^2 k_l}{\epsilon_l}.
$$
Next, we substitute the parameters $\epsilon_l$, $\gamma_l$, and $k_l$ in \eqref{eq:hl} by not violating the constraints in Lemmas \ref{lemma:near_feild1}, and \ref{lemma:near_feild2} to obtain the upper bound \eqref{eq:vminusw} of $Y_l(t',s',i')$.
We first set $\epsilon_l=\frac{\ln k_l}{\underset{i}{\max} \{a_{ll}(i)\}+|D_l|}$ and get 
\begin{align*}
Y_l(t',s',i') \leq \exp\left(
-\gamma_l\frac{\ln \frac{s_l^u}{s'_l} \left( \ln \frac{s_l^u}{s'_l} + 2\ln k_l \right)-(\underset{i}{\max} \{a_{ll}(i)\}+|D_l|)\ln k_l(T-t')}{(T-t')+\frac{\ln k_l}{\underset{i}{\max} \{a_{ll}(i)\}+|D_l|)}}
\right),
\end{align*}
because
\begin{align*}
&\frac{-\gamma_l \ln^2\left(\frac{ k_ls_l^u}{s'_l}\right)}{T-t'+\frac{\ln k_l}{\underset{i}{\max} \{a_{ll}(i)\}+|D_l|}}+\frac{\gamma_l {(\underset{i}{\max} \{a_{ll}(i)\}+|D_l|})\ln^2 k_l}{\ln k_l}\\
&=-\gamma_l\frac{\ln^2\left(\frac{k_ls_l^u}{s'_l}\right)-(\underset{i}{\max} \{a_{ll}(i)\}+|D_l|)\ln k_l(T-t')-\ln^2 k_l}{(T-t')+\frac{\ln k_l}{\underset{i}{\max} \{a_{ll}(i)\}+|D_l|)}},\\
&=-\gamma_l\frac{\ln \frac{s_l^u}{s'_l} \ln\left(\frac{k_l^2 s_l^u}{s'_l}\right)-(\underset{i}{\max} \{a_{ll}(i)\}+|D_l|)\ln k_l(T-t')}{(T-t')+\frac{\ln k_l}{\underset{i}{\max} \{a_{ll}(i)\}+|D_l|)}}.
\end{align*}
\noindent {\bf Case $1 (D_l>0):$} Using \eqref{eq:parameters}, i.e., $\ln k_l=\frac{\underset{i}{\max} \{a_{ll}(i)\}}{D_l}$ and $\gamma_l=\frac{1}{2 \underset{i}{\max} \{a_{ll}(i)\}}$, we get
\begin{align}
Y_l(t',s',i') \leq  
\exp \left(-\frac{\ln\frac{s_l^u}{s'_l} \left( \ln \frac{s_l^u}{s'_l} + 2\frac{\underset{i}{\max} \{a_{ll}(i)\}}{D_l}\right)
-(\underset{i}{\max} \{a_{ll}(i)\}+D_l)\frac{\underset{i}{\max} \{a_{ll}(i)\}}{D_l}(T-t')}{2\underset{i}{\max} \{a_{ll}(i)\}\left((T-t')+\frac{\underset{i}{\max} \{a_{ll}(i)\}}{D_l (\underset{i}{\max} \{a_{ll}(i)\}+D_l)}\right)}\right),\nonumber
\end{align}
\begin{align} \label{eq:case1}
= & \exp \left(\frac{ - \ln\frac{s_l^u}{s'_l} \left( \frac{D_l}{\underset{i}{\max} \{a_{ll}(i)\}} \ln\frac{s_l^u}{s'_l} + 2 \right)
+  (\underset{i}{\max} \{a_{ll}(i)\}+D_l)(T-t')}{2\left(D_l (T-t')+\frac{\underset{i}{\max} \{a_{ll}(i)\}}{(\underset{i}{\max} \{a_{ll}(i)\}+D_l)}\right)}\right).
\end{align}

\noindent {\bf Case $2 (D_l\leq 0):$} The above choice of $\epsilon_l$ and \eqref{parameters} imply that
$0<\frac{T-\hat{t}}{\ln \frac{s_l^u}{\hat{s}_l}}=\frac{1}{\underset{i}{\max} \{a_{ll}(i)\}-D_l}< \frac{1}{-D_l}.$
Hence \eqref{sl'} holds. Consequently, \eqref{parameters} implies that $\gamma_l$ may be taken from $(0,\hat{\gamma_l})$, where $\hat{\gamma_l}=\frac{1}{2\left(\underset{i}{\max} \{a_{ll}(i)\}-D_l\right)} >0$. Then
$$
Y_l(t',s',i') \leq \exp \left(-\hat{\gamma_l}\frac{\ln\frac{s_l^u}{s'_l} \left(\left(\ln\frac{s_l^u}{s'_l}\right)/\ln k_l + 2 \right)-\left(\underset{i}{\max} \{a_{ll}(i)\}-D_l\right)(T-t')}{(T-t')/\ln k_l + 1/(\underset{i}{\max} \{a_{ll}(i)\}-D_l)}\right).
$$
Since, Lemma \ref{lemma:near_feild2} holds for sufficiently large $k_l$, letting $k_l \to \infty$ in the above, we get
$
Y_l(t',s',i') \leq \exp \left(-\ln\left(\frac{s_l^u}{s'_l}\right)+ \left(\underset{i}{\max} \{a_{ll}(i)\}-D_l\right)(T-t')/2\right)$. By combining the inequalities appearing above and in \eqref{eq:case1}, we get for both the cases
\begin{align}\label{Ybound}
Y_l(t',s',i') \leq \exp \left(\frac{ - \ln(\frac{s_l^u}{s'_l}) \left(\frac{D_l^+}{\underset{i}{\max} \{a_{ll}(i)\}} \ln(\frac{s_l^u}{s'_l})+2\right)
+ (\underset{i}{\max} \{a_{ll}(i)\}+|D_l|)(T-t')}{2\left(D_l^+ (T-t')+\frac{\underset{i}{\max} \{a_{ll}(i)\}}{(\underset{i}{\max} \{a_{ll}(i)\}+D_l^+)}\right)}\right).
\end{align}
Hence, \eqref{eq:vminusw} follows from the above bound and \eqref{eq:vminusw_1}.
\end{proof}
\begin{remark}\label{remPsi}
In the preamble of Theorem \ref{theorem:pointwise_estimate}, we have mentioned the possibility of deriving an estimate $\Psi_l(t,s,i)$ of $Y_l$, that is valid on $\mathcal{D}:=\{(t,s,i)\in [0,T]\times R\times \mathcal{X} \mid \ln \frac{s^u_l}{s_l}+D_l(T-t)\ge 0, \ \forall l=1,\ldots, d\}$, by mimicking the approach of \cite{KaN00}. On the other hand a globally valid estimate $\bar \Psi_l(t,s,i)$ of $Y_l$ has been obtained in \eqref{Ybound}. Thus \eqref{eq:vminusw} may be improved as $\abs{v(t',s',i')-w(t',s',i')} \le  \sup_{[t',T)\times \Gamma\times \mathcal{X}}   \abs{v-w}\sum_{l=1}^d \hat \Psi_l(t',s',i')$ where,
$\hat \Psi_l:= \min\{ \Psi_l, \bar{\Psi}_l\} 1_{\mathcal{D}} + \bar{\Psi}_l 1_{([0,T]\times R\times \mathcal{X}) \setminus \mathcal{D}}$.
\end{remark}
\subsection{Numerical Study}
A comparison of two estimates, mentioned in Remark \ref{remPsi} is presented below by considering a couple of numerical examples. The estimate of $Y_l$ in \eqref{Ybound} for a single regime case may easily be compared with the estimate presented in \cite{KaN00}. We present the comparison for the single asset case, i.e., $d=1$. It will be shown that none dominates the other. We set $\mathcal{X}=\{1\}$, and denote $\sigma:= \sigma(1)$, $r:=r(1)$, $D:=D_1$. So, the estimates in \cite{KaN00} and \eqref{Ybound} are
\begin{align*}
\Psi_1(t,s_1,1)=\exp \left(-\frac{\ln(s^u_1/s_1)(\ln(s^u_1/s_1) + \min\{0,D\}(T-t))}{2 \sigma^2(T-t)}\right),\\
\bar{\Psi}_1(t,s_1,1)=\exp \left(\frac{ - \ln(s_1^u/s_1) \left(\frac{D^+}{\sigma^2} \ln(s_1^u/s_1)+2\right)
+ (\sigma^2+|D|)(T-t)}{2\left(D^+ (T-t)+\frac{\sigma^2}{(\sigma^2+D^+)}\right)}\right),
\end{align*}
respectively. Now we set $T=1$, $s^b_1 =0$, $s^u_1=20$. Figure \ref{fig1}  presents a contour plot of $\Psi_1-\bar{\Psi}_1$, where $\sigma =0.4$, and $r=1\%$. Since $D:= \sigma^2-2r=0.14>0$, both the estimates are valid on the full region. The contour plot in Figure \ref{fig1}, where $t$, and $s_1$ variables are along vertical and horizontal axes, shows that $\Psi_1-\bar{\Psi}_1$ takes both positive and negative values for this example. Hence  $\hat{\Psi}_1=\min\{\Psi_1 , \bar{\Psi}_1\}$ is strictly sharper than both $\Psi_1$ and $\bar{\Psi}_1$. 
\begin{figure}[h!]
     \centering
     \begin{subfigure}[b]{0.47\textwidth}
         \centering
         \includegraphics[width=\textwidth]{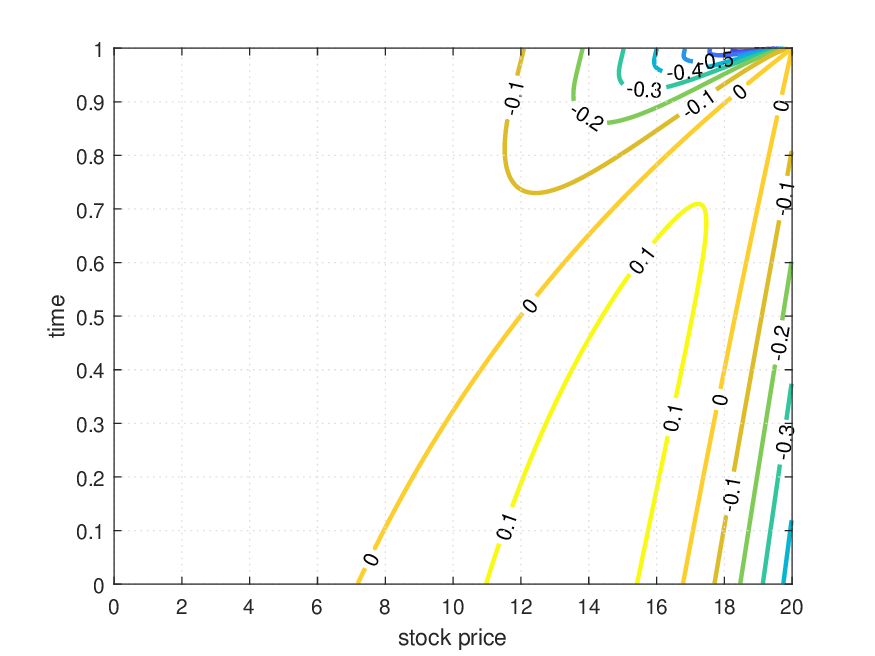}
         \caption{Contour plot of $\Psi_1-\bar{\Psi}_1$ where $\sigma =0.4$, and $r=0.01$.}
         \label{fig1}
     \end{subfigure}
     \begin{subfigure}[b]{0.5\textwidth}
         \centering
         \includegraphics[width=\textwidth]{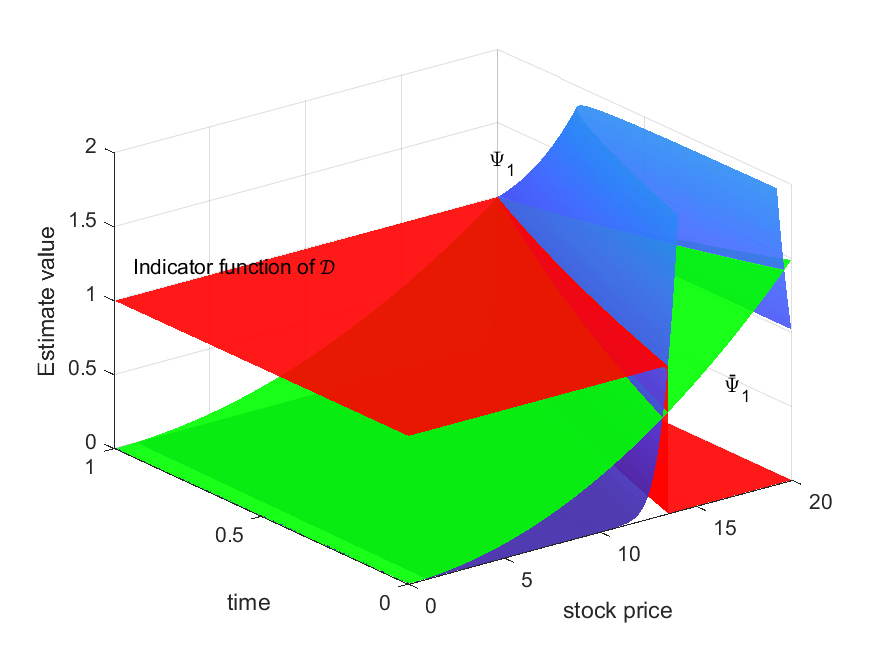}
         \caption{Surface plot of $\Psi_1$, $\bar{\Psi}_1$, and the indicator function of the domain $\mathcal{D}$ where $\sigma =0.1$, and $r=0.2$.}
         \label{fig2}
     \end{subfigure}
        \caption{Comparison of estimates.}
        \label{both_figures}
\end{figure}
Next by setting $\sigma = 0.1$ and $r=20\%$, we get $D=-0.39 <0$. Figure \ref{fig2} includes surface plots of $\Psi_1$, $\bar{\Psi}_1$, and indicator function of the domain $\mathcal{D}:=\{(t,s_1)\mid \ln \frac{s^u_1}{s_1}+D(T-t)\ge 0\}$, against $t$-$s_1$ plane. Those surfaces are colored in blue, green, and red respectively.
We observe that $\Psi_1-\bar{\Psi}_1$ changes its sign even inside $\mathcal{D}$, where $\Psi_1$ is a valid estimate for this example. Hence $\hat \Psi_1= \min\{ \Psi_1, \bar{\Psi}_1\} 1_{\mathcal{D}} + \bar{\Psi}_1 1_{\mathcal{D}^c}$ gives a strictly sharper estimate on the whole truncated domain.

\noindent The question regarding the  location for the artificial boundary depending on the error tolerance has been addressed in \cite{KaN00} by using the estimate discussed above. The analysis in \cite{KaN00}, that considers few realistic numerical examples, convinces of the usefulness of a result like Theorem 4.7. Therefore, in view of Remark \ref{remPsi}, we have significantly improved the applicability of the result presented in \cite{KaN00} by sharpening the estimate, extending the region of validity, and including the regime switching scenario.

\section{Conclusions}\label{sec:conclu}
Using the probabilistic method a self-contained  proof has been developed for the existence of a unique solution of a system of PDEs in the class of functions having at most linear growth. The system under consideration originates from the regime-switching extension of the Black-Scholes-Merton model. A growth estimate has been derived for the solution depending on the growth of the terminal data. This growth estimate has further been utilized in estimating the maximum error at the boundary due to the imposed boundary data of the truncated domain problem. An error estimate has also been obtained at every interior point of the domain. Finally, the estimate has been expressed using the model parameters and maximum error on the far boundary. These results are useful for allocating the artificial boundary depending on the error tolerance. For more details in this regard, \cite{KaN00} may be referred to. The error estimate obtained in that paper has been compared with the estimate of the present paper under the special case where regime-switching is absent. By considering realistic numerical examples, we show that our results significantly improve over the results presented in \cite{KaN00} by sharpening the estimate and extending the region of validity of the estimate. Moreover, we obtain point-wise estimates for domain truncation error of a system of PDE that arises in the option pricing problem under the regime-switching scenario. As a future direction, the proposed results may be extended to the option pricing equations with more sophisticated model assumptions. For example, a similar study for the regime-switching stochastic volatility models is absent in the literature.

\appendix
\section{Proofs of Lemmata in Section \ref{sec:exis_uniq}}\label{appendixA}
\begin{proof} [Proof of Lemma \ref{lemma:used_in_2}] Using the expectation of log-normal random variable, we can write 
\begin{align*}
\int_{(0,\infty)^d}\left(1+\sum_{l=1}^d x_l\right) \alpha(x,s,i,v)dx&=
1+\sum_{l=1}^d\int_{(0,\infty)^d} x_l \alpha(x,s,i,v)dx,\\ 
&=1+\sum_{l=1}^d e^{\left[\ln s_l +(r(i)-\frac{1}{2}a_{ll}(i))v+\frac{1}{2}a_{ll}(i)v\right]},\\
& = 1+\sum_{l=1}^d s_l e^{r(i)v}.
\end{align*}
\end{proof}
\begin{proof}[Proof of Lemma \ref{lemma:delalphadelv}]
From (\ref{eq:alpha}), we have
\begin{equation}\label{eq:alpha_things}
\left.\begin{aligned}
&\frac{\partial \tilde{z}_l}{\partial v}=r(i)-\frac{1}{2}a_{ll}(i), \quad \mathfrak{S}(i)=v a(i) \implies \frac{\partial \mathfrak{S}(i)}{\partial v}=a(i), \quad \mathfrak{S}(i)^{-1}=\frac{1}{v}a^{-1}(i),\\ & trace\left(\mathfrak{S}(i)^{-1}\frac{\partial \mathfrak{S}(i)}{\partial v}\right)=trace\left(\frac{1}{v}I\right)=\frac{d}{v},\\
 &\frac{\partial}{\partial v}(\mathfrak{S}(i)^{-1})_{ll'}=\frac{\partial}{\partial v}\left(\frac{1}{v}a^{-1}(i)\right)_{ll'}=\frac{-1}{v^2}\left(a^{-1}(i)\right)_{ll'}.
\end{aligned}\right\}
\end{equation}
Taking logarithm on both
the sides of (\ref{eq:alpha}), we get
\begin{align}\label{eq:logalpha}
\ln{\alpha(x,s,i,v)}=-\ln{\left(\frac{1}{\sqrt{(2\pi)^d}\:x_1.x_2...x_d}\right)}-\frac{1}{2}\ln{|\mathfrak{S}(i)|}-\frac{1}{2}\sum_{l=1}^d\sum_{l'=1}^d (\mathfrak{S}(i)^{-1})_{ll'}(z_l-\tilde{z}_l)(z_{l'}-\tilde{z}_{l'}).
\end{align}
Differentiating above expression (\ref{eq:logalpha}) w.r.t. $v$ and using (\ref{eq:alpha_things}) and Jacobi's formula for derivative of determinant, we can write
\begin{align}\label{eq:delalpha}
&\frac{\partial \alpha}{\partial v}= \left[ \frac{-d}{2v}-\frac{1}{2}\sum_{l=1}^d\sum_{l'=1}^d\left(\frac{-1}{v^2}(a^{-1}(i))_{ll'}(z_l-\tilde{z}_l)(z_{l'}-\tilde{z}_{l'})\right)+\frac{1}{2}\sum_{l=1}^d\sum_{l'=1}^d\frac{1}{v}(a^{-1}(i))_{ll'}\right.\nonumber\\
&\left.(z_{l'}-\tilde{z}_{l'})\left(r(i)-\frac{1}{2}a_{ll}(i)\right)+\frac{1}{2}\sum_{l=1}^d\sum_{l'=1}^d\frac{1}{v}(a^{-1}(i))_{ll'}\left(r(i)-\frac{1}{2}a_{l'l'}(i)\right)(z_l-\tilde{z}_l)\right]\alpha ,\nonumber\\
=&\left[ \frac{-d}{2v}+\frac{1}{2}\sum_{l=1}^d\sum_{l'=1}^d\left(\frac{1}{v^2}(a^{-1}(i))_{ll'}(z_l-\tilde{z}_l)(z_{l'}-\tilde{z}_{l'})\right)+\sum_{l=1}^d\sum_{l'=1}^d\frac{1}{v}(a^{-1}(i))_{ll'}(i)\right.\nonumber\\
&\left.(z_{l'}-\tilde{z}_{l'})\left(r(i)-\frac{1}{2}a_{ll}(i)\right)\right]\alpha,
\end{align}
using the symmetry of $a^{-1}$. From (\ref{eq:delalpha}), we can see: 
Using the fact that $(i)$ $a^{-1}$ is positive definite (directly follows as $a=\sigma \sigma^T$), and $(ii)$ $z_l=\ln\left(\frac{x_l}{s_l}\right)$, second term has growth $\mathcal{O}(\ln\norm{x}_1)^2$ and is non-negative. This term would be dominating term as $\norm{x}_1\rightarrow \infty$ as the third term has growth of $\mathcal{O}(\ln \norm{x}_1)$ and the first term does not depend on $x_l$.
This implies that for a fixed value of $s$, $i$, and $\sigma$, there exists some large $R$, and constants $C_1$, and $C_2$ (does not depend on $x$ but may depend on $s,i,v$), such that 
\[
\abs{\frac{1}{\alpha}\frac{\partial \alpha}{\partial v}}\leq C_1+C_2\left(\ln\norm{x}_1\right)^2,
\]
for all $x \in (0,\infty)^d \backslash (0,R)^d$.
\end{proof}
\begin{proof}[Proof of Lemma \ref{lemma:delalphadels}]
Recall that $1_l$ denotes the unit vector along $l$th direction and $1_l(l')$ is the $l'$th component of $1_l$. Differentiating (\ref{eq:logalpha}) w.r.t. $s_l$, and using Jacobi's formula for derivative of determinant, we have
\begin{equation*}
\begin{split}
&\frac{1}{\alpha}\frac{\partial \alpha}{\partial s_{l_0}}=0-\frac{1}{2}\times 0-\frac{1}{2}\sum_{l=1}^d \sum_{l'=1}^d \left[(\mathfrak{S}^{-1})_{ll'}\left(\frac{-1}{s_l}\right)(z_{l'}-\tilde{z}_{l'})1_{l_0}(l)+(\mathfrak{S}^{-1})_{ll'}(z_l-\tilde{z}_l)\left(\frac{-1}{s_l}\right)1_{l_0}(l')\right]\\&=\frac{1}{2}\sum_{l' \neq l_0}(\mathfrak{S}^{-1})_{l_0l'}\left(\frac{1}{s_{l_0}}\right)(z_{l'}-\tilde{z}_{l'})+\frac{1}{2}\sum_{l \neq l_0}(\mathfrak{S}^{-1})_{ll_0}\left(\frac{1}{s_{l_0}}\right)(z_l-\tilde{z}_l)+(\mathfrak{S}^{-1})_{l_0l_0}\left(\frac{1}{s_{l_0}}\right)(z_{l_0}-\tilde{z}_{l_0}).
\end{split}
\end{equation*}
Since $\mathfrak{S}$ is symmetric, we can write
\begin{equation*}
\frac{\partial \alpha}{\partial s_{l_0}}=\sum_{l=1}^d (\mathfrak{S}^{-1})_{ll_0}\left(\frac{1}{s_{l_0}}\right)(z_l(s)-\tilde{z}_l(v))\alpha.
\end{equation*}
We have written $z(s)$ and $\tilde{z}(v)$ at the places of $z$ and $\tilde{z}$ to denote their dependency on $s$ and $v$ variables respectively. Therefore from (\ref{eq:alpha}), we have
\[
\frac{\partial \alpha}{\partial s_{l_0}}=\frac{\alpha}{s_{l_0}}\mathcal{O}(\ln{\norm{x}}),\quad \mbox{as} \quad \norm{x}_1 \rightarrow \infty.
\]
\end{proof}
\begin{proof}[Proof of Lemma \ref{lemma:tail}] For notational convenience, let us write $z(s)$ and $\tilde{z}(v)$ to denote their dependency on $s$ and $v$ variables respectively. Furthermore, we denote $(a(i))^{-1}$ by $A$. Moreover, $z(s)-\tilde{z}(v)$ which is a difference of two vectors can be written as $\ln(x)-\left(\ln(s)+\tilde{z}(v)\right)$, where $\ln$ of a vector is the vector of logarithm of components.. Let us write $\ln(s)+\tilde{z}(v)$ as $\xi(s,v)$ for notational convenience. From (\ref{eq:alpha}) and (\ref{eq:logalpha}), we can write
\begin{align*}
    &\ln{\alpha(x,s',i,v')}-\ln{\alpha(x,s,i,v)}\\&=\frac{-d}{2}\ln{\left(\frac{v'}{v}\right)}-\frac{1}{2v'}(\ln{x}-\xi(s',v'))^*A(\ln{x}-\xi(s',v'))+\frac{1}{2v}(\ln{x}-\xi(s,v))^*A(\ln{x}-\xi(s,v)),\\
    &=\frac{-d}{2}\ln{\left(\frac{v'}{v}\right)}-\frac{1}{2v'}\Big[(\ln{x})^*A(\ln{x})-2(\ln{x})^*A\xi(s',v')+\xi(s',v')^*A\xi(s',v')\\
    &-\left(\frac{v'}{v}\right)\Big((\ln{x})^*A(\ln{x})-2(\ln{x})^*A\xi(s,v)+\xi(s,v)^*A\xi(s,v)\Big)\Big],\\
    &=\frac{-d}{2}\ln{\left(\frac{v'}{v}\right)}-\frac{1}{2v'}\left[\left(1-\frac{v'}{v}\right)(\ln{x})^*A(\ln{x})-2(\ln{x})^*A\Big(\xi(s',v')- \left(\frac{v'}{v}\right)\xi(s,v)\Big)
    \right.\\
    &\left.+\xi(s',v')^*A\xi(s',v') - \left(\frac{v'}{v}\right) \xi(s,v)^*A\xi(s,v) \right].
\end{align*}
Clearly, as $\|x\|\to \infty$, the quadratic term $(\ln{x})^*A(\ln{x})$, which appears in the above expression dominates. On the other hand for $v'>v$, the sign of that term is positive. In other words, $\ln{\alpha(x,s',i,v')}-\ln{\alpha(x,s,i,v)}>0$ for large $\|x\|$. To be more precise, for every fixed positive scalar $v'$, $\epsilon$, and vectors $s$, and $s'$, there is a sufficient large $\mathcal{R}$ such that  $\ln{\alpha(x,s',i,v')}-\ln{\alpha(x,s,i,v)}>0$ for all $\| x\| >\mathcal{R}$ and for all $ v\le v'-\epsilon$.
\end{proof}

\begin{proof}[Proof of Lemma \ref{Lemma:theorem2firstpart}]
First we fix the variables $t$, $j$ and $j'$ and hence we ignore their influence on other variables, to be defined in this proof. Since $ \phi( \cdot, \cdot,\cdot)\in V$, for all $s\in(0,\infty)^d$, $\sup_{t,i} \abs{\phi(t,s,i)} \le \norm{\phi}_V (1+\norm{s}_1)$. Let $ \{u_l\}_{l\in\mathbb{N}} $ be a decreasing sequence on $ (0,1) $ such that $ u_l\rightarrow 0 $. Let $ \alpha_l(x):=\alpha(x,s,j',u_l) $. Since $\{ \alpha_l \}_{l\in\mathbb{N}}$ is a family of lognormal density functions with  mean and variance lying on a bounded set, we have the following uniform integrability 
\begin{equation*}
\lim_{R\rightarrow \infty}\sup_l\int_{(0,\infty)^d\backslash (0,R)^d}(1+\norm{x}_1) \alpha(x,s,j',u_l)\,dx=0,
\end{equation*}
for every fixed $s$ and $j'$. Thus, for any $\epsilon>0$, there is $R>0$ such that 
\begin{equation}\label{eq:lemma5_2}
\int_{(0,\infty)^d\backslash (0,R)^d}(1+\norm{x}_1)\alpha(x,s,j',u_l)\,dx<\epsilon \quad \mbox{for all} \quad l\in\mathbb{N}.
\end{equation}
Now consider 
$$\phi_n(t,x,j):=\sum_{i=0}^{2^{2n}-1}\frac{i}{2^n}\textbf{1}_{[\frac{i}{2^n}, \frac{i+1}{2^n})}(\phi(t,x,j)),$$
which is a non-negative increasing sequence converging to $\phi$ uniformly on every compact set. Then, given $\epsilon>0$ and $R$, we can find $N$ such that for all $n\geq N$,
\begin{align}\label{eq:dirac1}
0 & \le \int_{(0,\infty)^d}(\phi(t+u_l,x,j)-\phi_n(t+u_l,x,j))\alpha(x,s,j',u_l) dx,  \nonumber \\
&=\int_{(0,R)^d}(\phi(t+u_l,x,j)-\phi_n(t+u_l,x,j))\alpha(x,s,j',u_l) dx \nonumber\\
 & + \int_{(0,\infty)^d\backslash(0,R)^d  }(\phi(t+u_l,x,j)-\phi_n(t+u_l,x,j))\alpha(x,s,j',u_l) dx, \nonumber\\ 
& \leq \epsilon \int_{(0,R)^d} \alpha(x,s,j',u_l) dx +\int_{(0,\infty)^d\backslash(0,R)^d} \norm{\phi}_V (1+\norm{x}_1)\alpha(x,s,j',u_l) dx,\nonumber\\
 & < (1+ \norm{\phi}_V) \epsilon,
\end{align}
for all $l$, using \eqref{eq:lemma5_2}. 
Also,
\begin{align}\label{sum_i}
\int_{(0,\infty)^d}\phi_n(t+u_l,x,j)\alpha(x,s,j',u_l)dx=&\sum_{i=0}^{2^{2n}-1}\frac{i}{2^n}\int_{A^{(n)}_{l,i}}\alpha(x,s,j',u_l)dx
\end{align}
where $A^{(n)}_{l,i}:=\left\{x\in(0,\infty)^d\mid \phi(t+u_l,x,j) \in[\frac{i}{2^n}, \frac{i+1}{2^n})\right\}$. Now, if 
$i(n,s)$ be such that $\phi(t,s,j) \in[\frac{i(n,s)}{2^n}, \frac{i(n,s)+1}{2^n})$, then for every $s$, due to the continuity of $\phi$ in $t$ variable, there is a sufficiently large $l'$ such that for all $l\ge l'$, $\phi(t+u_l,s,j) \in (\frac{i(n,s)-1}{2^n}, \frac{i(n,s)+1}{2^n})$ as $u_l\downarrow 0$. Therefore, $s$ is in the interior of $A^{(n)}_{l,i(n,s)-1}\cup A^{(n)}_{l,i(n,s)}$ for all $l\ge l'$. Again since variance of $\alpha_l$ goes to zero and mean converges to $s$ as $u_l\downarrow 0$, 
\begin{equation*}
\int_{A^{(n)}_{l,i(n,s)-1}\cup A^{(n)}_{l,i(n,s)}}\alpha(x,s,j',u_l)dx\to 1,
\end{equation*}
because $A^{(n)}_{l,i(n,s)-1}\cup A^{(n)}_{l,i(n,s)}$ contains $s$ in the interior and its Lebesgue measure does not shrink to zero.
Hence the integral of the density function on the complementary domain converges to zero. 
Thus, for each $ n $, using the boundedness of $\phi_n$  and \eqref{sum_i}
\begin{equation*}
\frac{i(n,s)-1}{2^n}\le \lim_{l\rightarrow\infty}\int_{(0,\infty)^d}\phi_n(t+u_l,x,j)\alpha(x,s,j',u_l)dx \le  \frac{i(n,s)}{2^n}.
\end{equation*}
Therefore, from the above inequality
\begin{align*}
&\frac{i(n,s)-1}{2^n}
\leq \lim_{l\rightarrow\infty}\int_{(0,\infty)^d}\phi_n(t+u_l,x,j)\alpha(x,s,j',u_l)\,dx,
\le \lim_{l\rightarrow\infty}\int_{(0,\infty)^d}\phi(t+u_l,x,j)\alpha(x,s,j',u_l)\,dx,\\
&=\lim_{l\rightarrow\infty}\int_{(0,\infty)^d}(\phi(t+u_l,x,j)-\phi_n(t+u_l,x,j)+\phi_n(t+u_l,x,j))\alpha(x,s,j',u_l)\,dx,\\
&\le  (1+ \norm{\phi}_V)\epsilon + \frac{i(n,s)}{2^n},
\end{align*}
using \eqref{eq:dirac1}. For a given $\epsilon>0$ consider $N(>1-\log_2\epsilon)$. Now, from the definition of $i(n,s)$, for all $n\geq N $, $\phi(t,s,j) - \epsilon \le \phi(t,s,j) - \frac{2}{2^n}$  which is less than $\frac{i(n,s)-1}{2^n}$, the left side of above inequality. Finally we note that the right side is less or equal to $(1+ \norm{\phi}_V)\epsilon + \phi(t,s,j)$. Combining these, we get
$$ \phi(t,s,j) - \epsilon \le \lim_{l\rightarrow\infty}\int_{(0,\infty)^d}\phi(t+u_l,x,j)\alpha(x,s,j',u_l)\,dx \le (1+ \norm{\phi}_V)\epsilon + \phi(t,s,j).
$$
The result follows as $\epsilon$ is an arbitrary positive number.
\end{proof}
\begin{proof}[Proof of Identity (\ref{eq:lemma_appendixA})]
From the second order partial derivative of $\alpha$ w.r.t. $s_{l_0}$ and $s_{l'_0}$, we have
\begin{align*}
\frac{\partial}{\partial s_{l'_0}}g_{2}^{l_0}=\frac{1}{s_{l_0}}\frac{\partial}{\partial s_{l'_0}}\sum_{l=1}^d (\mathfrak{S}^{-1})_{ll_0}(z_l-\tilde{z}_l)=\frac{1}{s_{l_0}}\frac{\partial }{\partial s_{l'_0}}\left[(\mathfrak{S}^{-1})_{l'_0l_0}(z_{l'_0}-\tilde{z}_{l'_0})\right]=\frac{-1}{s_{l_0}s_{l'_0}}(\mathfrak{S}^{-1})_{l'_0l_0}-\delta(l_0,l'_0)\frac{1}{s_{l_0}}g_{2}^{l_0},
\end{align*}
where $\delta(l,l')$ is the Kronecker delta function of $l$ and $l'$, i.e., $\delta(l,l')=1$ if and only if $l=l'$ and is zero otherwise. Using the symmetry of $a^{-1}$ and expressions for $g_1$ and $g_2^l$, we simplify the L.H.S. of (\ref{eq:lemma_appendixA}) as follows
\begin{align*}
-&g_1+r(i)\sum_{l_0=1}^d s_{l_0} g_{2}^{l_0}+\frac{1}{2}\sum_{l_0=1}^{d}\sum_{l'_0=1}^d s_{l_0} s_{l'_0}a_{l_0l'_0}\left(\frac{\partial}{\partial s_{l'_0}}g_{2}^{l_0}+g_{2}^{l_0}g_{2}^{l'_0}\right)\\=&-\bigg[ \frac{-d}{2v}-\frac{1}{2}\sum_{l=1}^d\sum_{l'=1}^d\left(\frac{-1}{v^2}a_{ll'}^{-1}(z_l-\tilde{z}_l)(z_{l'}-\tilde{z}_{l'})\right)+\sum_{l=1}^d\sum_{l'=1}^d\frac{a^{-1}_{ll'}}{v}\left(r(i)-\frac{1}{2}a_{ll}\right)\\
& (z_{l'}-\tilde{z}_{l'})\bigg]+r(i)\sum_{l_0=1}^d s_{l_0}\left[\sum_{l=1}^d \frac{a^{-1}_{ll_0}}{v}\left(\frac{1}{s_{l_0}}\right)(z_l-\tilde{z}_l)\right]\\
&+\frac{1}{2}\sum_{l_0=1}^d\sum_{l'_{0}=1}^d s_{l_0}s_{l'_0}a_{l_0l'_0}\Bigg[\left(\frac{-1}{s_{l_0}s_{l'_0}}\frac{a^{-1}_{l'_0l_0}}{v}-\delta(l_0,l'_0)\frac{1}{s_{l_0}}\sum_{l=1}^d\frac{a^{-1}_{ll_0}}{v}\frac{1}{s_{l_0}}(z_l-\tilde{z}_l)\right)\\
&+\left(\sum_{l=1}^d \frac{a^{-1}_{ll_0}}{v}\frac{1}{s_{l_0}}(z_l-\tilde{z}_l) \sum_{l=1}^d \frac{a^{-1}_{ll'_0}}{v}\frac{1}{s_{l'_0}}(z_l-\tilde{z}_l)\right)\Bigg],
\end{align*}
\begin{align*}
=&\frac{d}{2v}+\frac{1}{2}\sum_{l=1}^d\sum_{l'=1}^d\left(\frac{-a^{-1}_{ll'}}{v^2}(z_l-\tilde{z}_l)(z_{l'}-\tilde{z}_{l'})\right)-\sum_{l=1}^d\sum_{l'=1}^d\frac{a^{-1}_{ll'}}{v}\left(r(i)-\frac{1}{2}a_{ll}\right)(z_{l'}-\tilde{z}_{l'})\\
&+r(i)\sum_{l_0=1}^d\left[\sum_{l=1}^d \frac{a^{-1}_{ll_0}}{v}(z_l-\tilde{z}_l)\right]+\frac{1}{2}\sum_{l_0=1}^d \sum_{l'_{0}=1}^d a_{l_0l'_0}\Bigg[\left(-\frac{a^{-1}_{l'_0l_0}}{v}\right)
+\Bigg\{ \sum_{l=1}^d\sum_{l'=1}^d \frac{a^{-1}_{ll_0}a^{-1}_{l'l'_0}}{v^2}\\
&(z_l-\tilde{z}_l) (z_{l'}-\tilde{z}_{l'})\Bigg\}\Bigg]+\frac{1}{2}\sum_{l_0=1}^d \sum_{l'_{0}=1}^d s_{l_0}s_{l'_0}a_{l_0l'_0} \left(-\delta(l_0,l'_0)\frac{1}{s_{l_0}}\sum_{l=1}^d\frac{a^{-1}_{ll_0}}{v}\frac{1}{s_{l_0}}(z_l-\tilde{z}_l)\right),
\end{align*}
\begin{align*}
=&\frac{1}{2}\sum_{l=1}^d\sum_{l'=1}^d\left(\frac{-a^{-1}_{ll'}}{v^2}(z_l-\tilde{z}_l)(z_{l'}-\tilde{z}_{l'})\right)-\frac{r(i)}{v}\sum_{l=1}^d\sum_{l'=1}^da^{-1}_{ll'}(z_{l'}-\tilde{z}_{l'})+\frac{r(i)}{v}\sum_{l=1}^d\sum_{l'=1}^d 
\\&a^{-1}_{ll'}(z_{l}-\tilde{z}_{l})+\sum_{l=1}^d\sum_{l'=1}^d \frac{a_{ll}a^{-1}_{ll'}}{2v}(z_{l'}-\tilde{z}_{l'})+\frac{1}{2}\sum_{l_0=1}^d \sum_{l'_{0}=1}^d a_{l_0l'_0}\Bigg\{ \sum_{l=1}^d\sum_{l'=1}^d \frac{a^{-1}_{ll_0}a^{-1}_{l'l'_0}}{v^2}(z_l-\tilde{z}_l) \\
&(z_{l'}-\tilde{z}_{l'})\Bigg\}-\sum_{l_0=1}^d\sum_{l=1}^d \frac{a_{l_0l_0} a^{-1}_{ll_0}}{2v}(z_{l}-\tilde{z}_{l}),
\end{align*}
\begin{align*}
=&\frac{-1}{2v^2}\sum_{l=1}^d\sum_{l'=1}^d a^{-1}_{ll'}(z_l-\tilde{z}_l)(z_{l'}-\tilde{z}_{l'})+\sum_{l=1}^d\sum_{l'=1}^d \frac{a_{ll} a^{-1}_{ll'}}{2v}(z_{l'}-\tilde{z}_{l'})-\sum_{l_0=1}^d\sum_{l=1}^d \frac{a_{l_0l_0} a^{-1}_{l_0l}}{2v}\\&(z_{l}-\tilde{z}_{l})+\frac{1}{2v^2}\sum_{l'_0=1}^d \sum_{l=1}^d \left[\sum_{l_0=1}^d a_{l'_0 l_0}a^{-1}_{l_0l}\right]\sum_{l'=1}^d a^{-1}_{l'l'_0}(z_l-\tilde{z}_l) (z_{l'}-\tilde{z}_{l'}),
\end{align*}
\begin{align*}
=& \frac{-1}{2v^2}\sum_{l=1}^d\sum_{l'=1}^d a^{-1}_{ll'}(z_l-\tilde{z}_l)(z_{l'}-\tilde{z}_{l'})+\frac{1}{2v^2}\sum_{l'_0=1}^d \sum_{l=1}^d \sum_{l'=1}^d a^{-1}_{l'l'_0}(z_l-\tilde{z}_l) (z_{l'}-\tilde{z}_{l'})\delta(l'_0,l),\\
= & \frac{-1}{2v^2}\sum_{l=1}^d\sum_{l'=1}^d a^{-1}_{ll'}(z_l-\tilde{z}_l)(z_{l'}-\tilde{z}_{l'})+\frac{1}{2v^2}\sum_{l=1}^d \sum_{l'=1}^d a^{-1}_{l'l}(z_l-\tilde{z}_l) (z_{l'}-\tilde{z}_{l'})=0.
\end{align*}
\end{proof}
%%%%%% Uniqueness %%%%%%%%%%%
\begin{proof}[Proof of Lemma \ref{lemma:Stmartingale}]
From the closed form expression of strong solution of SDE, we can write for $t'> t$
\begin{equation*}
\begin{split}
E(\tilde{S}_l(t') | \tilde{\mathcal{F}}_t)&=E\left(\tilde{S}_l(t) e^{(\int_{t}^{t'}(r_u-\frac{1}{2}\hat{\sigma}_l(X_u)^2)du+\int_{t}^{t'} \hat{\sigma}_l(X_u) d\hat{W}^l_u)} | \tilde{\mathcal{F}}_t\right),\\
&=\tilde{S}_l(t) E\left(e^{(\int_{t}^{t'}(r_u-\frac{1}{2}\hat{\sigma}_l(X_u)^2)du+\int_{t}^{t'} \hat{\sigma}_l(X_u) d\hat{W}^l_u)} | \tilde{\mathcal{F}}_t\right),\\
\end{split}
\end{equation*}
where $r_u=r(X_u)$. The conditional distribution of the term inside expectation is log-normal given $\tilde{\mathcal{F}}_t \vee \mathcal{F}^X_{t'}$ with parameters $\left(\int_{t}^{t'} (r_u-\frac{\hat{\sigma}_l(X_u)^2}{2})du , \int_{t}^{t'}\hat{\sigma}_l(X_u)^2 du\right)$. Therefore, we can write
\begin{equation*}
\begin{split}
E(\tilde{S}_l(t') | \tilde{\mathcal{F}}_t)&=\tilde{S}_l(t) E[ e^{\left(\int_{t}^{t'} (r_u-\frac{1}{2} \hat{\sigma}_l(X_u)^2)du+\frac{1}{2}\int_{t}^{t'} \hat{\sigma}_l(X_u)^2 du\right)}\mid\tilde{\mathcal{F}}_{t} ]=\tilde{S}_l(t) E[e^{\int_{t}^{t'} r_u du}\mid X_t],\\
& > \tilde{S}_l(t),
\end{split}
\end{equation*}
using $r_u>0$ for all $u \geq 0$. Therefore, for each $l$, $\tilde{S}_l$ is a sub-martingale.
\end{proof}
\begin{proof}[Proof of Lemma \ref{lemma:Ntmartingale}]
Using the infinitesimal generator of the Markov chain $X$, we can write
\begin{align*}
\phi(t, \tilde{S}(t),X_t)-\phi(t, \tilde{S}(t),X_{t-})= \sum_{j\neq X_{t^-}}\lambda_{X_{t^-},j}\left(\phi(t,\tilde{S}(t),j)-\phi(t,\tilde{S}(t),X_{t^-})\right) dt + dM(t),\end{align*}
for some local martingale $M$. Now,
using It$\hat{o}$'s formula in (\ref{eq:poisson}), we get
\begin{align*}
&dN^\phi_t= -r(X_t)N^\phi_t dt +e^{-\int_{0}^{t}r(X_u)du}\:d \phi(t,\tilde{S}(t),X_t),\\
= & -r(X_t)N^\phi_t dt +e^{-\int_{0}^{t}r(X_u)du} \left[\frac{\partial \phi(t,\tilde{S}(t),X_t)}{\partial t}dt + \sum_{l=1}^d \frac{\partial \phi(t,\tilde{S}(t),X_t)}{\partial s_l}d\tilde{S}_l(t)\right.\\
&+\left.\frac{1}{2}\sum_{l=1}^d\sum_{l'=1}^d a_{ll'}\tilde{S}_l(t)\tilde{S}_{l'}(t)\frac{\partial^2 \phi(t,\tilde{S}(t),X_t)}{\partial s_l \partial s_{l'}}dt+\phi(t, \tilde{S}(t),X_t)-\phi(t, \tilde{S}(t),X_{t-})\right],
\end{align*}
\begin{align*}
&=-r(X_t)N^\phi_t dt +e^{-\int_{0}^{t}r(X_u)du} \Bigg[\Bigg(\frac{\partial \phi}{\partial t} + r(X_t)\sum_{l=1}^d \tilde{S}_l(t) \frac{\partial \phi}{\partial s_l}+\frac{1}{2}\sum_{l=1}^d\sum_{l'=1}^d a_{ll'}\tilde{S}_l(t)\tilde{S}_{l'}(t)\\
&\frac{\partial^2 \phi}{\partial s_l \partial s_{l'}}\Bigg)(t,\tilde{S}(t),X_t)dt+\sum_{j\neq X_{t^-}}\lambda_{X_{t^-},j}\left(\phi(t,\tilde{S}(t),j)-\phi(t,\tilde{S}(t),X_{t^-})\right)dt\\
&+\sum_{l=1}^d\frac{\partial \phi}{\partial s_l}(t,\tilde{S}(t),X_t) \tilde{S}_l(t) \hat{\sigma}_l(X_t)d \hat{W}^l_t+dM(t)\Bigg],
\end{align*}
\begin{align*}
=& e^{-\int_{0}^tr(X_u)du} \Bigg[\Bigg(-r(X_t)\phi(t,\tilde{S}(t),X_t) + \frac{\partial \phi}{\partial t}(t,\tilde{S}(t),X_t)+r(X_t)\sum_{l=1}^d \tilde{S}_l(t)\\
&\frac{\partial \phi}{\partial s_l}(t,\tilde{S}(t),X_t)+\frac{1}{2}\sum_{l=1}^d\sum_{l'=1}^d a_{ll'}\tilde{S}_l(t)\tilde{S}_{l'}(t)\frac{\partial^2 \phi}{\partial s_l \partial s_{l'}}(t,\tilde{S}(t),X_t)+\sum_{j\neq X_{t^-}}\lambda_{X_{t^-},j} dt\\
&\left(\phi(t,\tilde{S}(t),j)-\phi(t,\tilde{S}(t),X_{t^-})\right)\Bigg)+\frac{\partial \phi}{\partial s_l} (t,\tilde{S}(t),X_t)\tilde{S}_l(t) \hat{\sigma}_l(X_t)d \hat{W}^l_t + dM(t)\Bigg].
\end{align*}
\noindent It is clear that the coefficient of $dt$ is zero from (\ref{eq:main_ivp})-(\ref{eq:operator}). Thus $N^\phi$ is a local martingale. On the other hand since $\phi\in V$,  $|N^\phi_t| \leq \|\phi\|_V +  \|\phi\|_V \sum_l \tilde{S}_l(t)$ for each $t$. Therefore, $E\left(\sup_{s \leq t}|N^\phi_s|\right)< \infty$  using \eqref{DMI}. Thus, $N^\phi$ is a martingale, follows from \cite[Theorem $51$, Chapter $1$, pp. $38$]{Protter90}.
\end{proof}
%%%%%%%%%%%%%%%%%%%%%%%%%%%%%%%%%%%%%%%%%%
\section{Derivation of IE \eqref{eq:integral_equation}}\label{sec:appendixB}
\par Let $n(t)$ denote the number of transitions during $(0,t]$ and $T_n$ denote the $n^{th}$ transition time instant. We rewrite the right side of (\ref{eq:option_expectation}) by conditioning on the next transition time $T_{n(t) +1}$
\begin{equation}\label{eq:tower_phi}
\phi(t, \tilde{S}(t), X_t)=E\left[E\left(e^{-\int_{t}^T r(X_u)du}\mathcal{K}(\tilde{S}(T))|\tilde{S}(t),X_t,T_{n(t)+1}\right)|\tilde{S}(t), X_t\right].
\end{equation}
\begin{lemma}\label{lemma:expo_dis}
$T_{n(t)+1}-t$ is exponentially distributed random variable given ${\mathcal{F}}^{X}_t$.
\end{lemma}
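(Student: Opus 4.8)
The plan is to reduce the claim to the lack-of-memory property of the exponential holding times of the continuous-time Markov chain $X$. Recall that $X$, generated by $\Lambda=(\lambda_{ij})$, is a pure-jump process whose sojourn time in a state $i$ before the next transition is $\mathrm{Exp}(\lambda_i)$ with $\lambda_i=|\lambda_{ii}|$, and that, by the strong Markov property at the jump times, this sojourn time is conditionally independent of the pre-jump history given the current state.

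First I would fix $n\ge 0$ and argue on the $\mathcal{F}^X_t$-measurable event $\{n(t)=n\}=\{T_n\le t<T_{n+1}\}$, on which $T_n$ and $X_t=X_{T_n}$ are $\mathcal{F}^X_t$-measurable; write $W:=T_{n+1}-T_n$ and note that $T_{n(t)+1}-t=W-(t-T_n)$ there. Since $X$ is constant on $[T_n,t]$, the $\sigma$-algebra $\mathcal{F}^X_t$ restricted to $\{n(t)=n\}$ is generated by $\mathcal{F}^X_{T_n}$ together with the event $\{W>t-T_n\}$. By the strong Markov property of $X$ at the stopping time $T_n$, conditionally on $\mathcal{F}^X_{T_n}$ and on $\{X_{T_n}=i\}$, the variable $W$ is $\mathrm{Exp}(\lambda_i)$ and independent of $\mathcal{F}^X_{T_n}$; since $t-T_n$ is $\mathcal{F}^X_{T_n}$-measurable, the lack-of-memory property gives, for every $s>0$,
\[
P\bigl(T_{n(t)+1}-t>s\mid\mathcal{F}^X_t\bigr)=P\bigl(W-(t-T_n)>s\mid \mathcal{F}^X_{T_n},\,W>t-T_n\bigr)=\frac{e^{-\lambda_i((t-T_n)+s)}}{e^{-\lambda_i(t-T_n)}}=e^{-\lambda_i s}
\]
on $\{n(t)=n\}\cap\{X_t=i\}$. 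Summing over $n\ge 0$ and $i\in\mathcal{X}$ yields $P(T_{n(t)+1}-t>s\mid\mathcal{F}^X_t)=e^{-\lambda_{X_t}s}$, which is precisely the assertion that $T_{n(t)+1}-t$ is conditionally $\mathrm{Exp}(\lambda_{X_t})$ given $\mathcal{F}^X_t$.

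The one point that needs care is the identification, on $\{n(t)=n\}$, of the conditional law of $W$ given $\mathcal{F}^X_t$ with that of $W$ given $\mathcal{F}^X_{T_n}$ truncated to $\{W>t-T_n\}$: this relies on the strong Markov property at the jump time $T_n$ together with the observation that $X$ carries no information on $[T_n,t]$ beyond the fact that the sojourn has not yet ended. A shorter alternative I could use instead is to invoke the Markov property of $X$ at the deterministic time $t$: conditionally on $\mathcal{F}^X_t$ the shifted process $\{X_{t+u}\}_{u\ge 0}$ is a continuous-time Markov chain with generator $\Lambda$ started from $X_t$, whose first transition time coincides with $T_{n(t)+1}-t$ and is therefore $\mathrm{Exp}(\lambda_{X_t})$; this packages the memorylessness into the standard theory and may simply be cited.
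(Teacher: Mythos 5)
Your proof is correct, and it rests on the same core fact as the paper's: the conditional exponentiality (with rate $\lambda_i$) of the holding time $T_{n(t)+1}-T_{n(t)}$ given the current state, combined with lack of memory. The packaging differs in one meaningful way. The paper computes the conditional CDF given only $\sigma(X_t)$: it writes $P(T_{n(t)+1}\le v\mid X_t=i)$ as a Bayes ratio, inserts an extra conditioning on $T_{n(t)}$ inside numerator and denominator, and observes that the deterministic factor $\bigl(1-e^{-\lambda_i(v-t)}\bigr)$ pulls out of the expectation over $T_{n(t)}$ so the ratio collapses to $1-e^{-\lambda_i(v-t)}$. You instead perform the memorylessness cancellation pointwise in $T_n$, before any averaging, by decomposing over $\{n(t)=n\}$ and invoking the strong Markov property at the jump time $T_n$; this lets you condition on the full filtration $\mathcal{F}^X_t$ rather than just on $X_t$, which is actually closer to the lemma as literally stated (the paper's computation, strictly speaking, only identifies the law given $X_t=i$, which is all that is used downstream in Proposition \ref{proposition:uniqueness}). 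The one step you should make explicit if you write this up is the identification of $\mathcal{F}^X_t$ restricted to $\{n(t)=n\}$ with $\sigma\bigl(\mathcal{F}^X_{T_n},\{W>t-T_n\}\bigr)$, which you correctly flag as the delicate point; your suggested shortcut via the Markov property at the deterministic time $t$ is also a perfectly acceptable, and cleaner, way to close that gap.
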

\begin{proof}
The conditional cumulative distribution function (CDF) of $T_{n(t)+1}$ given $X_t=i$ is
\begin{equation*}
F_i(v):=P(T_{n(t)+1} \leq v | X_t=i)= P(T_{n(t)+1}-T_{n(t)} \leq v-T_{n(t)}|X_{t}=i),
\end{equation*}
for $v >t$ and zero for $v\le t$ as $T_{n(t)+1}>t$ almost surely. For the same reason, $T_{n(t)+1}-T_{n(t)} \geq t-T_{n(t)}$ almost surely. Hence, we can write
\begin{align*}
F_i(v)&=P(T_{n(t)+1}-T_{n(t)} \leq v-T_{n(t)}|X_t=i),\\
&= P(T_{n(t)+1}-T_{n(t)}\leq v-T_{n(t)}| X_t=i, T_{n(t)+1}-T_{n(t)}\geq t-T_{n(t)}).
\end{align*}
Using Bay's theorem the above is equal to 
\begin{align*}
&\frac{P(t-T_{n(t)}\leq T_{n(t)+1}-T_{n(t)}\leq v-T_{n(t)}| X_t=i)}{P(T_{n(t)+1}-T_{n(t)}\geq t-T_{n(t)}|X_t=i)}.
\end{align*}
By additional conditioning by $T_{n(t)}$, the above is rewritten as 
\begin{align*}
\frac{E[P(t-T_{n(t)}\leq T_{n(t)+1}-T_{n(t)}\leq v-T_{n(t)}| X_t,T_{n(t)})|X_t=i]}{E[P(T_{n(t)+1}-T_{n(t)}\geq t-T_{n(t)}|X_t,T_{n(t)})|X_t=i]}.
\end{align*}
Using the exponential distribution of inter transition time of the Markov chain $X$, the above is
\begin{align*}
\frac{E[(1-e^{-\lambda_{X_t}(v-T_{n(t)})})-(1-e^{-\lambda_{X_t} (t-T_{n(t)})})|X_t=i]}{E[e^{-\lambda_{X_t} (t-T_{n(t)})}|X_t=i]}
&=\frac{E[e^{-\lambda_{X_t}(t-T_{n(t)})} (1-e^{-\lambda_{X_t}(v-t)})|X_t=i]}{E[e^{-\lambda_{X_t} (t-T_{n(t)})}|X_t=i]},\\
&=1-e^{-\lambda_i(v-t)}.
\end{align*}
Therefore, by substituting $v=v'+t$, we get the probability density function of $T_{n(t)+1}-t$ as $\frac{d}{dv'}F_i(v'+t)=\lambda_i e^{-\lambda_i v'}$ on $v'>0$ and result follows.
\end{proof}
\begin{proposition}\label{proposition:uniqueness}
Let $\phi$ be a classical solution of (\ref{eq:main_ivp})-(\ref{eq:operator}), then $\phi$ also solves integral equation (\ref{eq:integral_equation}).
\end{proposition}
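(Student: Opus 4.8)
The plan is to start from the stochastic representation \eqref{eq:option_expectation} of $\phi$, rewritten in the conditioned form \eqref{eq:tower_phi}, and to evaluate the inner conditional expectation by splitting the sample space according to whether the first transition time $T_{n(t)+1}$ of the chain $X$ after time $t$ exceeds the maturity $T$ or not.

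First I would treat the event $\{T_{n(t)+1}>T\}$, on which no regime change occurs during $(t,T]$: the chain stays at its current state $i$, so $\tilde S$ solves \eqref{eq:sde2} with the constant coefficients $r(i)$, $\hat\sigma_l(i)$ throughout $[t,T]$. A direct covariation computation shows that the covariance of $\ln\tilde S_l(T)$ and $\ln\tilde S_{l'}(T)$ equals $a_{ll'}(i)(T-t)$, so, using that the event $\{T_{n(t)+1}>T\}$ is generated by $X$ alone and is thus independent of $W$, the conditional law of $\tilde S(T)$ given $\tilde S(t)=s$, $X_t=i$ and this event has density $x\mapsto\alpha(x,s,i,T-t)$. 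Since $e^{-\int_t^T r(X_u)du}=e^{-r(i)(T-t)}$ on this event, and $P(T_{n(t)+1}>T\mid X_t=i)=e^{-\lambda_i(T-t)}$ by Lemma \ref{lemma:expo_dis}, this part contributes $e^{-\lambda_i(T-t)}\int_{(0,\infty)^d}e^{-r(i)(T-t)}\mathcal{K}(x)\alpha(x,s,i,T-t)\,dx=e^{-\lambda_i(T-t)}\eta_i(t,s)$, by \eqref{eq:eta}. This is precisely the first term of \eqref{eq:integral_equation}.

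Next I would treat the event $\{T_{n(t)+1}=t+v\le T\}$ and integrate over $v\in(0,T-t)$. Conditionally on $X_t=i$ and $T_{n(t)+1}=t+v$, the chain is at $i$ on $[t,t+v)$, so $\tilde S(t+v)$ has conditional density $x\mapsto\alpha(x,s,i,v)$ given $\tilde S(t)=s$, and $X_{t+v}$ equals $j\ne i$ with probability $\lambda_{ij}/\lambda_i$, independently of the transition time (a standard property of continuous-time Markov chains). Splitting $e^{-\int_t^T r(X_u)du}=e^{-r(i)v}e^{-\int_{t+v}^T r(X_u)du}$ and applying the strong Markov property of $(\tilde S,X)$ at $T_{n(t)+1}$ together with the representation \eqref{eq:option_expectation} evaluated at $(\tilde S(t+v),X_{t+v})=(x,j)$, the inner conditional expectation on this event equals $\sum_{j\ne i}(\lambda_{ij}/\lambda_i)\int_{(0,\infty)^d}e^{-r(i)v}\phi(t+v,x,j)\alpha(x,s,i,v)\,dx$. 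Multiplying by the density $\lambda_i e^{-\lambda_i v}$ of $T_{n(t)+1}-t$ from Lemma \ref{lemma:expo_dis} and integrating in $v$ over $(0,T-t)$ produces $\int_0^{T-t}e^{-(\lambda_i+r(i))v}\sum_{j\ne i}\lambda_{ij}\int_{(0,\infty)^d}\phi(t+v,x,j)\alpha(x,s,i,v)\,dx\,dv$, the second term of \eqref{eq:integral_equation}. Adding the two contributions completes the proof.

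The step I expect to be the main obstacle is the careful bookkeeping of the conditioning: (i) identifying the conditional law of the diffusion on a no-transition interval with the density $\alpha(\cdot,s,i,\cdot)$, which relies on the independence of $W$ and $X$ and on the coefficients of \eqref{eq:sde2} depending only on $X$; and (ii) the clean application of the strong Markov property of the joint process $(\tilde S,X)$ at the jump time $T_{n(t)+1}$, which is what lets the post-jump conditional expectation collapse to $\phi(t+v,\cdot,\cdot)$ via \eqref{eq:option_expectation}. The degenerate case $\lambda_i=0$ (absorbing state) is automatically covered, since then $T_{n(t)+1}=\infty$ almost surely given $X_t=i$, the $v$-integral is empty, and \eqref{eq:integral_equation} reduces to $\phi(t,s,i)=\eta_i(t,s)$.
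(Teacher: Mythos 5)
Your proposal is correct and follows essentially the same route as the paper's Appendix~\ref{sec:appendixB} proof: conditioning on the first post-$t$ transition time via \eqref{eq:tower_phi} and Lemma~\ref{lemma:expo_dis}, identifying the no-transition contribution with $e^{-\lambda_i(T-t)}\eta_i(t,s)$ through the lognormal density $\alpha$, and collapsing the post-jump expectation to $\phi(t+v,x,j)$ by the Markov property at the jump time. The only (welcome) addition is your explicit remark on the absorbing case $\lambda_i=0$, which the paper leaves implicit.
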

\begin{proof}
First we note that $T_{n(t)+1}-t$ is conditionally independent to $\tilde{S}(t)$ given  $X_t$. 
Let us consider (\ref{eq:tower_phi}) and using formula for expectation and Lemma \ref{lemma:expo_dis}, we get
\begin{align}\label{eq:option_expec1}
\phi(t,\tilde{S}(t),X_t)=&\int_{0}^\infty E\left[e^{-\int_t^T r(X_u) du}\mathcal{K}(\tilde{S}(T))|\tilde{S}(t),X_t,T_{n(t)+1}-t=v'\right]\lambda_{X_t} e^{-\lambda_{X_t}v'}dv',\nonumber\\
=&\int_{0}^{T-t} E\left[e^{-\int_t^T r(X_u) du}\mathcal{K}(\tilde{S}(T))|\tilde{S}(t),X_t,T_{n(t)+1}-t=v'\right]\lambda_{X_t} e^{-\lambda_{X_t}v'}dv'\nonumber\\
& + \int_{T-t}^\infty E\left[e^{-\int_t^T r(X_u) du}\mathcal{K}(\tilde{S}(T))|\tilde{S}(t),X_t,T_{n(t)+1}-t=v'\right]\lambda_{X_t} e^{-\lambda_{X_t}v'}dv'.
\end{align}
Note that $ T_{n(t)+1}-t= v'\in [T-t, \infty)$ implies $\{T_{n(t)+1}\geq T\}$. Again, under $T_{n(t)+1}\geq T$, no transition takes place during $[t,T]$. Thus $X_T$ is identical to $X_t$. Hence, the conditional distribution of each component of $\tilde{S}(T)$ given the $\sigma$ algebra generated by  $\{T_{n(t)+1}>T\}$ and  $\tilde{\mathcal{F}}_t$ is log normal. 
That is given this $\sigma$ algebra, with $X(t)=i$ the conditional joint distribution of $\{\tilde{S}_{l}(T)/\tilde{S}_{l}(t): l= 1,2,\ldots,d\}$ is identical to the joint distribution of $\{Y_{l}(T)/Y_{l}(t): l= 1,2,\ldots,d \}$, which corresponds to the B-S-M model having constant parameters $r(i)$ and $\sigma(i)$. Thus the conditional expectation $E\left[e^{-r(i)(T-t)}\mathcal{K}(\tilde{S}(T))|\tilde{S}(t)=s,X_u=i \:\forall\: u \in [t,T]\right]$ is identical to the R.H.S. of (\ref{eq:eta}). Therefore, (\ref{eq:option_expec1}) can be written as
\begin{align}\label{eq:prepo1_phi}
&\phi(t,\tilde{S}(t),X_t)=\int_{0}^{T-t} E\left[e^{-\int_t^T r(X_u) du}\mathcal{K}(\tilde{S}(T))|\tilde{S}(t),X_t,T_{n(t)+1}-t=v'\right]\lambda_{X_t} e^{-\lambda_{X_t}v'}dv'\nonumber\\
&+\eta_{X_t}(t,s)\int_{T-t}^{\infty}\lambda_{X_t} e^{-\lambda_{X_t} v'}dv',\\
&=\int_{0}^{T-t} E\left[e^{-\int_t^T r(X_u) du}\mathcal{K}(\tilde{S}(T))|\tilde{S}(t),X_t,T_{n(t)+1}-t=v'\right]\lambda_{X_t} e^{-\lambda_{X_t}v'}dv'+\eta_{X_t}(t,s)e^{-\lambda_{X_t} (T-t)}\nonumber.
\end{align}
Further, given $T_{n(t)+1}= t+ v'<T$, it is clear that during $[T_{n(t)},t+v')$, $X$ has no transition. Moreover, at time $t+v'$, $X$ transits to another state $j$ with conditional probability $P_{X_t,j}$ given $X_t$. Given $\tilde{S}_l(t)=s_l, X_t=i$, and $T_{n(t)+1}=t+v'$, we know that
\[
\tilde{S}_l(t+v')=s_l\exp\left[(r(i)-\frac{1}{2}a_{ll'}(i))v'+\sum_{j=1}^d\sigma_{l,j}(i)(W_{t+v'}-W_t)\right],
\]
is log-normal for each $l=1,2, \ldots,d$. The joint conditional PDF of $\tilde{S}(t+v')$ is given by $\alpha$ as in (\ref{eq:alpha}). In (\ref{eq:prepo1_phi}) by fixing $\tilde{S}(t)=s$, $X_t=i$, and using
additional conditioning w.r.t. $\tilde{\mathcal{F}}_{t+v'}$, we get
\begin{align}
\phi(t,s,i)=&\int_{0}^{T-t} \! \!E\left[E\Big(e^{-\left(r(i)v'+\int_{t+v'}^{T}r(X_u) du\right)}\mathcal{K}(\tilde{S}(T))| \tilde{S}(t+v'), X_{t+v'},\tilde{S}(t),X_t, \right.\nonumber\\
&  T_{n(t)+1}-t=v'\Big) \left. \lvert \: \tilde{S}(t)=s,X_t=i, T_{n(t)+1}-t=v'\right]\lambda_i e^{-\lambda_iv'}dv'+e^{-\lambda_i(T-t)}\eta_i(t,s),\nonumber\\
=&\int_{0}^{T-t}\sum_{j=1}^k p_{ij}\Big[\int_{(0,\infty)^d}E\left(e^{-\left(r(i)v'+\int_{t+v'}^{T}r(X_u) du\right)}\mathcal{K}(\tilde{S}(T))|\tilde{S}(t+v')=x, X_{t+v'}=j\right)\nonumber\\
&\alpha(x,s,i,v') dx \Big] \lambda_i e^{-\lambda_iv'}dv'+e^{-\lambda_i(T-t)}\eta_i(t,s),\nonumber\\
=&\int_{0}^{T-t}e^{-(\lambda_i+r(i))v'}\sum_{j=1}^k \lambda_{ij}\int_{(0,\infty)^d}E\Big(e^{-\int_{t+v'}^{T}r(X_u) du}\mathcal{K}(\tilde{S}(T))|\tilde{S}(t+v')=x, \nonumber\\
& X_{t+v'}=j\Big) \alpha(x,s,i,v') dx dv'+e^{-\lambda_i(T-t)}\eta_i(t,s).\nonumber
\end{align}
Using (\ref{eq:option_expectation}), we get
\begin{align*}
\phi(t,s,i)=e^{-\lambda_i(T-t)}\eta_i(t,s)+\int_{0}^{T-t}e^{-(\lambda_i+r_i)v'}\sum_{j=1}^k \lambda_{ij}\int_{(0,\infty)^d}\phi(t+v',x,j)
\alpha(x,s,i,v') dx dv'.
\end{align*}
Thus $\phi$ also solves the integral equation (\ref{eq:integral_equation}). This completes the proof.
\end{proof} 
\noindent We recall that uniqueness of (\ref{eq:main_ivp})-(\ref{eq:operator}) has been established in Theorem \ref{theorem:uniqueness}. Here, we present an alternative argument for the same. Let us assume that $\phi_1$ and $\phi_2$ are two classical solutions of (\ref{eq:main_ivp})-(\ref{eq:operator}) in $V$. Then using Proposition \ref{proposition:uniqueness}, we know that both also solve IE (\ref{eq:integral_equation}). But from Theorem \ref{theorem:IE_solution}, there is only one such solution in $V$. Hence $\phi_1=\phi_2$.
%%%%%%%%%%%%%%%%%%%%%%%%%%%%%%%%%%%%%%%%%%%%%%%%%%%%%

\section*{Acknowledgments}
The corresponding author acknowledge the support from IISER Pune and IIIT Naya Raipur for providing the support to conduct some part of this research work. 

\bibliographystyle{plain}
\bibliography{main_MMGBM}

\end{document}